\newtheorem{thm}{Theorem}
\newtheorem{lem}{Lemma}
\newtheorem{cor}{Corollary}
\theoremstyle{definition}
\renewcommand{\baselinestretch}{\vv}
\title{Dynamic Linear Discriminant Analysis in High Dimensional Space}
\author{Binyan Jiang, \and Ziqi Chen, \and Chenlei Leng
  \footnote{Jiang is with the Hong Kong Polytechnic University. Chen is with Central South University. Leng is with
    the University of Warwick. 
    }
}
\begin{document}
 \renewcommand{\baselinestretch}{1.2}

\maketitle

\begin{abstract}
High-dimensional data that evolve dynamically feature predominantly in the modern data era.  As a partial response to this,
recent years have seen increasing emphasis to
address the dimensionality challenge. However, the non-static nature of these datasets is largely
ignored. This paper addresses both challenges by proposing a novel yet simple dynamic linear
programming discriminant (DLPD) rule for binary classification. Different from the usual static linear discriminant analysis, the new method is able to capture the changing distributions of the underlying populations by modeling their means and covariances as smooth functions of covariates of interest. Under an approximate sparse condition, we show that the conditional misclassification rate of the DLPD rule converges to the Bayes risk in probability {\it uniformly} over the range of the variables used for modeling the dynamics, when the dimensionality is allowed to grow exponentially with the sample size. The minimax lower bound of the estimation of the Bayes risk is also established, implying that the misclassification rate of our proposed rule is minimax-rate optimal. The promising performance of the DLPD rule is illustrated via extensive simulation studies and the analysis of a breast cancer dataset.
\end{abstract}

\noindent
{\bf Key Words:} 
{\it Bayes rule; Discriminant analysis; Dynamic linear programming; High-dimensional data; Kernel smoothing; Sparsity.}

 \section{Introduction}
 The rapid development of modern measurement technologies has enabled us to gather data that are increasingly larger. As the rule rather than the exception, these datasets have been gathered at different time,  under different conditions, subject to a variety of perturbations, and so on. As a result,
 the complexity of many modern data is predominantly characterized by high dimensionality and the data dynamics. The former is featured by a large number of variables in comparison to the sample size, and the manifestation of the latter can be seen in the distribution of the data which is non-static and dependent on covariates such as time. Any approach ignoring either of the two aspects may give unsatisfactory performance and even incorrect conclusions.
 
 The main aim of this paper is to address these two challenges simultaneously, for the first time,
 by developing a very simple yet useful dynamic linear programming discriminant (DLPD) rule for classification. Specializing to
 binary classification, we allow the means and the covariance matrices of the
 populations to vary with covariates of interest, which are estimated via local
 smoothing \citep{Fan1996}. Under an approximate sparsity assumption on a linear index that is central to classification, we propose to
 estimate the index vector via a technique akin to the Dantzig
 selector \citep{candes,Cai} in a dynamic setting. We show emphatically that
 the conditional misclassification rate of the DLPD rule converges to the Bayes
 risk in probability {\it uniformly} over a range of the variables used for
 modeling dynamics, where the dimensionality is allowed to be exponentially high relative to the sample size.
 The uniformity result is of particular importance as it
 permits simultaneous statements over the whole range of the covariate. In addition, we derived minimax lower bounds for the Bayes risk, which indicates that the misclassification rate of our DLPD rule is minimax-rate optimal. To our
 best knowledge, this is the first attempt in developing a high-dimensional
 discriminant method that exhibits local features of the data with sound theory. We remark that using existing approaches such as the one in \cite{Cai} coupled with local smoothing, it is possible to establish a {\it pointwise} result for the misclassification rate. However, a pointwise convergence result will not be sufficient in a dynamic setting, as the main interest is often to assess the estimated classification rule across the whole of the covariates, not just at a single point of the covariates.

 Before we proceed further, let's quickly look at a dataset that motivated this study. In traditional disease diagnosis studies, the same classification rule for all the patients was often applied. However, it has become increasingly more desirable to develop personalized rules that takes into account individual characteristics \citep{pm2015}. Intuitively, these patient-specific factors can be treated as dynamic factors in deriving decision rules. For example,
 in the breast cancer data we studied in Section 4.3, both (low dimensional) clinical risk factors (tumor size, age, histological grade etc.) and (high dimensional) expression levels for 24,481 gene probes were collected for 97 lymph node-negative breast cancer patients. Among them, 46 patients developed distant metastases within 5 years while the rest 51  remained metastases free for at least 5 years.
 To appreciate the need to incorporate dynamic information into the analysis, we look at the 100 genes with the largest absolute $t$-statistic values between the two groups choosing the tumor size as the dynamic variable. We fit the gene expression levels as a function of the tumor size using
 a local regression model \citep{Cleveland}. The fitted plots for some randomly selected genes are presented in Figure \ref{figgene}, from which we can see that the gene expression levels of the patients in the two classes exhibit different levels as the tumor size changes. Similarly, the covariance matrix of these 100 genes also is found to behave dynamically
 in response to the changes of the tumor size. To see this, we separate the
 97 observations into two groups depending on whether the tumor size
 is greater than the median of the tumor sizes 2.485. A $p$-value $ <0.001$ \citep{Li2012} indicates that we should reject
 the null hypothesis that the population covariance matrices of the two groups
 are equal.
 The method developed in this paper aims to capture this dynamic
 information in a high-dimensional setting for classification.
 
 \begin{figure}[htbp]
 	\centering
 	\includegraphics[width=0.8\textwidth]{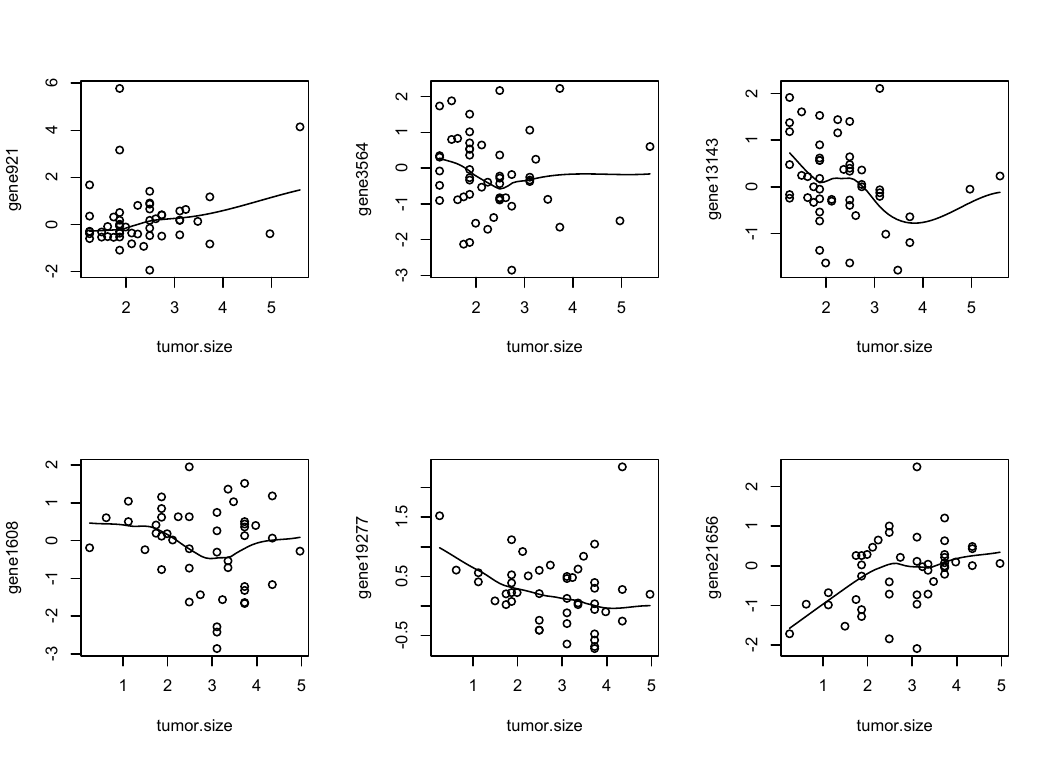}
 	\caption{Gene expression level versus tumor size. Upper panel: selected genes from $X$ class; Lower panel: selected genes from $Y$ class. The curves are LOWESS fits.}\label{figgene}
 \end{figure}

 \subsection{The setup}
 We now introduce formally the problem.
 Let $X=(x_1,\ldots,$ $x_p)^T$, $Y=(y_1,\ldots,y_p)^T$ be $p$-dimensional
 random vectors and ${\bf U}$ be a $d$-dimensional random covariate, where for simplicity we assume that $d$ is a fixed integer. In this paper we deal with the situation where $p$ is large.
 Given ${\bf U}$ we
 assume that $X\sim N(\mu_X({\bf U}), \Sigma({\bf U}))$ where $\mu_X({\bf U}) =
 (\mu_{X}^1({\bf U}),\ldots,\mu_{X}^p({\bf U}))^T$ and $\Sigma({\bf {\bf U}})=(\sigma_{ij}({\bf U}))_{1\leq
 	i,j \leq p}$. Similarly, the conditional distribution of $Y$ given ${\bf U}$ is
 given as $Y\sim N(\mu_Y({\bf U}), \Sigma({\bf U}))$ where $\mu_Y({\bf U}) =
 (\mu_{Y}^1({\bf U}),\ldots,\mu_{Y}^p({\bf U}))^T$. In other words, different from
 traditional linear discriminant analysis, we assume that the first and second moments of $X$ and $Y$ change over a $d$-dimensional covariate ${\bf U}$.
 Here ${\bf U}$ could be dependent on the features $X$ and $Y$.
 When ${\bf U}$ is a vector of discrete variables, the above mentioned model is named the location-scale model and was used for discriminant analysis with mixed data under finite dimension assumptions; see, for example, \cite{Krz1993} and the references therein.
 
 In discriminant analysis, it is well known that the Bayes procedure is
 admissible; see for example \cite{Anderson}. Let $(Z, {\bf U}_Z)$ be a generic random sample which can be
 from either the population $(X; {\bf U})$ or the population $(Y; {\bf U})$. In this paper
 we assume a priori that it is equally likely that  $(Z, {\bf U}_Z)$ comes from
 either population $(X; {\bf U})$  or population $(Y; {\bf U})$. Following simple algebra, it can be easily shown
 that the Bayes procedure is given as the following:
 
 \begin{itemize}
 	\item[(i)] Classify $(Z, {\bf U}_Z)$ into population $(X, {\bf U})$ if
 	\[\{Z-[\mu_X({\bf U}_Z)+\mu_Y({\bf U}_Z)]/2\}^T\Sigma^{-1}({\bf U}_Z)[\mu_X({\bf U}_Z)-\mu_Y({\bf U}_Z)]\geq 0;\]
 	\item[(ii)] Classify $(Z, {\bf U}_Z)$ into population $(Y, {\bf U})$ otherwise.
 \end{itemize}
 
 Given ${\bf U}_z={\bf u}$, by standard calculation, the conditional misclassification rate of this rule is
 \begin{eqnarray}\label{R}
 	R({\bf u})=\Phi(-\Delta_p({\bf u})/2),
 \end{eqnarray}
 where $\Delta_p({\bf u})=\sqrt{[\mu_X({\bf u})-\mu_Y({\bf u})]^T\Sigma^{-1}({\bf u})[\mu_X({\bf u})-\mu_Y({\bf u})]},$
 and $\Phi(\cdot)$ is the cumulative distribution function of a standard normal random variable. The expected misclassification rate is defined as
 \begin{eqnarray}\label{ER}
 	R={\rm E}_{\bf U}R({\bf U}),
 \end{eqnarray}
 where ${\rm E}_{\bf U}$ means taking expectation with respect to ${\bf U}$.
 Practically $\mu_X(\cdot)$, $\mu_Y(\cdot)$ and $\Sigma(\cdot)$ are unknown but there are a sequence of independent random observations $(X_i,{\bf U}_i)$, $i=1,\ldots, n_1$ from the population $(X; {\bf U})$ and a sequence of independent random observations $(Y_j, {\bf V}_j)$, $j=1\ldots, n_2$ from the population $(Y, {\bf U})$.  The central problem then becomes proposing methods based on the sample that give misclassification rates converging to that of the Bayes rule under appropriate assumptions.
 
 \subsection{Existing works}
 There has been increasing emphasis in recent years to address the
 high-dimensionality challenge posed by modern data where $p$ is large. However, the dynamic
 nature of the data collection process is often ignored in that $\mu({\bf U})$ and
 $\Sigma({\bf U})$ are assumed to be independent of ${\bf U}$.
 In this static case, the Bayes procedure given above reduces to the well-known Fisher's linear discriminant analysis (LDA).
 In static high dimensional discriminant analysis, \cite{Bickel2004} first highlighted that Fisher's LDA is equivalent to random guessing. Fortunately, in many
 problems, various quantities in the LDA can be assumed sparse; See, for example, \cite{Witten:Tibs:2011,Shao,Cai,Fan2012,Maiqing2012}, and \cite{Mai:Zou:2013} for a summary of selected sparse LDA methods. Further studies along this line can be found in \cite{Fan:etal:2013} and \cite{Hao:etal:2015}. More recently,  quadratic discriminant analysis has attracted increasing attention where the population covariance matrices are assumed static but different. This has motivated the study of more flexible models exploiting variable interactions for classification, analogous to two-way interaction in linear regression; see for example \cite{Fan:etal:2015}, \cite{Fan:etal:2015a}, and \cite{Jiang:etal:2015}. However, none of these works addresses the dynamic nature of $\mu(\cdot)$ and $\Sigma(\cdot)$.

 In our setup where dynamics exists, in addition to the high dimensionality, we need to obtain dynamic estimators for
 $\mu_X({\bf u})-\mu_Y({\bf u})$ and $\Sigma^{-1}({\bf u})$, or
 \[\beta({\bf u}):=\Sigma^{-1}({\bf u})[\mu_X({\bf u})-\mu_Y({\bf u})]\] as functions of ${\bf u}$. Under a similar setup where ${\bf U}$ is categorical and supported on a set of finite elements,  \cite{Guo} proposed a sparse estimator for
 $\Sigma^{-1}({\bf u})$. The emphasis of this work is for continuous ${\bf U}$ that is compactly supported. \cite{Ziqi} proposed
 nonparametric estimators of sparse $\Sigma({\bf u})$ using thresholding
 techniques for univariate ${\bf U}$ where $d=1$. The focus of this paper on
 high-dimensional classification is completely different. Importantly, we do
 not require the sparsity assumption on  $\Sigma({\bf u})$ and our theory applies for any fixed-dimensional ${\bf U}$. Our paper is also different from \cite{Cai},
 \cite{Fan2012} and \cite{Maiqing2012} in that $\beta(\bf u)$ is allowed not only to be a smooth function of ${\bf U}$, but also to be approximately sparse (see Theorem \ref{theorem1}). Our efforts greatly advance the now-classical approach of local polynomial \citep{Fan1996} to the modern era of high-dimensional data analysis.

 If we denote  $\hat{\mu}_X({\bf U}_Z), \hat{\mu}_Y({\bf U}_Z)$ and $\hat{\beta}({\bf U}_Z)$ as
 the estimators of $\mu_X({\bf U}_Z)$, $\mu_Y({\bf U}_Z)$ and $\beta({\bf U}_Z)$ defined as in
 Section 2 respectively, our Dynamic Linear Programming Discriminant (DLPD)
 rule is given as the following:
 
 \begin{itemize}
 	\item[(i)] Classify $(Z, {\bf U}_Z)$ into population $(X, {\bf U})$ if:
 	\[\{Z-[\hat{\mu}_X({\bf U}_Z)+\hat{\mu}_Y({\bf U}_Z)]/2\}^T\hat{\beta}({\bf U}_Z)\geq 0;\]
 	\item[(ii)] Classify $(Z, {\bf U}_Z)$ into population $(Y, {\bf U})$ otherwise.
 \end{itemize}
 
 The rest of this paper is organized as follows. In Section 2, we propose
 estimators for the components in the Bayes rule and propose the DLPD rule. Section 3 provides theoretical results of our DLPD rule. In particular, we show that under appropriate conditions,
 the risk function of the DLPD rule converges to the Bayes risk function
 uniformly in ${\bf u}$. In addition, we derived minimax lower bounds for the estimation of $\Delta({\bf u})$ and the Bayes risk. In section 4, simulation study is conducted to assess the finite sample performance of the proposal method. The DLPD rule is then applied to solve interesting
 discriminant problems using 
 a breast cancer dataset.  Concluding remarks are made in Section 5. All the theoretical proofs are given in the Appendix.

 \section{A dynamic linear programming discriminant rule}
 We begin by introducing some notations. For any matrix $M$, we use
 $M^T$, $|M|$ and ${\rm tr}(M)$ to denote its transpose, determinant and trace. 
 Let $v=(v_1,\ldots,v_p)^T\in
 \mathcal{R}^p$ be a $p$-dimensional vector. Define
 $|v|_0=\sum_{i=1}^pI_{\{v_i\neq 0\}}$ as the $\ell_0$ norm and $|v|_\infty=\max_{1\leq
 	i\leq p}|v_i|$ as the $\ell_\infty$ norm. For any $1\leq q<\infty$, the $l_q$ norm of $v$ is
 defined as $|v|_q=(\sum_{i=1}^p|v_i|^q)^{1/q}$. We denote the $p$-dimensional
 vector of ones as $1_p$ and the $p$-dimensional vector
 of zeros as $0_p$.
 
 Denote ${\bf u}=(u_1,\ldots, u_d)^T$ and let ${\bf K}({\bf u})$ be a kernel function such that
 \begin{eqnarray*}
 	{\bf K}({\bf u})=\Pi_{i=1}^dK(u_1)\times\cdots\times K(u_d),
 \end{eqnarray*}
 where $K(\cdot)$ is an univariate kernel function,  for example, the Epanechnikov kernel used in kernel
 smoothing \citep{Fan1996}. Recent literature on multivariate kernel estimation can be found in \cite{Gu2015} and the references therein. Let $H={\rm diag}\{h_1,\ldots, h_d\}$ be a $d\times d$ diagonal bandwidth matrix and define:
 \begin{eqnarray*}
 	{\bf K}_H({\bf u})=|H|^{-1}{\bf K}(H^{-1}{\bf u})=\Pi_{i=1}^{d}\frac{1}{h_i} K\left(\frac{u_i}{h_i}\right).
 \end{eqnarray*}
 
 Recall that we assume that there are a sequence of independent random observations $(X_i,{\bf U}_i)$, $i=1,\ldots, n_1$, from the population $(X; {\bf U})$ and a sequence of independent random observations $(Y_j, {\bf V}_j)$, $j=1\ldots, n_2$, from the population $(Y, {\bf U})$. For simplicity, throughout this paper we assume that  $n_1\asymp n_2$ and denote $n=n_1+n_2$.
 
 One of the most popular nonparametric estimators for estimating a conditional expectation is the Nadaraya-Watson estimator, which is a locally weighted average, using a kernel as a weighting function.  Denote $X_i=(X_{i1},\ldots,X_{ip})^T, i=1,\ldots,n_1$. Let $H_x={\rm diag}\{h_{x1},\ldots, h_{xd}\}$ be a given bandwidth matrix. We estimate $\mu_X({\bf u})$ using the Nadaraya-Watson estimator \citep{Nadaraya} $\hat{\mu}_X({\bf u})=(\hat{\mu}^1_X({\bf u}),\ldots,\hat{\mu}^p_X({\bf u}))^T$, where
 \begin{eqnarray}\label{muhatx}
 	\hat{\mu}^{i}_X({\bf u})=\frac{\sum_{j=1}^{n_1}{\bf K}_{H_x}({\bf U}_j-{\bf u})X_{ji}}{\sum_{j=1}^{n_1}{\bf K}_{H_x}({\bf U}_j-{\bf u})}, ~~~i=1,\ldots,p.
 \end{eqnarray}
 Similarly,  let $Y_i=(Y_{i1},\ldots,Y_{ip})^T, i=1,\ldots, n_2$. Given a  bandwidth matrix $H_y={\rm diag}\{h_{y1},\ldots, h_{yd}\}$,
 we estimate $\mu_Y({\bf u})$ by $\hat{\mu}_Y({\bf u})=(\hat{\mu}^1_Y({\bf u}),\ldots,\hat{\mu}^p_Y({\bf u}))^T$, where
 \begin{eqnarray}\label{muhaty}
 	\hat{\mu}^{i}_Y({\bf u})=\frac{\sum_{j=1}^{n_2}{\bf K}_{H_y}({\bf V}_j-{\bf u})Y_{ji}}{\sum_{j=1}^{n_2}{\bf K}_{H_y}({\bf V}_j-{\bf u})}, ~~~i=1,\ldots,p.
 \end{eqnarray}
 
 
 For the covariance matrix $\Sigma({\bf u})$, we propose the following empirical estimator:
 \begin{eqnarray}\label{sigmahat}
 	\hat{\Sigma}({\bf u})=(\hat{\sigma}_{ij}({\bf u}))_{1\leq i,j\leq p}=\frac{n_1}{n}\hat{\Sigma}_X({\bf u})+\frac{n_2}{n}\hat{\Sigma}_Y({\bf u}),
 \end{eqnarray}
 where
 \begin{eqnarray}\label{sigmahatx}
 	\hat{\Sigma}_X({\bf u})&=& \frac{\sum_{j=1}^{n_1}{\bf K}_{H_x}({\bf U}_j-{\bf u})X_{j}X_j^T}{\sum_{j=1}^{n_1}{\bf K}_{H_x}({\bf U}_j-{\bf u})}\\
 	&&-\frac{[\sum_{j=1}^{n_1}{\bf K}_{H_x}({\bf U}_j-{\bf u})X_{j}][\sum_{j=1}^{n_1}{\bf K}_{H_x}({\bf U}_j-{\bf u})X_{j}^T]}{[\sum_{j=1}^{n_1}{\bf K}_{H_x}({\bf U}_j-{\bf u})]^2}, \nonumber
 \end{eqnarray}
 and
 \begin{eqnarray}\label{sigmahaty}
 	\hat{\Sigma}_Y({\bf u})&=&\frac{\sum_{j=1}^{n_2}{\bf K}_{H_y}({\bf V}_j-{\bf u})Y_{j}Y_j^T}{\sum_{j=1}^{n_2}{\bf K}_{H_y}({\bf V}_j-{\bf u})}\\
 	&&-\frac{[\sum_{j=1}^{n_2}{\bf K}_{H_y}({\bf V}_j-{\bf u})Y_{j}][\sum_{j=1}^{n_2}{\bf K}_{H_y}({\bf V}_j-{\bf u})Y_{j}^T]}{[\sum_{j=1}^{n_2}{\bf K}_{H_y}({\bf V}_j-{\bf u})]^2}. \nonumber
 \end{eqnarray}
 We remark that the estimators $\hat{\mu}_X({\bf u}), \hat{\mu}_Y({\bf u}),
 \hat{\Sigma}_X({\bf u})$ and $\hat{\Sigma}_Y({\bf u})$ are simply the weighted sample estimates
 with weights determined by the kernel.
 
 For a given ${\bf u}$, we then estimate $\beta({\bf u})=\Sigma^{-1}({\bf u})[\mu_X({\bf u})-\mu_Y({\bf u})]$ using a Dantzig
 selector \citep{candes,Cai} as
 \begin{equation}\label{hatbeta}
 	\hat{\beta}({\bf u})={\rm arg min}_{\beta}\{|\beta|_1 {\rm~subject~to~}|\hat{\Sigma}({\bf u})\beta-[\hat{\mu}_X({\bf u})-\hat{\mu}_Y({\bf u})]|_{\infty}\leq \lambda_n\}.
 \end{equation}
 Given a new observation $(Z,{\bf U}_Z)$, our dynamic linear programming discriminant rule is obtained by plugging in the estimators given in (\ref{muhatx}), (\ref{muhaty}), (\ref{sigmahat}) and (\ref{hatbeta}) into
 the Bayes rule given in Section 1. That is,
 \begin{itemize}
 	\item[(i)] Classify $(Z, {\bf U}_Z)$ into population $(X, {\bf U})$ if:
 	\[\{Z-[\hat{\mu}_X({\bf U}_Z)+\hat{\mu}_Y({\bf U}_Z)]/2\}^T\hat{\beta}({\bf U}_z)\geq 0;\]
 	\item[(ii)] Classify $(Z, {\bf U}_Z)$ into population $(Y, {\bf U})$ otherwise.
 \end{itemize}

 \section{Theory}
 In this section we will first derive the theoretical properties of our proposed dynamic linear programming discriminant rule. In particular, the upper bounds of the misclassification rate are established.
 We will then derive minimax lower bounds for estimation of the misclassification rate. The upper bounds and lower bounds together show that the misclassification rate of our proposed discriminant rule achieves the optimal rate of convergence.
 
 \subsection{Upper bound analysis}
 
 In high dimensional data analysis, Bernstein-type inequalities are widely used
 to prove important theoretical results; see for example Lemma 4 of
 \cite{Bickel2004}, \cite{Merl}, Lemma 1 of \cite{Cai}. Different from
 existing literature in high dimensional linear discrimination analysis, we need to accommodate the dynamic pattern. Particularly, to prove our main results in this section, we establish uniform Bernstein-type inequalities for the mean estimators $\hat{\mu}_X({\bf u})$,  $\hat{\mu}_Y({\bf u})$ and the covariance matrix estimators $\hat{\Sigma}_X({\bf u})$ and $\hat{\Sigma}_Y({\bf u})$; see Lemma \ref{mean} and Lemma \ref{covariance}. We point out that these uniform concentration inequalities could be essential in other research problems that encounter high dimensionality and non-stationarity simultaneously. We present the risk function of the DLPD rule first.
 \begin{lem}\label{risk}
 	Let $\Omega_d\in {\mathcal R}^d$ be the support of ${\bf U}$ and ${\bf V}$.
 	Given ${\bf u}\in \Omega_d$, the conditional misclassification rate of the DLPD rule is
 	\begin{eqnarray*}
 		\hat{R}({\bf u})&=&\frac{1}{2}\Phi\Bigg(-\frac{(\hat{\mu}_X({\bf u})-\hat{\mu}_Y({\bf u}))^T\hat{\beta}({\bf u})}{2\sqrt{\hat{\beta}({\bf u})^T\Sigma({\bf u})\hat{\beta}({\bf u})}}-\frac{(\hat{\mu}_Y({\bf u})-\mu_Y({\bf u}))^T\hat{\beta}({\bf u})}{\sqrt{\hat{\beta}({\bf u})^T\Sigma({\bf u})\hat{\beta}({\bf u})}}\Bigg)\\&&
 		+
 		\frac{1}{2}\Phi\Bigg(-\frac{(\hat{\mu}_X({\bf u})-\hat{\mu}_Y({\bf u}))^T\hat{\beta}({\bf u})}{2\sqrt{\hat{\beta}({\bf u})^T\Sigma({\bf u})\hat{\beta}({\bf u})}}+\frac{(\hat{\mu}_X({\bf u})-\mu_X({\bf u}))^T\hat{\beta}({\bf u})}{\sqrt{\hat{\beta}({\bf u})^T\Sigma({\bf u})\hat{\beta}({\bf u})}}\Bigg).
 	\end{eqnarray*}
 \end{lem}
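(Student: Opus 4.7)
The plan is a direct computation. Conditional on ${\bf U}_Z = {\bf u}$ and on the training data (so that $\hat{\mu}_X({\bf u})$, $\hat{\mu}_Y({\bf u})$, $\hat{\beta}({\bf u})$ are fixed), the DLPD rule depends on $Z$ only through the linear discriminant score
\[
T = \bigl\{Z - [\hat{\mu}_X({\bf u}) + \hat{\mu}_Y({\bf u})]/2\bigr\}^T \hat{\beta}({\bf u}).
\]
Since $Z$ is conditionally Gaussian with covariance $\Sigma({\bf u})$ under either population, $T$ is univariate normal with variance $\hat{\beta}({\bf u})^T \Sigma({\bf u}) \hat{\beta}({\bf u})$, and with mean depending on whether $Z \sim N(\mu_X({\bf u}), \Sigma({\bf u}))$ or $Z \sim N(\mu_Y({\bf u}), \Sigma({\bf u}))$. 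By the law of total probability, using equal prior probabilities,
\[
\hat{R}({\bf u}) = \tfrac{1}{2} P(T < 0 \mid Z \sim X) + \tfrac{1}{2} P(T \geq 0 \mid Z \sim Y).
\]

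The first step is to evaluate each probability by standardizing. When $Z \sim N(\mu_X({\bf u}), \Sigma({\bf u}))$, the mean of $T$ is $[\mu_X({\bf u}) - (\hat{\mu}_X({\bf u}) + \hat{\mu}_Y({\bf u}))/2]^T \hat{\beta}({\bf u})$, so $P(T<0\mid Z\sim X) = \Phi\bigl(-[\mu_X({\bf u}) - (\hat{\mu}_X({\bf u})+\hat{\mu}_Y({\bf u}))/2]^T\hat{\beta}({\bf u})/\sqrt{\hat{\beta}({\bf u})^T\Sigma({\bf u})\hat{\beta}({\bf u})}\bigr)$; analogously, $P(T\geq 0\mid Z\sim Y) = \Phi\bigl([\mu_Y({\bf u}) - (\hat{\mu}_X({\bf u})+\hat{\mu}_Y({\bf u}))/2]^T\hat{\beta}({\bf u})/\sqrt{\hat{\beta}({\bf u})^T\Sigma({\bf u})\hat{\beta}({\bf u})}\bigr)$ via the usual $1-\Phi(-x) = \Phi(x)$ identity.

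The second step is the algebraic rearrangement to match the stated formula. Write
\[
\mu_X({\bf u}) - \tfrac{1}{2}[\hat{\mu}_X({\bf u}) + \hat{\mu}_Y({\bf u})] = \tfrac{1}{2}[\hat{\mu}_X({\bf u}) - \hat{\mu}_Y({\bf u})] - [\hat{\mu}_X({\bf u}) - \mu_X({\bf u})],
\]
and symmetrically
\[
\mu_Y({\bf u}) - \tfrac{1}{2}[\hat{\mu}_X({\bf u}) + \hat{\mu}_Y({\bf u})] = -\tfrac{1}{2}[\hat{\mu}_X({\bf u}) - \hat{\mu}_Y({\bf u})] - [\hat{\mu}_Y({\bf u}) - \mu_Y({\bf u})].
\]
Taking inner products with $\hat{\beta}({\bf u})$ and dividing by $\sqrt{\hat{\beta}({\bf u})^T\Sigma({\bf u})\hat{\beta}({\bf u})}$ yields exactly the two arguments of $\Phi$ in the statement after substituting back into the two terms.

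There is no real obstacle here: the conditioning on training data reduces the problem to a standard Gaussian tail calculation, and the only subtlety is carefully tracking signs in the two decompositions above so that the estimation-error term $(\hat{\mu}_X - \mu_X)^T\hat{\beta}$ appears with a $+$ in one $\Phi$-argument while $(\hat{\mu}_Y - \mu_Y)^T\hat{\beta}$ appears with a $-$ in the other. Positive definiteness of $\Sigma({\bf u})$ ensures the denominators are strictly positive whenever $\hat{\beta}({\bf u})\neq 0$, so all quantities are well-defined.
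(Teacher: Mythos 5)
Your proof is correct and is exactly the ``standard calculation'' the paper implicitly invokes (the appendix gives no explicit proof of this lemma): condition on the training sample and on ${\bf U}_Z={\bf u}$, note the discriminant score $T$ is univariate Gaussian under each population, write the misclassification probability by the law of total probability with priors $1/2$, and rearrange $\mu_X-\tfrac12(\hat\mu_X+\hat\mu_Y)$ and $\mu_Y-\tfrac12(\hat\mu_X+\hat\mu_Y)$ to isolate the estimation-error terms. The sign bookkeeping is handled correctly, so the two $\Phi$-arguments match the stated formula.
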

 
 To obtain our main theoretical results, we make the following assumptions.
 \begin{itemize}
 	\item[(A1)] The kernel function is symmetric in that $K(u)=K(-u)$ and there exists a constant $s>0$ such that $\int_{\mathcal R}
 	K(u)^{2+s}u^j{\rm d}u<\infty$ for $j=0,1,2$. In addition, there exists
 	constants $K_1$ and $K_2$ such that ${\rm sup}_{u\in\mathcal R}|K(u)|<K_1<\infty$
 	and ${\rm sup}_{u\in\mathcal R}|K'(u)|<K_2<\infty$.
 	
 	\item[(A2)] We assume the sample sizes $n_1\asymp n_2$ and denote $n=n_1+n_2$. In addition we assume that ${\frac{\log p}{n}}\rightarrow 0$ as $p, n \rightarrow \infty$ and for simplicity we also assume that $p$ is large enough such that $O(\log n +\log p) = O(\log p)$.

 	\item[(A3)]   $ {\bf U}_1,\ldots , {\bf U}_{n_1}, {\bf V}_1,\ldots, {\bf V}_{n_2}$ are independently and identically sampled from a distribution with a density function $f(\cdot)$, which has a compact support $\Omega_d\in {\mathcal R}^d$. In addition, $f(\cdot)$ is twice continuously differentiable and is bounded away from ${\bf 0}_d$ on its support.
 	
 	\item[(A4)]  The bandwidths satisfy $h_{xi}\asymp \left(\frac{\log p}{n_1}\right)^{\frac{1}{4+d}}$, $h_{yi}\asymp \left(\frac{\log p}{n_2}\right)^{\frac{1}{4+d}}$, for $i=1,\ldots, d.$
 	
 	\item[(A5)] Let $\lambda_1(\Sigma({\bf u}))$ and $\lambda_p(\Sigma({\bf u}))$ be the smallest and largest eigenvalues of $\Sigma({\bf u})$ respectively. We assume that 
 	There exists a positive constant $\lambda$ such that 
 	$\lambda^{-1} \leq \inf_{{\bf u}\in \Omega_d}\lambda_1(\Sigma({\bf u})) \leq \sup_{{\bf u}\in \Omega_d}\lambda_p(\Sigma({\bf u}))\leq \lambda$. In addition, there exists a constant $B>0$ such that $\inf_{{\bf u}\in \Omega_d}\Delta_p({\bf u}) > B$.
 	
 	%

 	\item[(A6)] The mean functions $\mu_X({\bf u}), \mu_Y({\bf u})$ and all the entries of $\Sigma({\bf u})$ have continuous second order derivatives in a neighborhood of each ${\bf u}$ belonging to the interior of $\Omega_d$.
 	
 	
 	Clearly, all the supremum and infimum in this paper can be relaxed to essential supremum and essential infimum.
 	
 \end{itemize}

 Assumptions (A1), (A3) and (A4) are commonly made on kernel functions in nonparametric
 smoothing literature; see for example \cite{Einmahl}, \cite{Fan1996} and \cite{Pagan}. The first statement of assumption (A2) is for simplicity and the
 second statement indicates that our approach allows the dimension $p$ to be as
 large as $O(\exp(n^{c}))$ for any constant $c<1$. That is, the
 dimensionality is allowed to be exponentially high in terms of the sample
 size.
 For assumption (A3), since the density function $f(\cdot)$ is continuous, the image set  ${\cal D}:=\{f({\bf u}): {\bf u}\in \Omega_d\}$ is also compact. Consequently, if there is a sequence of points $f_1,\ldots, f_m,\ldots$ that converges to 0, we must have $0\in {\cal D}$. Therefore our assumption that $f({\bf u})$ is bounded away from zero is equivalent to $f({\bf u})>0$ in ${\cal D}$. Note that the dominator $\sum_{j=1}^{n_1}{\bf K}_{H_x}({\bf U}_j-{\bf u})$ in the Nadaraya-Watson estimator converges to $f({\bf u})$. Our  assumption in some sense ensures that the dominator does not vanish. 
 We can though, relax the compactness condition on the support to the following: there exist $m$ compact sets $\Omega_{d,1},\ldots,\Omega_{d,m}\in {\mathcal R}^d$ such that for some constant $C_u>0$ and $M>0$ which is defined as in Theorems 3.1 and 3.2, we have $P({\bf U}\in \Omega_{d})\geq 1-C_up^{-M}$, where $ \Omega_{d}:=\cup_{i=1}^m\Omega_{d,i}$. 
 Assumption (A5) is routinely made in high dimensional discrimination analysis; see for example \cite{Cai}. Nevertheless, we may allow the uniform bounds on the eigenvalues to hold on $ \Omega_{d}:=\cup_{i=1}^m\Omega_{d,i}$, while assuming that $P({\bf U}\notin \Omega_{d})$ is negligible. 
 Assumption (A6) is a smoothness condition to ensure estimability and is commonly used in the literature of nonparametric estimation; see for example \cite{Fan1996,Tsybakov}.

 The following theorem shows that the risk function of the DLPD rule given
 in Lemma \ref{risk} converges to the Bayes risk function (\ref{R}) uniformly
 in ${\bf u}\in\Omega_d$.

 \begin{thm}\label{theorem1}
 	Assume that assumptions (A1)-(A6)
 	and the following assumption hold:
 	\begin{eqnarray}\label{assump0}
 		\sup_{{\bf u}\in \Omega_d}\frac{ |\beta({\bf u})|_1}{\Delta_p({\bf u})}=o\left(\left(\frac{n}{\log p}\right)^{\frac{2}{4+d}}\right).
 	\end{eqnarray}
 	For any constant $M>0$, by choosing 
 	$\lambda_n=C\left(\frac{\log p}{n}\right)^{\frac{2}{4+d}}\sup_{{\bf u}\in \Omega_d}\Delta({\bf u})$ for some constant $C$
 	large enough, we have with probability larger than $1-{O}(p^{-M})$,
 	\begin{eqnarray*}
 		\sup_{{\bf u}\in \Omega_d}|\hat{R}({\bf u})-R({\bf u})|=O\left( \left(\frac{\log p}{n}\right)^{\frac{2}{4+d}}\sup_{{\bf u}\in \Omega_d}\frac{|\beta({\bf u})|_1}{\Delta_p({\bf u})}\right).
 	\end{eqnarray*}
 	Consequently, we have
 	\begin{eqnarray*}
 		E_{\bf U}\hat{R}({\bf U})-R\rightarrow 0  ~~~{\rm as} ~~p, n\rightarrow \infty.
 	\end{eqnarray*}
 \end{thm}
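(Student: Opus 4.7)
The plan is to decompose $\hat R({\bf u})-R({\bf u})$ using Lemma \ref{risk} and to reduce the analysis to (i) uniform concentration of the Nadaraya--Watson mean and covariance estimators, and (ii) a uniform-in-${\bf u}$ analysis of the dynamic Dantzig selector. Writing $r_n := (\log p/n)^{2/(4+d)}$, the first step invokes Lemmas \ref{mean} and \ref{covariance}: under (A1), (A3), (A4) and (A6), with probability $1-O(p^{-M})$,
\[
\sup_{{\bf u}\in\Omega_d}|\hat\mu_X({\bf u})-\mu_X({\bf u})|_\infty,\ \sup_{{\bf u}\in\Omega_d}|\hat\mu_Y({\bf u})-\mu_Y({\bf u})|_\infty,\ \sup_{{\bf u}\in\Omega_d}|\hat\Sigma({\bf u})-\Sigma({\bf u})|_{\max}=O(r_n),
\]
the rate $r_n$ arising from balancing the kernel variance $\sqrt{\log p/(nh^d)}$ with the bias $h^2$ at the bandwidth choice (A4). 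All subsequent steps are carried out on this single event.

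Next, I would carry out the Dantzig analysis uniformly in ${\bf u}$. Writing
$\hat\Sigma\beta-(\hat\mu_X-\hat\mu_Y)=(\hat\Sigma-\Sigma)\beta-[(\hat\mu_X-\mu_X)-(\hat\mu_Y-\mu_Y)]$
and using the H\"older bound $|Av|_\infty\leq|A|_{\max}|v|_1$, the $\ell_\infty$-residual is at most $O(r_n)(|\beta({\bf u})|_1+1)$, which, after possibly enlarging $C$ in $\lambda_n$ to absorb the $|\beta|_1/\Delta_p$ factor permitted by (\ref{assump0}), is dominated by $\lambda_n$ uniformly in ${\bf u}$. Hence $\beta({\bf u})$ is feasible for (\ref{hatbeta}), so $|\hat\beta({\bf u})|_1\leq|\beta({\bf u})|_1$, and feasibility of $\hat\beta$ in addition gives $|\hat\Sigma({\bf u})(\hat\beta-\beta)|_\infty\leq 2\lambda_n$. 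Combined with the identity $\beta^T\Sigma\beta=(\mu_X-\mu_Y)^T\beta=\Delta_p^2$ and the spectral bounds in (A5), the relevant quadratic forms then satisfy $\hat\beta^T\Sigma\hat\beta=\Delta_p^2+O(r_n|\beta|_1\Delta_p)$ and $(\hat\mu_X-\hat\mu_Y)^T\hat\beta=\Delta_p^2+O(r_n|\beta|_1\Delta_p)$, uniformly in ${\bf u}$.

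Substituting these bounds into Lemma \ref{risk}, the denominator $\sqrt{\hat\beta^T\Sigma\hat\beta}$ stays bounded away from zero by (A5) together with $\inf_{\bf u}\Delta_p({\bf u})\geq B$, the cross terms such as $(\hat\mu_X-\mu_X)^T\hat\beta$ are controlled by H\"older as $O(r_n|\beta|_1)$, and a short expansion shows each argument of $\Phi$ equals $-\Delta_p({\bf u})/2+O(r_n|\beta({\bf u})|_1/\Delta_p({\bf u}))$; global Lipschitz continuity of $\Phi$ then yields the stated uniform bound on $|\hat R({\bf u})-R({\bf u})|$. The expectation statement $E_{\bf U}\hat R({\bf U})-R\to 0$ follows by splitting into the $1-O(p^{-M})$ event, on which $\sup_{\bf u}|\hat R-R|=o(1)$ by (\ref{assump0}), and its complement, using $\hat R,R\in[0,1]$ on the latter. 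The main technical obstacle is the uniformity in ${\bf u}$: the Dantzig feasibility step, and hence the entire chain, must hold simultaneously for every ${\bf u}\in\Omega_d$, which is possible only because the Bernstein-type inequalities of Lemmas \ref{mean} and \ref{covariance} are themselves uniform on $\Omega_d$ rather than pointwise, and this is precisely what distinguishes dynamic high-dimensional LDA from its static counterparts.
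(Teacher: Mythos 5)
Your outline follows the paper's architecture closely (uniform concentration via Lemmas \ref{mean} and \ref{covariance}, uniform Dantzig feasibility, and then a bound on the two $\Phi$-arguments in Lemma \ref{risk}), but there are two genuine gaps.

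\textbf{Feasibility via the crude H\"older bound does not work under (\ref{assump0}).} You bound the $\ell_\infty$-residual by $|\hat\Sigma-\Sigma|_{\max}|\beta|_1 = O(r_n|\beta({\bf u})|_1)$ and then claim that the $|\beta|_1/\Delta_p$ factor ``permitted by (\ref{assump0})'' can be absorbed into the constant $C$ in $\lambda_n=Cr_n\sup_{{\bf u}}\Delta_p({\bf u})$. It cannot: (\ref{assump0}) only controls the divergence rate of $\sup_{\bf u}|\beta({\bf u})|_1/\Delta_p({\bf u})$, it does not make it bounded, and a constant $C$ cannot absorb a diverging factor. Note that $|\beta|_1/|\beta|_2$ may be as large as $\sqrt{p}$, and (A5) only ties $|\beta|_2$, not $|\beta|_1$, to $\Delta_p$. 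The paper's Lemma \ref{feasible} avoids this by proving a concentration inequality directly for the kernel smoother of the \emph{scalar} random variable $X_j^T\beta({\bf u})$, whose conditional variance is $\beta^T\Sigma\beta = \Delta_p^2 \asymp |\beta|_2^2$; this yields a residual of order $r_n\sup_{\bf u}|\beta({\bf u})|_2 \asymp r_n\sup_{\bf u}\Delta_p({\bf u}) = O(\lambda_n)$ without ever incurring an $|\beta|_1$ factor. This step is exactly what lets the theorem function under the weak $\ell_1$-ratio condition (\ref{assump0}), and cannot be replaced by the generic H\"older estimate.

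\textbf{Lipschitz continuity of $\Phi$ is not quite the right final ingredient.} What the Dantzig bounds actually produce (as in the paper's display (\ref{last1})) is a \emph{multiplicative} perturbation of the $\Phi$-argument, $-\tfrac{\Delta_p({\bf u})}{2}\big[1+O\big(r_n\sup_{\bf u}|\beta|_1/\Delta_p\big)\big]$, so the additive error in the argument is $\tfrac{\Delta_p}{2}\cdot O(r_n\sup|\beta|_1/\Delta_p)$, which carries an extra factor of $\Delta_p({\bf u})$ that (A5) does not bound from above. Plain Lipschitz continuity of $\Phi$ would then give only $|\hat R({\bf u})-R({\bf u})| = O(\Delta_p r_n\sup|\beta|_1/\Delta_p)$. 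To recover the claimed rate one should instead use $\sup_{x>0}x\phi(x)<\infty$: for $b>0$ and $\epsilon=o(1)$, $|\Phi(-b(1+\epsilon))-\Phi(-b)|\le \phi(b(1+o(1)))\, b|\epsilon| \le C|\epsilon|$ uniformly in $b$. This is the same device that powers the paper's Lemma \ref{lemma0} in the proof of Theorem \ref{theorem2}, and it is needed here too. The rest of your sketch --- $|\hat\beta|_1\le|\beta|_1$ from feasibility, $|\hat\Sigma(\hat\beta-\beta)|_\infty\le 2\lambda_n$, H\"older control of the cross terms, and the split-into-events argument for the expectation --- is in line with the paper.
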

 Here $\Delta_p({\bf u})$ measures the Mahalanobis distance between the
 two population centroids for a given ${\bf u}$. This theorem does not require $\beta({\bf u})$ to be
 sparse, but assumes the $\ell_1$ norm of $\beta({\bf u})$ divided by the
 Mahalanobis distance
 is bounded uniformly by a
 factor with an order smaller than $ \left(\frac{n}{\log p}\right)^{\frac{2}{4+d}}$. In
 particular, the dimensionality is allowed to diverge as quickly as
 $o(\exp(n))$. This theorem shows that {\it uniformly} in ${\bf U}$,
 the conditional misclassification rate converges to the Bayes risk in
 probability. In order
 to connect this theorem to the situation where $\beta({\bf u})$ is sparse, we note
 that from the Cauchy-Schwartz inequality and assumption (A5), we have for any ${\bf u}\in \Omega_d$,
 \begin{eqnarray*}
 	\frac{|\beta({\bf u})|_1^2}{\Delta_p^2({\bf u})}\leq
 	\frac{|\beta({\bf u})|_0|\beta({\bf u})|_2^2}{\Delta_p^2({\bf u})}\leq
 	\frac{|\beta({\bf u})|_0|\lambda^2|\mu_X({\bf u})-\mu_Y({\bf u})|_2^2}{\lambda^{-2}|\mu_X({\bf u})-\mu_Y({\bf u})|_2^2}=\lambda^4|\beta({\bf u})|_0.
 \end{eqnarray*}
 Consequently we have:
 \begin{cor}\label{cor} Assume that assumptions (A1)-(A6)
 	and the following assumption hold:
 	\begin{eqnarray}\label{assump}
 		\sup_{{\bf u}\in \Omega_d} |\beta({\bf u})|_0=o\left(\left(\frac{n}{\log p}\right)^{\frac{4}{4+d}} \right).
 	\end{eqnarray}
 	For any constant $M>0$, by choosing $\lambda_n=C\left(\frac{\log p}{n}\right)^{\frac{2}{4+d}}\sup_{{\bf u}\in \Omega_d}{\Delta_p({\bf u})}$ for some constant $C$
 	large enough, we have with probability larger than $1-{O}(p^{-M})$,
 	\begin{eqnarray*}
 		\sup_{{\bf u}\in \Omega_d}|\hat{R}({\bf u})-R({\bf u})|\rightarrow 0~~ {\rm and}~~ E_{\bf U}\hat{R}({\bf U})-R\rightarrow 0  ~~~{\rm as} ~~p, n\rightarrow \infty.
 	\end{eqnarray*}
 \end{cor}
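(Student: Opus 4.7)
The corollary is essentially a direct consequence of Theorem~\ref{theorem1}, and the plan is simply to verify that the stated sparsity assumption (\ref{assump}) implies the weaker approximate-sparsity assumption (\ref{assump0}) appearing in the theorem, and then to quote the theorem. The bridge between the two assumptions is exactly the Cauchy--Schwartz estimate displayed in the paragraph preceding the corollary.

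Concretely, I would first recall the pointwise bound
\[
\frac{|\beta({\bf u})|_1^2}{\Delta_p^2({\bf u})} \;\le\; \lambda^4\,|\beta({\bf u})|_0,
\]
which is obtained from $|\beta({\bf u})|_1 \le \sqrt{|\beta({\bf u})|_0}\,|\beta({\bf u})|_2$ together with the two-sided eigenvalue control in assumption (A5): namely $|\beta({\bf u})|_2^2 \le \lambda^2|\mu_X({\bf u})-\mu_Y({\bf u})|_2^2$ and $\Delta_p^2({\bf u}) \ge \lambda^{-2}|\mu_X({\bf u})-\mu_Y({\bf u})|_2^2$. Taking the supremum over ${\bf u}\in\Omega_d$ and square-rooting gives
\[
\sup_{{\bf u}\in\Omega_d}\frac{|\beta({\bf u})|_1}{\Delta_p({\bf u})} \;\le\; \lambda^2\Bigl(\sup_{{\bf u}\in\Omega_d}|\beta({\bf u})|_0\Bigr)^{1/2}.
\]

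Next, substitute the rate (\ref{assump}): since $\sup_{\bf u}|\beta({\bf u})|_0 = o((n/\log p)^{4/(4+d)})$, the right-hand side is $o((n/\log p)^{2/(4+d)})$, which is precisely (\ref{assump0}). Hence the hypotheses of Theorem~\ref{theorem1} are satisfied, and applying that theorem with the stated choice of $\lambda_n$ yields, with probability at least $1-O(p^{-M})$,
\[
\sup_{{\bf u}\in\Omega_d}|\hat R({\bf u})-R({\bf u})| \;=\; O\!\left(\left(\frac{\log p}{n}\right)^{\frac{2}{4+d}} \sup_{{\bf u}\in\Omega_d}\frac{|\beta({\bf u})|_1}{\Delta_p({\bf u})}\right) \;=\; o(1),
\]
where the last equality again invokes (\ref{assump}) through the displayed Cauchy--Schwartz bound.

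Finally, for the convergence in expectation, I would note that both $\hat R({\bf u})$ and $R({\bf u})$ take values in $[0,1]$, so $|\hat R({\bf U})-R({\bf U})|\le 1$ deterministically. Splitting on the high-probability event of uniform convergence gives $E_{\bf U}\hat R({\bf U}) - R = E_{\bf U}[\hat R({\bf U})-R({\bf U})] \to 0$ by bounded convergence (the contribution from the complementary event is $O(p^{-M})$). There is no real obstacle here: all the work is already done inside Theorem~\ref{theorem1} and the only thing to check is the book-keeping conversion between the $\ell_0$-sparsity rate and the $\ell_1/\Delta_p$ rate, which is a one-line application of Cauchy--Schwartz plus (A5).
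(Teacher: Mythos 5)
Your proposal is correct and takes essentially the same route as the paper: the bound $\sup_{\bf u}|\beta({\bf u})|_1/\Delta_p({\bf u}) \le \lambda^2\bigl(\sup_{\bf u}|\beta({\bf u})|_0\bigr)^{1/2}$ is exactly the Cauchy--Schwarz display the authors place immediately before the corollary, and the rest is the stated invocation of Theorem~\ref{theorem1} (the paper even signals this with ``Consequently we have''). The only addition on your side is spelling out the bounded-convergence step for the expectation, which the paper leaves implicit.
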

 This corollary states that the conditional misclassification rate converges to the Bayes
 risk again, if the cardinality of $\beta({\bf u})$ diverges in an order
 smaller than $\left(\frac{n}{\log p}\right)^{\frac{4}{4+d}}$. Thus, our results apply to
 approximate sparse models as in Theorem \ref{theorem1} and sparse models as in
 Corollary \ref{cor}.
 
 In many high dimensional problems without  a dynamic variable ${\bf U}$,
 it has been commonly assumed that the dimension $p$ and sample size $n$
 satisfy $\frac{\log p}{n}\rightarrow 0$. Denote $H=H_x$ or
 $H_y$. From our proofs we see that in the dynamic case where ${\bf U}$ has an
 effect, due to the local estimation, the dimension-sample-size condition
 becomes $\frac{\log p{\rm tr}(H^{-1})}{n|H|}\rightarrow 0$, which becomes $\left(\frac{\log p}{n}\right)^{\frac{4}{4+d}}\rightarrow 0$ under Assumption (A4). We give here a heuristic
 explanation for the change in the dimension-sample-size condition when $d=1$. It is known
 that the variance of a kernel estimator is usually of order
 $\emph{O}(\frac{1}{nH})$ \citep{Fan1996}. On one hand, similar to the asymptotic
 results in local kernel estimation, the sample size $n$ would become $nH$ in
 the denominator of the dimension-sample-size condition to account for the
 local nature of the estimators. On the other hand, for simplicity, assume that
 $\Omega=[a,b]$ for some constants $a,b\in \mathcal{R}$. To control the
 estimation error or bias for a $p$-dimensional parameter uniformly over $[a,b]$, it is to some degree equivalent to controlling the estimation error of a parameter of dimension proportion to $(b-a)pH^{-1}$. Therefore the numerator in the  dimension-sample-size condition becomes $pH^{-1}$ in our case. 
 
 Note that when the Bayes misclassification rate $R({\bf u})\rightarrow 0$, any classifier with misclassification rate $\hat{R}({\bf u})$ tending to $0$ slower than $R({\bf u})$ would satisfy $|\hat{R}({\bf u})-R({\bf u})|\rightarrow 0$. To better characterize the misclassification rate of our DLPD rule, we establish the following stronger results on the rate of convergence in terms of the ratio $\hat{R}({\bf u})/R({\bf u})$.

 \begin{thm}\label{theorem2}
 	Assume that assumptions (A1)-(A6)
 	and the following assumption hold:
 	\begin{eqnarray}\label{assump2}
 		\sup_{{\bf u}\in \Omega_d}\Delta_p({\bf u})\sup_{{\bf u}\in \Omega_d}\frac{ |\beta({\bf u})|_1}{\Delta_p({\bf u})}=o\left(\left(\frac{n}{\log p}\right)^{\frac{2}{4+d}}\right).
 	\end{eqnarray}
 	For any constant $M>0$, by choosing $\lambda_n=C\left(\frac{\log p}{n}\right)^{\frac{2}{4+d}}\sup_{{\bf u}\in \Omega_d}{\Delta_p({\bf u})}$ for some constant $C$
 	large enough, we have with probability larger than $1-{O}(p^{-M})$,
 	\begin{eqnarray*}
 		\sup_{{\bf u}\in \Omega_d}|\hat{R}({\bf u})/R({\bf u})-1|=
 		O\bigg(\left(\frac{\log p}{n}\right)^{\frac{2}{4+d}}\sup_{{\bf u}\in \Omega_d}\Delta_p({\bf u})\sup_{{\bf u}\in \Omega_d}  \frac{|\beta({\bf u})|_1}{\Delta_p({\bf u})}\bigg).
 	\end{eqnarray*}
 	Consequently, we have
 	\begin{eqnarray*}
 		E_{\bf U}\hat{R}({\bf U})/R-1\rightarrow 0  ~~~{\rm as} ~~p, n\rightarrow \infty.
 	\end{eqnarray*}
 \end{thm}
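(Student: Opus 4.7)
The plan is to start from the explicit formula for the conditional risk $\hat R({\bf u})$ in Lemma \ref{risk}, combined with $R({\bf u}) = \Phi(-\Delta_p({\bf u})/2)$, and convert additive control of the arguments of $\Phi$ into multiplicative control of $\hat R({\bf u})/R({\bf u})$ by means of Mills' ratio. Because the proof of Theorem \ref{theorem1} already establishes uniform Bernstein-type concentration for $\hat\mu_X, \hat\mu_Y$, and $\hat\Sigma$ (Lemmas \ref{mean} and \ref{covariance}) and exploits the Dantzig constraint $|\hat\Sigma({\bf u})\hat\beta({\bf u}) - (\hat\mu_X({\bf u}) - \hat\mu_Y({\bf u}))|_\infty \leq \lambda_n$, the bulk of the new work in Theorem \ref{theorem2} is to track these errors through the Gaussian tail in normalized, rather than absolute, form.

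As the first step, set $a({\bf u}) = \Delta_p({\bf u})/2$ and denote by $\hat A({\bf u})$, $\hat B_1({\bf u})$, $\hat B_2({\bf u})$ the three quantities appearing in Lemma \ref{risk}. Let $r_n = (\log p/n)^{2/(4+d)}$. Using the identity $\Sigma({\bf u})\beta({\bf u}) = \mu_X({\bf u}) - \mu_Y({\bf u})$ together with the Dantzig constraint and the uniform bounds $|\hat\mu_X - \mu_X|_\infty, |\hat\mu_Y - \mu_Y|_\infty = O(r_n)$ and $|\hat\Sigma - \Sigma|_\infty = O(r_n)$ furnished by the proof of Theorem \ref{theorem1}, one checks that $\hat\beta({\bf u})^T \Sigma({\bf u}) \hat\beta({\bf u}) = \Delta_p^2({\bf u})(1 + o(1))$ uniformly and
\[
\sup_{{\bf u} \in \Omega_d} |\hat A({\bf u}) - a({\bf u})| = O\!\Bigl(\sup_{{\bf u} \in \Omega_d} r_n |\beta({\bf u})|_1/\Delta_p({\bf u})\Bigr), \quad \sup_{{\bf u}\in \Omega_d}|\hat B_i({\bf u})| = O\!\Bigl(\sup_{{\bf u}\in \Omega_d} r_n |\beta({\bf u})|_1/\Delta_p({\bf u})\Bigr)
\]
with probability at least $1 - O(p^{-M})$, for $i=1,2$. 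These are the same pointwise bounds that underlie Theorem \ref{theorem1}, only rescaled by the normalizing denominator $\Delta_p({\bf u})$.

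Next, apply Mills' ratio $\Phi(-a) = a^{-1}\phi(a)(1 + O(a^{-2}))$, which is valid because $\inf_{\bf u} \Delta_p > B$ by (A5). For $|\delta|/a = o(1)$, a direct Taylor expansion of $-\log \Phi$ yields
\[
\frac{\Phi(-a+\delta)}{\Phi(-a)} = 1 + a\delta\,(1 + o(1)) + O(a^{-2})
\]
uniformly over $a$ bounded below by $B/2$. Applying this with $a = a({\bf u})$ and $\delta = \hat A({\bf u}) - a({\bf u}) \pm \hat B_i({\bf u})$, the pointwise rate becomes
\[
\Bigl|\hat R({\bf u})/R({\bf u}) - 1\Bigr| = O\!\bigl(\Delta_p({\bf u}) \cdot r_n |\beta({\bf u})|_1/\Delta_p({\bf u})\bigr).
\]
Taking the supremum and using $\sup[\Delta_p \cdot |\beta|_1/\Delta_p] \leq \sup \Delta_p \cdot \sup|\beta|_1/\Delta_p$ gives the stated uniform rate. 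Assumption (\ref{assump2}) is precisely the condition that this bound tends to zero, which in turn guarantees $|\delta|/a \to 0$ so that the Mills expansion is self-consistent. For the expected version, write $E_{\bf U}\hat R({\bf U})/R - 1 = E_{\bf U}[R({\bf U})(\hat R({\bf U})/R({\bf U}) - 1)]/R$ and use the uniform bound on the event of probability $1 - O(p^{-M})$ together with boundedness of $\hat R$ on the complement.

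The main obstacle is this Mills-ratio amplification. In Theorem \ref{theorem1} the additive error $|\hat R - R|$ was controlled essentially by $\phi(a)\cdot |\delta|$, whereas here the multiplicative error $|\hat R/R - 1|$ is controlled by $a\cdot |\delta|$. Since $R \asymp \phi(a)/a$, passing to the ratio introduces the extra factor $a = \Delta_p/2$ which need not be uniformly bounded; this is precisely why the new hypothesis (\ref{assump2}) carries the additional $\sup \Delta_p$ factor compared to (\ref{assump0}). Ensuring that the Mills expansion and the normalization $\hat\beta^T \Sigma \hat\beta \approx \Delta_p^2$ remain valid uniformly over $\Omega_d$, rather than merely at a single point, is the technically delicate part.
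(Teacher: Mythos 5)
Your proposal is correct and follows essentially the same route as the paper: both rely on Lemma \ref{risk}, the uniform bounds from the proof of Theorem \ref{theorem1}, and a Mills-ratio argument (the paper's Lemma \ref{lemma0}) to convert the additive error in the argument of $\Phi$ into a multiplicative error in $\Phi$ itself, picking up the extra $\sup\Delta_p$ factor; the only cosmetic difference is that you expand $\log\Phi(-\cdot)$ while the paper applies the mean value theorem to $\Phi$ directly and then bounds $\phi(a)$ by $a\Phi(-a)$. One small imprecision: condition (\ref{assump2}) is needed to send $a\delta = O(\Delta_p\cdot r_n|\beta|_1/\Delta_p)$ to zero (so the exponential factor in the Mills-ratio shift is $1+o(1)$), whereas the self-consistency requirement $|\delta|/a\to 0$ already follows from (\ref{assump0}) together with $\inf\Delta_p > B$; your phrasing inverts this implication, but the argument is otherwise sound.
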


 \subsection{Minimax lower bound}
 We first introduce the parameter space and some known results in the literature of minimax lower bound theory.  
 We consider the following parameter space:
 \[
 {\cal G}(\kappa)=\left\{(\mu_X({\bf u}),\mu_Y({\bf u}), \Sigma({\bf u})): \mu_X,\mu_Y,\Sigma\in H(2,L),\sup_{{\bf u}\in \Omega_d}\frac{ |\beta({\bf u})|_1^2}{\Delta_p^2({\bf u})}\leq \kappa \right\},
 \]
 where $H(2,L)$ denotes the H\"{o}lder class with order two \citep{Tsybakov}. For definiteness, $\frac{0}{0}$ is defined to be 1. Clearly, assumptions A3 and A6 together imply that $\mu_X({\bf u}),\mu_Y({\bf u})$ and $\Sigma({\bf u})$ belong to the H\"{o}lder class $H(2,L)$ with domain $\Omega_d$. We shall denote $\theta=(\mu_X(\bf u),\mu_Y(\bf u),\Sigma(\bf u))$.
 
 Suppose ${\cal P}$ is a family of probability measures and $\theta $ is the parameter of interest with values in the functional space ${\cal D}$.
 Let $T(\theta)$ be any functional of some parameter $\theta\in {\cal D}$. By noticing that $d(\theta_1,\theta_2):=\sup_{{\bf u}\in \Omega_d}|T(\theta_1)-T(\theta_2)|$  defines a semi-distance for any $\theta_1,\theta_2\in {\cal D}$ , from LeCam's Lemma \citep{LeCam, Yu, Cai2011} we have
 \begin{lem}\label{lecam}
 	Let $T(\theta)$ be any functional of $\theta$ and let $\hat{T}$ be an estimator of $T(\theta)$ on ${\cal P}$ taking values in the metric space $({\cal D},d)$. Let ${\cal D}_0={\theta_0}$ and ${\cal D}_1=\{\theta_1,\ldots,\theta_m\}$ be two $2\delta$-separated subsets of ${\cal D}$ in that $\min_{1\leq i\leq m}d(\theta_0,\theta_i):=\sup_{{\bf u}\in \Omega_d}|T(\theta_0)-T(\theta_i)|>2\delta$. Let $P_i\in {\cal P}$ be the corresponding probability measure for $(\theta_i, {\bf u})$, $i=0,1,\ldots, m$, and let $\bar{P}=\sum_{i=1}^m\omega_iP_{i}$ where $\omega_1,\ldots,\omega_m$ are nonnegative weights such that $\sum_{i=1}^m \omega_i=1$. We then have:
 	\[
 	\inf_{\hat{T}}\sup_{\theta\in{\cal D}}E_\theta\sup_{{\bf u}\in \Omega_d}|\hat{T}(\theta)-T(\theta)| \geq \delta \left(1-\frac{\|\bar{P}-P_0\|_1}{2}\right).
 	\]
 \end{lem}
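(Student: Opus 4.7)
The plan is to execute the classical LeCam reduction from estimation to two-hypothesis testing, exploiting the fact that the sup-over-${\bf u}$ loss is a genuine seminorm on the value space $\mathcal{D}$, so that the triangle inequality is available despite $T(\theta)$ being a function-valued object.

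First, to each candidate estimator $\hat T$ I would associate the test
\[
\psi_{\hat T}\;=\;\mathbf{1}\Bigl\{\sup_{{\bf u}\in\Omega_d}|\hat T({\bf u})-T(\theta_0)({\bf u})|>\delta\Bigr\},
\]
interpreted as rejecting the null hypothesis $\theta=\theta_0$. The $2\delta$-separation $\sup_{{\bf u}\in \Omega_d}|T(\theta_0)-T(\theta_i)|>2\delta$ combined with the triangle inequality shows that \emph{whichever} hypothesis generated the data, an error of $\psi_{\hat T}$ forces the loss against the true parameter to exceed $\delta$: under $P_0$, $\psi_{\hat T}=1$ means $\sup_{\bf u}|\hat T-T(\theta_0)|>\delta$ directly; under $P_i$ with $i\geq 1$, $\psi_{\hat T}=0$ means $\sup_{\bf u}|\hat T-T(\theta_0)|\leq\delta$, and the triangle inequality then yields
\[
\sup_{{\bf u}\in\Omega_d}|\hat T-T(\theta_i)|\;\geq\;d(T(\theta_0),T(\theta_i))-\sup_{{\bf u}\in\Omega_d}|\hat T-T(\theta_0)|\;>\;2\delta-\delta\;=\;\delta.
\]

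Next I would apply Markov's inequality under each measure and average over the alternatives with weights $\omega_i$, obtaining
\[
E_{P_0}\sup_{\bf u}|\hat T-T(\theta_0)|\;\geq\;\delta\,P_0(\psi_{\hat T}=1),\qquad \sum_{i=1}^{m}\omega_i E_{P_i}\sup_{\bf u}|\hat T-T(\theta_i)|\;\geq\;\delta\,\bar P(\psi_{\hat T}=0).
\]
The Neyman-Pearson fundamental inequality, valid for every measurable test, gives $P_0(\psi_{\hat T}=1)+\bar P(\psi_{\hat T}=0)\geq 1-\|\bar P-P_0\|_1/2$. Summing the two displays and comparing the pooled Bayes-type risk with $\sup_{\theta\in\mathcal{D}}E_\theta$ via the standard observation that the sup dominates any convex combination of risks at the finitely many points $\theta_0,\theta_1,\ldots,\theta_m$ delivers the claimed lower bound; taking an infimum over $\hat T$ preserves the inequality.

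The only real subtlety is confirming that $d(\theta_1,\theta_2)=\sup_{{\bf u}\in\Omega_d}|T(\theta_1)({\bf u})-T(\theta_2)({\bf u})|$ satisfies the triangle inequality in the functional setting, which is immediate from the triangle inequality of the uniform norm. Once this is recorded, the argument is a verbatim transcription of the classical two-point LeCam proof, with the mixture $\bar P$ playing the role of the composite alternative; the rest is bookkeeping.
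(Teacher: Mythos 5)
Your reduction-to-testing argument is sound in structure but establishes the bound only with the constant $\delta/2$, not $\delta$ as the lemma claims. Trace the constants: Markov gives $E_{P_0}\sup_{\bf u}|\hat T-T(\theta_0)|\geq\delta\,P_0(\psi_{\hat T}=1)$ and $\sum_i\omega_i E_{P_i}\sup_{\bf u}|\hat T-T(\theta_i)|\geq\delta\,\bar P(\psi_{\hat T}=0)$, and Neyman--Pearson bounds the sum of the right-hand sides below by $\delta\bigl(1-\tfrac12\|\bar P-P_0\|_1\bigr)$. But $\sup_{\theta}E_\theta$ only dominates the \emph{convex combination} $\tfrac12 E_{P_0}+\tfrac12\sum_i\omega_i E_{P_i}$, i.e.\ half the sum; you therefore arrive at $\inf_{\hat T}\sup_\theta E_\theta\geq\tfrac{\delta}{2}\bigl(1-\tfrac12\|\bar P-P_0\|_1\bigr)$, which is weaker by a factor of two than the statement you are proving. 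This loss is intrinsic to the test-based reduction: by replacing each loss with the indicator of a test error times $\delta$, you throw away the fact that on every realization the triangle inequality gives a total budget of $2\delta$ to split between the null and alternative losses, not $\delta$.

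To recover the stated constant $\delta$ (the form used in the sources the paper cites), bypass the intermediate test and keep the pointwise inequality in full. Fix a dominating measure $\mu$ and write $p_0,\bar p$ for the densities of $P_0,\bar P$. As you correctly note, the sup-norm is a genuine metric, so for every datum $x$ and every $j\geq 1$,
\[
\sup_{\bf u}|\hat T(x)-T(\theta_0)|+\sup_{\bf u}|\hat T(x)-T(\theta_j)|\;\geq\;\sup_{\bf u}|T(\theta_0)-T(\theta_j)|\;>\;2\delta,
\]
and hence $\sup_{\bf u}|\hat T(x)-T(\theta_0)|+\min_{1\leq j\leq m}\sup_{\bf u}|\hat T(x)-T(\theta_j)|\geq 2\delta$. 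Using the elementary inequality $a\,p_0+b\,\bar p\geq(a+b)\min(p_0,\bar p)$ for $a,b\geq 0$, and averaging the alternatives with the weights $\omega_j$,
\[
E_{P_0}\sup_{\bf u}|\hat T-T(\theta_0)|+\sum_{j=1}^m\omega_j E_{P_j}\sup_{\bf u}|\hat T-T(\theta_j)|\;\geq\;\int 2\delta\,\min(p_0,\bar p)\,d\mu\;=\;2\delta\Bigl(1-\tfrac12\|\bar P-P_0\|_1\Bigr).
\]
Now halving, as you do, to pass to $\sup_\theta E_\theta$ gives exactly $\delta\bigl(1-\tfrac12\|\bar P-P_0\|_1\bigr)$. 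In short, the missing idea is to integrate the full two-sided triangle inequality against $\min(p_0,\bar p)$ rather than coarsen the loss into a $\{0,\delta\}$-valued test statistic before taking expectations.
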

 By the above version of LeCam' lemma, the derivation of minimax lower bounds thus relies on the construction of the probability measure $P_0$ corresponding to the null hypothesis ${\cal D}_0$, the probability measures $P_1,\ldots, P_m$ corresponding to the alternative ${\cal D}_1$ and the weights $\omega_1,\ldots, \omega_m$ such that (i) $\theta_0,\theta_1,\ldots,\theta_m\in {\cal D}$ and the distance $\min_{1\leq i\leq m}d(\theta_0,\theta_i)$ is as large as possible while (ii) the total variation
 $\frac{1}{2}\|P_0-\bar{P}\|_1$ is controlled to be away from 1. These technical details are deferred to the Appendix. By setting $T(\theta)=\Delta_p(\bf{u})$ and $R({\bf u})$ where $\Delta_p({\bf u})$ and $R({\bf u})$ are defined as in \eqref{R}, the following theorem establishes minimax lower bounds for the Bayes misclassification rate.
 
 \begin{thm}\label{thminimax}
 	Assume that $\kappa=O(p^{\gamma})$ for some constant $0<\gamma<\frac{1}{2}$ and $\kappa=o\left(\left(\frac{n}{\log p}\right)^{\frac{4}{4+d}}\right)$. Let $\tilde{\Delta}_p({\bf u})$ and $\tilde{R}({\bf u})$ be estimators of $\Delta_p({\bf u})$ and $R({\bf u})=\phi\left(-\frac{\Delta_p(\bf u)}{2} \right)$ respectively. Assume that $n_1\asymp n_2$ and let $\alpha=\frac{n(1-2\gamma)}{2en_1}$. We have,
 	\begin{eqnarray}\label{minimax1}
 		~~~~\inf_{\tilde{\Delta}_p}\sup_{\theta\in{\cal G}(\kappa)}E_\theta\sup_{{\bf u}\in \Omega_d}|\tilde{\Delta}_p({\bf u})-\Delta_p({\bf u})| \geq \frac{1}{2}\sqrt{\kappa}\left(\frac{\alpha\log p}{n}\right)^{\frac{2}{4+d}}
 	\end{eqnarray}
 	and
 	\begin{eqnarray}\label{minimax2}
 		~~~~\inf_{\tilde{R}}\sup_{\theta\in{\cal G}(\kappa)}E_\theta\sup_{{\bf u}\in \Omega_d}|\tilde{R}({\bf u})-R({\bf u})| \geq \frac{1}{2}\sqrt{\kappa}\left(\frac{\alpha \log p}{n}\right)^{\frac{2}{4+d}}.
 	\end{eqnarray}
 \end{thm}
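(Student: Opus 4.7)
The plan is to apply Le~Cam's lemma (Lemma~\ref{lecam}) to a family of alternatives in $\mathcal{G}(\kappa)$ that couple a sparse coordinate pattern with a localized nonparametric bump over $\Omega_d$. I will work in the simplified regime $\mu_Y\equiv 0_p$, $\Sigma\equiv I_p$, and $f$ equal to any fixed density satisfying (A3), so that $\beta({\bf u})=\mu_X({\bf u})$ and $\Delta_p({\bf u})=|\mu_X({\bf u})|_2$. Take $\theta_0=(0_p,0_p,I_p)$; by the convention $0/0:=1$, $\theta_0\in\mathcal{G}(\kappa)$. Fix a smooth, compactly supported bump $\psi:\mathbb{R}^d\to[0,1]$ with $\psi(0)=1$, set $h\asymp((1-2\gamma)\log p/n)^{1/(4+d)}$ (which absorbs the constant $\alpha$ from the statement), and choose grid points ${\bf u}_1,\ldots,{\bf u}_{m_u}$ in the interior of $\Omega_d$, separated by at least $2h$ so that the rescaled bumps $\psi(({\bf u}-{\bf u}_k)/h)$ have pairwise disjoint supports; one can take $m_u\asymp h^{-d}$. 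For each $S\subset\{1,\ldots,p\}$ with $|S|=\kappa$ and each $k\in\{1,\ldots,m_u\}$, define
\[\mu_X^{(S,k)}({\bf u})\;=\;c\,h^2\,\psi(({\bf u}-{\bf u}_k)/h)\,e_S,\]
where $e_S\in\{0,1\}^p$ is the indicator of $S$ and $c>0$ is a small constant tuned below.

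Three properties are immediate. Since each coordinate of $\mu_X^{(S,k)}$ has second ${\bf u}$-derivative bounded by $c\|\psi''\|_\infty$, membership $\mu_X^{(S,k)}\in H(2,L)$ holds for $c$ small. At ${\bf u}={\bf u}_k$ one has $|\beta^{(S,k)}|_1=\kappa ch^2$ and $\Delta_p^{(S,k)}=\sqrt{\kappa}\,ch^2$, so $|\beta|_1^2/\Delta_p^2\equiv\kappa$ and $\theta_{S,k}\in\mathcal{G}(\kappa)$. The Le~Cam semi-distance is therefore
\[d(\theta_0,\theta_{S,k})\;=\;\sup_{{\bf u}\in\Omega_d}\bigl|\Delta_p^{(S,k)}({\bf u})\bigr|\;=\;\sqrt{\kappa}\,ch^2\;\asymp\;\sqrt{\kappa}\,(\alpha\log p/n)^{2/(4+d)},\]
matching the order of the claimed lower bound.

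The core work is bounding the total variation between $P_0$ and the uniform mixture $\bar P=M^{-1}\sum_{S,k}P_{S,k}$ with $M=\binom{p}{\kappa}m_u$. Because $\mu_Y$ and $\Sigma$ are common to all alternatives, the $Y$-data cancels in the likelihood ratio, and conditionally on ${\bf U}_1,\ldots,{\bf U}_{n_1}$ the log-likelihood ratios $L_{S,k}$ are jointly Gaussian. Writing $\psi_{ki}:=\psi(({\bf U}_i-{\bf u}_k)/h)$, a direct MGF computation yields
\[E_0\!\left[e^{L_{S,k}+L_{S',k'}}\,\big|\,{\bf U}\right]=\begin{cases}\exp\!\bigl(c^2h^4\,|S\cap S'|\,\textstyle\sum_i\psi_{ki}^2\bigr),&k=k',\\[2pt]1,&k\neq k',\end{cases}\]
the second case exploiting disjoint bump supports. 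Averaging $|S\cap S'|$ over independent uniform $\kappa$-subsets via the hypergeometric MGF gives $E[e^{a|S\cap S'|}]\le(1+(\kappa/p)(e^a-1))^\kappa\le\exp(\kappa^2(e^a-1)/p)$, while a standard Bernstein step based on (A3) shows $\sum_i\psi_{ki}^2$ concentrates uniformly in $k$ around $n_1 h^d f({\bf u}_k)\int\psi^2$. Choosing $c$ so that the effective $a\le(1-2\gamma)\log p$, and using $\kappa=O(p^\gamma)$ with $\gamma<1/2$ so that $\kappa^2/p=O(p^{2\gamma-1})$, collapses $\kappa^2(e^a-1)/p=O(1)$; combined with averaging over the $m_u$ location bins this gives $\chi^2(\bar P,P_0)=O(1)$ and hence $\|\bar P-P_0\|_1/2\le 1-\eta$ for some $\eta>0$. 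Lemma~\ref{lecam} then yields \eqref{minimax1}.

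For \eqref{minimax2}, write $R({\bf u})=\Phi(-\Delta_p({\bf u})/2)$ and use that under $\kappa=o((n/\log p)^{4/(4+d)})$ every value $\Delta_p^{(S,k)}({\bf u})\lesssim\sqrt{\kappa}h^2\to 0$, so the mean value theorem gives $|R^{(S,k)}({\bf u}_k)-R^{(0)}({\bf u}_k)|\ge(\phi(0)/2-o(1))\sqrt{\kappa}\,ch^2$, and the same Le~Cam argument delivers the matching bound. The principal technical obstacle is the $\chi^2$ step: one must (a) control $\sum_i\psi_{ki}^2$ uniformly in $k$ in order to replace it by its mean, and (b) tighten the hypergeometric MGF estimate enough to carry the sharp constant $(1-2\gamma)/(2e)$ into the rate; optimizing $c$ against these two constraints is what pins $\alpha$ down to the stated $n(1-2\gamma)/(2en_1)$.
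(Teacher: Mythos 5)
Your construction (null $\theta_0=(0_p,0_p,I_p)$, sparse--bump alternatives $\mu_X^{(S,k)}({\bf u})\propto h^2\psi(({\bf u}-{\bf u}_k)/h)e_S$, Le~Cam + $\chi^2$, hypergeometric MGF for $|S\cap S'|$) mirrors the paper's proof closely. Where you genuinely diverge from the paper is in how the randomness of the covariates ${\bf U}_1,\ldots,{\bf U}_{n_1}$ is handled, and that is exactly where your argument has a real gap. You condition on ${\bf U}$, obtain a conditional $\chi^2$ that depends on the random sums $A_k:=\sum_i\psi_{ki}^2$, and then appeal to a ``Bernstein step'' to replace $A_k$ by its mean $\asymp n_1h^d$. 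But the quantity you must control is the \emph{expectation} $\chi^2(\bar P,P_0)=E_{\bf U}\bigl[\chi^2(\bar P_{\cdot|{\bf U}},P_{0,\cdot|{\bf U}})\bigr]$, and a high-probability bound on $A_k$ does not control an expectation of an exponential of $A_k$. On the rare event that $A_k$ is of order $n_1$, the effective exponent $a=c^2h^4A_k$ is of order $n_1h^4\asymp n^{d/(4+d)}(\log p)^{4/(4+d)}\gg\log p$, so the conditional $\chi^2$ (which involves $\exp\{\kappa^2(e^{a}-1)/p\}$) blows up much faster than any Bernstein tail $e^{-cn_1h^d}$ decays. Multiplying the small probability by the huge value does not obviously give something small, and you give no argument that it does. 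The paper sidesteps this by never conditioning: it integrates the exponential over the product measure of $({\bf u}_1,\ldots,{\bf u}_{n_2})$ directly, coordinate-by-coordinate, so that each factor $\int_{[0,1]}e^{J\epsilon_i^2(u)}\,du\le 1-h+he^{Jh^4}$ is deterministically bounded and the elementary inequality $e^x\le 1+ex$ for $x\le 1$ (valid since $Jh^4\le\kappa h^4\to 0$) does the rest. This exact-integration step is what delivers the sharp constant $\alpha=(1-2\gamma)/(2e)$ and makes $\chi^2=O(m_h^{-1})=o(1)$; by contrast your conclusion ``$\chi^2=O(1)$'' is too weak to feed into Le~Cam---you need a vanishing (or at least explicitly small) bound, which your calculation does produce once the $1/m_u$ collision factor is tracked, but you should state it as $O(1/m_u)$. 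The remainder of your argument (the $H(2,L)$ membership check, the semi-distance $d(\theta_0,\theta_{S,k})=\sqrt\kappa\,ch^2$, and the passage from \eqref{minimax1} to \eqref{minimax2} via the mean-value theorem on $\Phi$ using $\sqrt\kappa h^2\to 0$) is correct and matches the paper.
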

 
 Note that the upper bound we have obtained in Theorem \ref{theorem1} is of order $\sqrt{\kappa}\left(\frac{\log p}{n}\right)^{\frac{2}{4+d}}$ in ${\cal G}_{\kappa}$. Together with Theorem \ref{thminimax} we conclude that the misclassification rate of our proposed DLPD achieves the optimal rate of convergence over ${\cal G}_{\kappa}$.  Moreover,  since the lower bound in Theorem \ref{thminimax} is not negligible when $\sqrt{\kappa}$ has the same order as $\left(\frac{n}{\log p}\right)^{\frac{2}{4+d}}$ while \eqref{mm3} is negligible when $\kappa=O(p^{\gamma})$, we conclude that the detection boundary \eqref{assump0} for $\sup_{{\bf u}\in \Omega_d}\frac{ |\beta({\bf u})|_1}{\Delta_p({\bf u})}$ is optimal when $\left(\frac{n}{\log p}\right)^{\frac{4}{4+d}} =O(p^{\gamma})$ where $\gamma \in (0,1/2)$.  
 \section{Numerical studies}
 
 \subsection{Choice of tuning parameters}
 The bandwidths for the mean functions $\hat{\mu}_X({\bf u})$ are chosen using the classical leave-one-out cross validation. Once we obtain the bandwidth for estimating ${\mu}_X({\bf u})$, the bandwidth matrix for the covariance functions $\hat{\Sigma}_X({\bf u})$ can be obtained using a similar leave-on-out procedure. More specifically, for $i=1,\ldots, n_1$, we denote the estimators of $\Sigma({\bf U}_i)$ obtained by leaving the $i$th sample out as $\hat{\mu}_{X,-i}({\bf U}_i)$ and let $\hat{\Sigma}_{X,-i}({\bf U}_i) $ be the mean function estimator with the bandwidth chosen by leave-one-out cross validation. We then choose $H_x$ such that
 \begin{eqnarray}\label{cvh}
 	r_{cv}(H_x)=\frac{1}{p^2n_1}\sum_{i=1}^{n_1}\left\| \big(X_{i}-\hat{\mu}_{X,-i}({\bf U}_i)\big) \big(X_{i}-\hat{\mu}_{X,-i}({\bf U}_i)\big)^T-\hat{\Sigma}_{X,-i}({\bf U}_i) 
 	\right\|_F^2,
 \end{eqnarray}
 is minimized. Here $\|\cdot\|_F$ denotes the matrix Frobenius norm. The bandwidths  for computing $\hat{\mu}_Y({\bf u})$ and $\hat{\Sigma}_Y({\bf u})$ are chosen similarly. 
 
 %
 Following \cite{Tsybakov}, we define the weighted  Mean Integrated Squared Error (MISE) as:
 \[
 r(H_x)=p^{-2}E\int_{\Omega_d}\|\hat{\Sigma}_{X,-i}({\bf u})-\Sigma({\bf u}) \|^2_F f(\bf u)d{\bf u}.
 \]
 The following theorem indicates
 that  the cross-validation  criterion  $r_{cv}(H_x)$ in meaningful in the sense that it provides an  estimator for the weighted  MISE $ r(H_x)$ subject to a constant shift (independent of $H_x$), and a negligible bias. 
 \begin{thm}\label{misethm}
 	Under assumptions (A1)-(A6), we have,
 	\[	Er_{cv}(H_x)= E  r(H_x)+C_\sigma+O\left(\Big(\frac{\log p}{n}\Big)^{\frac{2}{2+d}}\right),
 	\]
 	where $C_\sigma=p^{-2}	E\big\| \big(X_{i}-{\mu}_X({\bf U}_i)\big) \big(X_{i}-{\mu}_X({\bf U}_i)\big)^T- \Sigma({\bf U}_i)\big\|_F^2$ is a constant shift.
 \end{thm}
 The proof of the above theorem is provided in the Appendix.
 Now we obtain the bandwidths for computing the estimators $\hat{\Sigma}_X({\bf u})$, $\hat{\mu}_X({\bf u})$, $\hat{\Sigma}_Y({\bf u})$ and $\hat{\mu}_Y({\bf u})$. For a given $\lambda_n$, the convex
 optimization problem (\ref{hatbeta}) is implemented via linear programming as
 \begin{eqnarray*}
 	&&\min \sum_{i=1}^{p}v_i~~~ {\rm subject~ to}~~-v_i\leq \beta_i \leq
 	v_i\\
 	&&~{\rm and}-\lambda_n\leq \gamma_i^T({\bf u})\beta -(\hat{\mu}_X^i({\bf u})-\hat{\mu}_Y^i({\bf u}))\leq
 	\lambda_n,~~i=1,\ldots,p,
 \end{eqnarray*}
 where $v=(v_1,\ldots,v_p)^T\in \mathbf{R}^{p} $ and $\gamma_i({\bf u})^T$ is the
 $i$-th row of $\hat{\Sigma}({\bf u})$.
 
 This is similar to  the Dantzig selector \citep{candes,Cai}. The tuning parameter
 $\lambda_n$ in (\ref{hatbeta}) is chosen using $K$-fold cross
 validation. More specifically, randomly divide the index set
 $\{1,\ldots,n_1\}$ into $K$ subgroups $N_{11},\ldots,$ $ N_{1K}$, and
 divide $\{1,\ldots,n_2\}$ into $K$ subgroups $N_{21},\ldots,N_{2K}$.
 Denote the full sample set as $S=\{(X_i,{\bf U}_i),(Y_j,{\bf V}_j): 1\leq i\leq n_1, 1\leq
 j\leq n_2\}$ and let $S_k=\{(X_i,{\bf U}_i),(Y_j,{\bf V}_j): i\in N_{1k},Y\in N_{2k}\}$ for
 $k=1,\ldots, K$. For a given $\lambda_n$ and  $1\leq k\leq K$, let $\hat{\mu}^{(k)}_X({\bf u})$, $\hat{\mu}^{(k)}_Y({\bf u})$ and $\hat{\beta}^{(k)}({\bf u})$ be estimators of $\mu_X({\bf u})$, $\mu_Y({\bf u})$ and $\beta({\bf u})$ computed using (\ref{muhatx}), (\ref{muhaty}) and (\ref{hatbeta}), samples in $S\setminus S_k$ and bandwidths $H_x, H_y$. For each
 $k=1,\ldots, K$, let
 \[ C_{1k}=\sum_{i\in
 	N_{1k}}I_{\{ [X_i-(\hat{\mu}^{(k)}_X({\bf U}_i)-\hat{\mu}^{(k)}_Y({\bf U}_i))/2]^T\hat{\beta}^{(k)}({\bf U}_i)\geq 0\}},\]
 and
 \[C_{2k}=\sum_{i\in
 	N_{2k}}I_{\{ [Y_i-(\hat{\mu}^{(k)}_X({\bf V}_i)-\hat{\mu}^{(k)}_Y({\bf V}_i))/2]^T\hat{\beta}^{(k)}({\bf V}_i)\leq 0\}}.\]
 Here $I_{\{\cdot\}}$ is the indicator function. Clearly,
 $C_{1k}+C_{2k}$ gives the total number of correct classification for
 the test data set $S_k$ using the DLPD rule based on $S\setminus
 S_k$. We then find $\lambda_n$ such that the following averaged
 correct classification number is maximized:
 \begin{eqnarray*}
 	CV(\lambda_n)=\frac{1}{K}\sum_{k=1}^K(C_{1k}+C_{2k}).
 \end{eqnarray*}
 We remark that 
 local smoothing estimates are obtained in our method before applying linear programming. Hence the computation time consists of the time for local smoothing and the time for linear programming.  The proposed method is computationally manageable for large dimensional data. 
 
 
 To speed up computation, instead of fitting the classifier for every new observation, we may fit it on a sufficient fine grid of ${\bf u}$ and interpolate when a new instance comes. Here we provide an argument when the dynamic factor ${\bf u}$ is one-dimensional on an interval denoted as $\Omega=[a,b]$.  Assume that  $\Delta({\bf u})$ has continuous first derivative on $\Omega$. Suppose the classifier is fitted on the grid of points denoted as  $u_i=a+(i-1)(b-a)/k$ for $i=1,\ldots, k+1$. For any $U_Z\in [a,b]$, we can simply use the classifier fitted in the nearest point, say $u_t$ with $t\in\{1,\ldots, k+1\}$, for classifying the new observation with ${\bf u}=U_Z$. In particular,  by choosing  $k=O\left( \left(\frac{\log p}{n}\right)^{\frac{4+d}{2}}\left(\sup_{{\bf u}\in \Omega_d}\frac{|\beta({\bf u})|_1}{\Delta_p({\bf u})}\right)^{-1}\right)$, we can show that the conditional misclassification rate $\hat{R}(U_Z)$ of this interpolated classifier satisfies $\hat{R}(U_Z)-R(U_Z)\leq \hat{R}(U_Z)-R(u_t)+|R(u_t)-R(U_Z)|=O\left( \left(\frac{\log p}{n}\right)^{\frac{2}{4+d}}\sup_{{\bf u}\in \Omega_d}\frac{|\beta({\bf u})|_1}{\Delta_p({\bf u})}\right)$ under the assumptions of Theorem 3.1. This implies that the order of the error rate remain unchanged when $k$ is large enough.  
 
 \subsection{Simulation}
 For the simulation study, we consider the following four models:
 
 {\rm Model 1.} We generate $U_1,\ldots, U_{n_1},V_{1},\ldots, V_{n2}$ independently from $U[0,1]$, and generate $X_i\sim N(\mu_X(U_i),\Sigma(U_i))$, $Y_j\sim N(\mu_Y(V_j),\Sigma(V_j))$ for $1\leq i\leq n_1, 1\leq j\leq n_2$. The mean functions $\mu_X(u)=(\mu^1_X(u),\ldots,\mu_X^p(u))^T$ and $\mu_Y(v)=(\mu^1_Y(v),\ldots,\mu_Y^p(v))^T$ are set as $\mu^1_X(u)=\cdots=\mu^p_X(u)=1$, $\mu^1_Y(v)=\cdots=\mu^{20}_Y(v)=0$ and $\mu^{21}_Y(v)=\cdots=\mu^p_Y(v)=1$. The covariance matrix is set as $\Sigma(u)=(0.5^{|i-j|})_{1\leq i,j\leq p}$.
 
 {\rm Model 2.} We generate $U_1,\ldots, U_{n_1},V_{1},\ldots, V_{n2}$ independently from $U[0,1]$, and generate $X_i\sim N(\mu_X(U_i),\Sigma(U_i))$, $Y_j\sim N(\mu_Y(V_j),\Sigma(V_j))$ for $1\leq i\leq n_1, 1\leq j\leq n_2$. The mean functions $\mu_X(u)=(\mu^1_X(u),\ldots,\mu_X^p(u))^T$ and $\mu_Y(v)=(\mu^1_Y(v),\ldots,\mu_Y^p(v))^T$ are set to be $\mu^1_X(u)=\cdots=\mu^p_X(u)=\exp(u)$, $\mu^1_Y(v)=\cdots=\mu^{20}_Y(v)=v$ and $\mu^{21}_Y(v)=\cdots=\mu^p_Y(v)=\exp(v)$. The covariance matrix is set as $\Sigma(u)=(u^{|i-j|})_{1\leq i,j\leq p}$.
 
 {\rm Model 3.} We take the same model as Model 2 except that the mean functions are set to be $\mu^1_X(u)=\cdots=\mu^p_X(u)=u$, $\mu^1_Y(v)=\cdots=\mu^{20}_Y(v)=-v$ and $\mu^{21}_Y(v)=\cdots=\mu^p_Y(v)=v$, and the covariance matrix is set to be $\Sigma(u)=(u)_{1\leq i,j\leq p}+(1-u)I_{p}$.

 {\rm Model 4.} We take $d=2$ and let ${\bf U}_1=(U^{(1)}_1,U^{(2)}_1), \ldots, {\bf U}_{n_1}=(U^{(1)}_{n_1},U^{(2)}_{n_1})$ and ${\bf V}_1=(V^{(1)}_1,V^{(2)}_1), \ldots, {\bf V}_{n_2}=(V^{(1)}_{n_2},V^{(2)}_{n_2})$. We generate $U_{i}^{(k)}, V_{j}^{(k)}$ independently from $U[0,1]$ for $1\leq i\leq n_1,  1\leq j \leq n_2, k=1,2$.  We then generate $X_i\sim N(\mu_X({\bf U}_i),\Sigma({\bf U}_i))$, $Y_j\sim N(\mu_Y({\bf V}_j),\Sigma({\bf V}_j))$ for $1\leq i\leq n_1, 1\leq j\leq n_2$, where
 $\mu^1_X({\bf U})=\cdots=\mu^{20}_X({\bf U})=0.5+\sin(U^{(1)}+U^{(2)})$,  $\mu^{21}_X({\bf U})=\cdots=\mu^{p}_X({\bf U})=\cos(U^{(1)}+U^{(2)})$,  $\mu^1_Y({\bf V})=\cdots=\mu^{p}_Y({\bf V})=\cos(V^{(1)}+V^{(2)})$ and $\Sigma({\bf U})=\Big(\frac{|U^{(1)}-U^{(2)}|}{U^{(1)}+U^{(2)}}\Big)_{1\leq i,j\leq p}+\Big(1-\frac{|U^{(1)}-U^{(2)}|}{U^{(1)}+U^{(2)}}\Big)I_{p}$.
 
 \begin{figure}[htbp]
 	\centering
 	\includegraphics[width=0.8\textwidth]{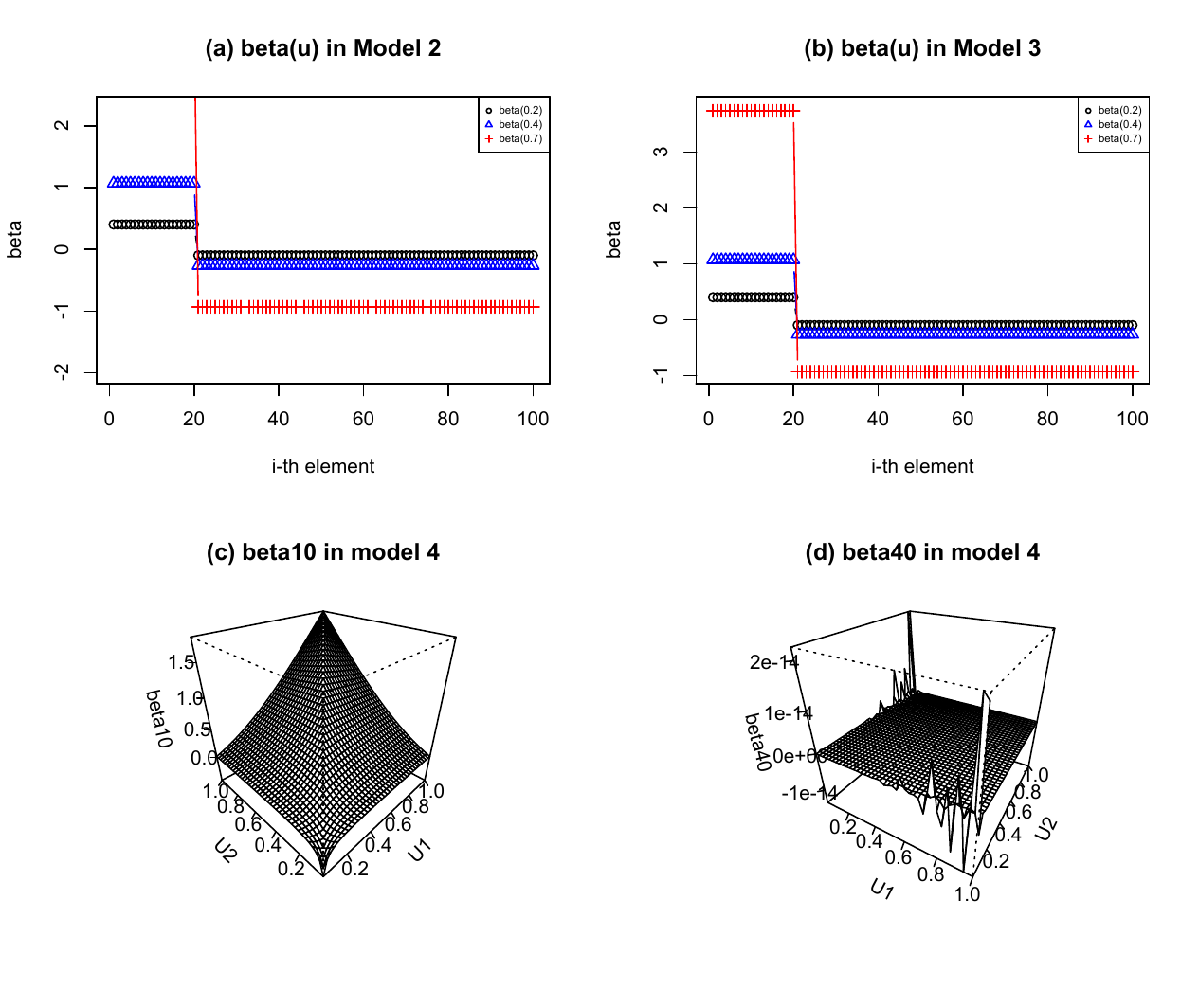}
 	\caption{$\beta({\bf U})$ in Models 2-4 when $p=100$: (a) plot of $\beta(u)$ for $u=0.2, 0.4, 0.7$ under Model 2; (b) plot of $\beta(u)$ for $u=0.2, 0.4, 0.7$ under Model 3; (c) $\beta_{10}(U_1,U_2)$ under Model 4; (d) $\beta_{40}(U_1,U_2)$ under Model 4.}\label{fig2}
 \end{figure}

 Model 1 is a static case where the means and the covariances are independent of the covariate. The other three models are dynamic ones. Under Models 1-4, $\beta(u)$ is approximately sparse in the sense that some of the elements of $|\beta_i(u)|$'s have large values while others are much smaller. Figure 2 (a) and (b) show $\beta(u)$ for $u=0.2,0.4,0.7$ under Models 2 and 3 when $p=100$. Generally, under Model 2, $\beta_1(u),\ldots,\beta_{21}(u)$ are nonzero while $\beta_{22}(u),\ldots,\beta_p(u)$ are very close to zero. Under Model 3, $\beta_1(u),\ldots,\beta_{21}(u)$ are much larger than $\beta_{22}(u),\ldots,\beta_p(u)$, which are not necessarily close to zero. Figure 2 (c) and (d) plot $\beta_{10}(U_1,U_2)$ and $\beta_{40}(U_1,U_2)$ as functions of $U_1,U_2$ under Model 4 when $p=100$. Clearly, $\beta_{1}(U_1,U_2),\ldots, \beta_{21}(U_1,U_2)$ have various shapes as functions of $U_1, U_2$ and $\beta_{22}(U_1,U_2),\ldots, \beta_{p}(U_1,U_2)$ are very close to zero.
 

 For each model we consider $p=50,100, 200$ and $n_1=n_2=100$. We generate 100 samples from population $(X,{\bf U})$ and 100 samples from population $(Y,{\bf V})$ as testing samples to compute the misclassification rate $R_{\rm dlpd}$ of our DLPD rule. Gaussian kernel function is used in our DLPD rule. For comparison, we also use the LPD rule in \cite{Cai}, the support vector machine (SVM) with a linear kernel, and the k-nearest-neighbor (KNN) algorithm to classify these 200 testing samples and compute their misclassification rates, denoted as $R_{\rm lpd}$, $R_{\rm svm}$ $R_{\rm knn}$ respectively.  The  $k$ in KNN is chosen using a bootstrapping algorithm in \cite{ph}.   The optimal Bayes risk is denoted by $R$. The procedure is repeated for 100 times. The mean and standard deviation of the misclassification rates over these 100 replications are reported in Table \ref{sim1}.  
 From Table \ref{sim1}, we can see that the $R_{slpd}$ values are all very close to the optimal misclassification rate $R$, and are relatively smaller than the mean misclassification rates of other methods. Overall, the numerical performance of DLPD is better than other methods. Although Model 1 favors the LPD method, we observe that our DLPD rule works as well as the LPD rule. Interestingly, the linear LPD approach is also performing well in all the cases, implying that the linear method is robust in some sense. From the formulation of the Nadaraya-Watson estimators introduced in Section 2, we know that loosely speaking, LPD can be viewed as a special case of DLPD when the bandwidths tend to infinity. Therefore, practically we would expect DLPD to outperform LPD under dynamic assumptions and work as well as LPD under static assumptions as long as the bandwidths in the numerical study is taken to be large enough.

 \begin{table}
 	\centering
 	\begin{tabular}{|cccc|ccc|}
 		\hline
 		p                                   &       50    &   100   &   200 &       50    &   100   &   200  \\
 		\hline
 		& \multicolumn{3}{c|}{Model 1}  & \multicolumn{3}{c|}{Model 2}  \\
 		$R$                       &   0.083 (-)              & 0.083 (-)     &  0.083 (-)  &   0.041 (-)         & 0.041 (-)     &  0.041 (-)   \\
 		$R_{\rm dlpd}$      &  0.104(0.023)      &  0.110 (0.023)   &  0.111 (0.022)    & 0.086 (0.020)  & 0.102 (0.024)   & 0.108 (0.024) \\
 		$R_{\rm lpd}$         &  0.103 (0.023)       & 0.111 (0.021)    &  0.113 (0.025)  & 0.113 (0.025)  &   0.116 (0.021) & 0.115 (0.027) \\
 		$R_{\rm svm}$               & 0.152(0.033)  & 0.157 (0.027)   & 0.160 (0.028)    & 0.159 (0.039)  & 0.161 (0.039)  & 0.162 (0.033)\\
 		$R_{\rm knn}$               & 0.143(0.029)  & 0.172 (0.036)  & 0.225 (0.038)   & 0.155 (0.038)  & 0.178 (0.052)   & 0.210 (0.062)\\
 		\hline
 		& \multicolumn{3}{c|}{Model 3}  & \multicolumn{3}{c|}{Model 4}  \\
 		$R$                       &   0.092 (-)  & 0.083 (-)   &   0.079 (-)    &  0.095 (-)  &  0.084 (-) &   0.079 (-)\\
 		$R_{\rm dlpd}$       &   0.145 (0.025)    &  0.141 (0.025)    &  0.143 (0.027)     & 0.191 (0.032) & 0.189 (0.030)  &  0.187 (0.033) \\
 		$R_{\rm lpd}$           &   0.162 (0.025)     &  0.153 (0.026)   &  0.154 (0.027)  & 0.199 (0.036)  & 0.194 (0.033) & 0.197 (0.032)  \\
 		$R_{\rm svm}$       &   0.161 (0.031) & 0.148 (0.028)  & 0.137 (0.024)   &  0.227 (0.039)   & 0.226 (0.036) & 0.217 (0.042)\\
 		$R_{\rm knn}$        &  0.194 (0.026) & 0.217 (0.029)   & 0.228 (0.034)  &  0.283 (0.054)  & 0.333 (0.062) & 0.386 (0.053)\\
 		\hline
 	\end{tabular}
 	\caption{The misclassification rates of DLPD, LPD, SVM, KNN, and the optimal misclassification rate R under Models 1-4. }\label{sim1}
 \end{table}
 
 \subsection{Breast cancer study}
 Breast cancer is the second leading cause of deaths from cancer among women in the United States. Despite major progresses in breast cancer treatment, the ability to predict the metastatic behavior of tumor remains limited. This breast cancer study was first reported in \cite{van2002gene} where 97 lymph node-negative breast cancer patients, 55 years old or younger, participated in this study. Among them, 46 developed distant metastases within 5 years ($X$ class) and 51 remained metastases free for at least 5 years ($Y$ class). In this study, covariates including clinical risk factors (tumor size, age, histological grade etc.)  as well as expression levels for 24,481 gene probes were collected. The histograms of the tumor sizes for both classes are presented in Figure \ref{figts}. Shapiro's normality test is used to test the normality of the tumor size with $p$-value $<0.001$ for class $X$ and $0.221$ for class $Y$, indicating that it might not be suitable to treat tumor size as one of the covariates to conduct classification using the LPD rule.
 On the other hand, as introduced before, Figure \ref{figgene} indicates that the gene expression levels for patients in the $X$ class and the $Y$ class vary differently as tumor size changes. We thus set the tumor size as the dynamic factor. For comparison, we also consider the LPD rule with or without including the tumor size as one of the covariates, denoted as ``LPD with $U$" and ``LPD without $U$", respectively. The intercept is chosen according to Proposition 2 of \cite{Maiqing2012}. For simplicity, we use the $p$ genes with the largest absolute $t$-statistic values between the two groups for discriminant analysis, and in our study we set $p=25, 50, 100$ and $200$. We randomly choose 92 observations as training samples and set the rest 5 observations as test samples. This procedure is repeated for 100 times. The mean misclassification rate and its standard deviation over 100 replications are reported in Table \ref{bc}. From the results we can see that no significant improvement is observed when the tumor size is included as one of the covariate in the LPD rule. However, when it is set to be a dynamic factor as in our DLPD rule, the misclassification rate of is seen to be reduced. 
 \begin{figure}[htbp]
 	\centering
 	\includegraphics[width=0.8\textwidth]{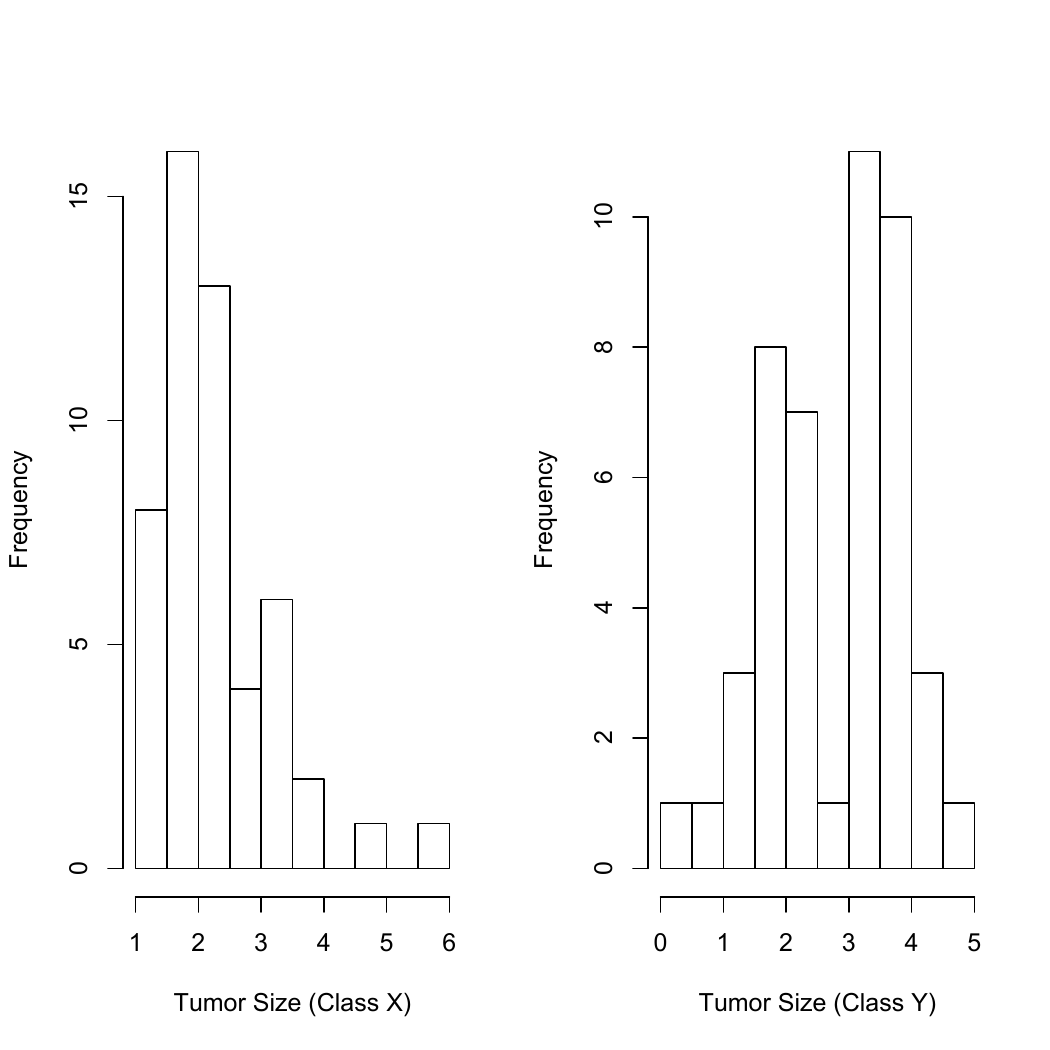}
 	\caption{Histogram of  tumor sizes in the $X$ and $Y$ classes of the Breast Cancer data.}\label{figts}
 \end{figure}
 
 %
 
 \begin{table}
 	\centering
 	\begin{tabular}{|cccc|}
 		\hline
 		p	                 &     	LPD without $U$   &  LPD with $U$ & DLPD     \\
 		\hline
 		25                        & 0.198 (0.018)  & 0.198 (0.018)   & 0.190 (0.018)  \\
 		50                       & 0.184 (0.016)  & 0.184 (0.016)  & 0.170 (0.016)  \\
 		100                       & 0.186 (0.015)  & 0.184 (0.016)   & 0.172 (0.015)   \\
 		200                       & 0.216 (0.019)  & 0.216 (0.020)   & 0.204 (0.019)   \\
 		\hline
 	\end{tabular}
 	\caption[]{ Mean classification rate and its standard deviation for the Breast Cancer study over 100 replications. }\label{bc}
 \end{table}

 \section{Conclusion and discussion}
 We have proposed a new and simple model for high dimensional linear discriminant analysis when data is high-dimensional and the local features of the data can play important roles in classification.
 Our approach combines the simplicity of kernel smoothing and the powerful method of regularization for studying high dimensional problems. We have established uniform Bernstein-type inequalities for our high-dimensional nonparametric estimators, and shown that the risk function of the proposed DLPD rule converges to the optimal Bayes risk in probability under general sparsity assumptions, uniformly over the range of the covariates. The minimax lower bounds for the estimation of the Bayes risk are also established, and it is shown that the misclassification rate of our proposed rule is minimax-rate optimal. Both the uniform convergence  and the minimax results appear to be new in the literature of classification.
 
 A limitation of the linear discriminant rule is its Gaussian assumption. An
 immediate generalization of our method is to allow a more flexible family of
 distributions, for example, the transnormal family in
 \cite{Lin}. On the other hand, the smoothness assumption (A6) might not be appropriate in some cases. For example, discontinuity of the brain activity is common in certain applications \citep{Vairavan2009},  leading to discontinuous $\mu({\bf u})$ and $\Sigma({\bf u})$ which are usually modeled as piecewise-continuous functions. This gives rise to a similar problem as ours where the aim is to identify the number of the discontinuous points and their locations.
 We also remark that the assumption of independent observations $(X_i,{\bf U}_i), i=1,\ldots, n_1, (Y_j,{\bf V}_j), j=1,\ldots, n_2$ can be relaxed to that $X_i| {\bf U}_i,i=1,\ldots,n_1, Y_j| {\bf V}_j, j=1,\ldots, n_2$ are 
 weakly dependent, which might enable us to incorporate temporal correlations. Under suitable weakly dependence assumptions such as strongly mixing \citep{Merl},  
 estimators of the components in the Bayes rule proposed in Section 2 can be shown consistent. Nevertheless, for time series data, it would be interesting to incorporate our DLPD rule with time series models so as to capture the structures of the covariance matrix and the dependency among the sequences of observations. This is beyond the scope of the current paper and will be studied elsewhere.

 In our work, we have assumed that $\Sigma_X({\bf u})=\Sigma_Y({\bf u})$ which seems
 to be reasonable for the data analysis. It is however worth considering
 problems where covariances are dynamic but not equal.  Finally, we have only discussed binary
 classification in this paper. It will be interesting to extend this work to study 
 multiclass classification \citep{Pan:etal:2015,Mai:etal:2015},  and other recent approaches which considered more complex structures  \citep{lr}. Last but not least, in the unbalanced case, the cut-off point in the Bayes procedure becomes $\log(\pi_2/\pi_1)$, which is usually estimated by $\log(n_2/n_1)$. Here $\pi_1$ is the prior probability of observing a sample from Class $X$ and $\pi_2=1-\pi_1$. However, as pointed out in \cite{Maiqing2012}, the problem of finding the right cut-off point receives little attention in the literature and it is also important to find a optimal estimator of the cut-off points to improve classification accuracy.
 
 One alternative of our DLPD rule is to develop a dynamic logistic regression model in which a rule is obtained by minimizing a dynamic version of the penalized entropy loss. It is well know that under Gaussian assumptions, logistic regression and LDA are equivalent in that the solution (in a population sense) of logistic regression is exactly the Bayes rule. For the fixed dimension and static case, earlier numerical studies have shown that logistic regression and LDA would give the same linear discriminant function \citep{Press}, while theoretically, \cite{efron1975} showed that LDA is more efficient than logistic regression under the Gaussian assumptions. On one hand, it is worth exploring the theoretical properties of logistic regression based rules under the more general sparsity assumption (3.1). On the other hand, it would be interesting and challenging to compare the efficiency of logistic regression rules and LDA rules under both high dimensional and non-stationary assumptions.
 
 As pointed out in \cite{candes}, a two-stage procedure generally produces better estimation results in the strict sparse case where many parameters are zero. When the true discriminant direction $\beta(u)$ is sparse, we may use a two-stage procedure similar to the one in \cite{spl}. That is, in the first stage, the same bandwidth is used to obtain a sparse first stage estimator $\hat{\beta}(u)$. In the second stage, we can apply our approach again to estimate the identified nonzero elements in $\hat{\beta}(u)$. If the number of nonzeros in the second stage is very low, different bandwidths can be considered for different elements.  
 
 \appendix
 
 \section*{Appendix A}

 Before we proceed to the proofs for the main theorems, we introduce some technical lemmas.

 \begin{lem}\label{kernel}
 	Suppose $\epsilon_n\rightarrow 0$, $n|H_x|\epsilon_n^2\rightarrow \infty$ and there exists a large enough constant $C_h$ such that $\epsilon^2_n>C_h(h_x^4+h_y^4)$. Under assumptions (A1)-(A6), there exist constants $C_1>0$ and $C_2>0$ such that
 	\begin{eqnarray*}
 		&&P\Big(\sup_{{\bf u}\in \Omega_d}\Big|\frac{1}{n_1}\sum_{i=1}^{n_1}{\bf K}_{H_x}({\bf U}_i-{\bf u})-f({\bf u})\Big|\geq \epsilon_n\Big)\\
 		&\leq& C_1\left(\frac{n}{\log p}\right)^{\frac{4}{4+d}}\exp\left\{-C_2n \left(\frac{\log p}{n}\right)^{\frac{d}{4+d}}\epsilon_n^2\right\};
 	\end{eqnarray*}
 	and
 	\begin{eqnarray*}
 		&&P\Big(\sup_{{\bf u}\in \Omega_d}\Big|\frac{1}{n_2}\sum_{i=1}^{n_2}{\bf K}_{H_y}({\bf V}_i-{\bf u})-f({\bf u})\Big|\geq \epsilon_n\Big)\\
 		&\leq& C_1\left(\frac{n}{\log p}\right)^{\frac{4}{4+d}}\exp\left\{-C_2n \left(\frac{\log p}{n}\right)^{\frac{d}{4+d}}\epsilon_n^2\right\}.
 	\end{eqnarray*}
 \end{lem}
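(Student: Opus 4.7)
The plan is to prove this by a standard bias--variance decomposition together with a discretization argument for the uniform control. Write
\[
\frac{1}{n_1}\sum_{i=1}^{n_1}{\bf K}_{H_x}({\bf U}_i-{\bf u}) - f({\bf u}) = A_{n_1}({\bf u}) + b({\bf u}),
\]
where $A_{n_1}({\bf u}) = \frac{1}{n_1}\sum_{i=1}^{n_1}{\bf K}_{H_x}({\bf U}_i-{\bf u}) - E{\bf K}_{H_x}({\bf U}-{\bf u})$ is the stochastic part and $b({\bf u}) = E{\bf K}_{H_x}({\bf U}-{\bf u}) - f({\bf u})$ is the deterministic bias. It suffices to control each by $\epsilon_n/2$ uniformly in ${\bf u}$.

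For the bias $b({\bf u})$, a change of variables gives $E{\bf K}_{H_x}({\bf U}-{\bf u}) = \int {\bf K}({\bf z}) f({\bf u}+H_x{\bf z})\,d{\bf z}$, so a second-order Taylor expansion of $f$ from (A6), combined with the symmetry of $K$ and the finite moment conditions of (A1), kills the first-order term and yields $\sup_{{\bf u}\in\Omega_d}|b({\bf u})| \leq C(h_{x1}^2+\cdots+h_{xd}^2)$ uniformly on the interior. The hypothesis $\epsilon_n^2 > C_h(h_x^4+h_y^4)$ with $C_h$ large then forces $|b({\bf u})| \leq \epsilon_n/2$. For pointwise control of $A_{n_1}({\bf u})$, I apply Bernstein's inequality to the i.i.d.\ bounded variables ${\bf K}_{H_x}({\bf U}_i-{\bf u})$: by (A1), each is bounded by $K_1^d/|H_x|$, and by (A3) their variance is at most $E{\bf K}_{H_x}^2({\bf U}-{\bf u}) \leq \sup_{{\bf u}}f({\bf u}) \cdot \|K\|_2^{2d}/|H_x|$. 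Bernstein then gives, for $\epsilon_n$ small,
\[
P(|A_{n_1}({\bf u})| \geq \epsilon_n/4) \leq 2\exp\bigl(-C\,n|H_x|\epsilon_n^2\bigr) = 2\exp\!\Bigl(-C\,n(\log p/n)^{d/(4+d)}\epsilon_n^2\Bigr)
\]
where the last equality uses (A4).

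To lift this to a uniform bound, cover the compact set $\Omega_d$ by a $\delta_n$-grid $\{{\bf u}_k\}_{k=1}^N$ with $N \asymp \delta_n^{-d}$. Apply the union bound at these grid points, then control the off-grid error via the Lipschitz property of ${\bf u}\mapsto {\bf K}_{H_x}({\bf U}_i-{\bf u})$. From (A1), $|K'|\leq K_2$, so differentiation under the kernel gives a Lipschitz constant at most $C/(\min_i h_{xi}\cdot|H_x|)$ for ${\bf K}_{H_x}(\cdot-{\bf u})$ in ${\bf u}$; the same bound holds for the expectation $E{\bf K}_{H_x}({\bf U}-{\bf u})$. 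Choosing $\delta_n$ small enough so that $C\delta_n/(\min_i h_{xi}\cdot|H_x|)\leq \epsilon_n/4$, say $\delta_n \asymp \epsilon_n\cdot h_x^{d+1}$, makes the off-grid error at most $\epsilon_n/4$. Combining the grid bound with the union bound yields the target inequality, with the polynomial prefactor absorbed into $C_1(n/\log p)^{4/(4+d)}$ and $C_2$ adjusted so that the argument of the exponent dominates the logarithm of the covering number (which is possible since $n|H_x|\epsilon_n^2\to\infty$ by hypothesis). The proof for the ${\bf V}_i$-sum is identical with $H_y$ in place of $H_x$.

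The main obstacle is the uniform step: one must balance the grid spacing $\delta_n$ against both the Lipschitz constant (which blows up like $h_x^{-(d+1)}$) and the required exponential rate, checking that the resulting covering number does not overwhelm the Bernstein exponent. Provided $n|H_x|\epsilon_n^2/\log N \to \infty$, the polynomial prefactor can safely be swallowed into the constants. A secondary technical point is handling the boundary of $\Omega_d$, where the Taylor bias expansion requires a mild thickening (or the essential-supremum convention at the end of the assumption list) so that the second derivatives of $f$ remain uniformly bounded in a neighborhood of $\Omega_d$.
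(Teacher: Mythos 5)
Your proposal follows essentially the same route as the paper's own proof: a bias-variance decomposition, a moment-generating-function/Bernstein-type bound for the centered sum at a fixed point, a finite grid covering of the compact support $\Omega_d$, Lipschitz control of the off-grid error via $|K'|\leq K_2$, union bound over the grid, and absorption of the polynomial covering number into the exponential since $n|H_x|\epsilon_n^2\to\infty$. The only superficial difference is the choice of grid spacing (you take $\delta_n\asymp\epsilon_n h_x^{d+1}$; the paper uses $\delta_n\asymp h_x^4$, matching boxes of side $2h_{xj}^4$), which changes only the polynomial prefactor and is harmless for the same reason you give.
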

 \begin{proof}
 	
 	Without loss of generality, assume that $\Omega_d=[a_1,b_1]\times\cdots[a_d,b_d]$ and decompose it as $\Omega_d=\cup_{1\leq i_j \leq q_j, j=1,\ldots,d} \omega_{i_1,\ldots,i_d}$, where $q_j=\frac{b_j-a_j}{2h_{xj}^4}$ for $j=1,\ldots, d$ and $\omega_{i_1,\ldots,i_d}=[a_1+2(i_1-1)h_{x1}^4, a_1+2i_1h_{x1}^4]\times\cdots\times[a_d+2(i_d-1)h_{xd}^4, a_d+2i_dh_{xd}^4]$. Denote $W_i({\bf u})=[{\bf K}_{H_x}({\bf U}_i-{\bf u})-E{\bf K}_{H_x}({\bf U}_i-{\bf u})]/n_1$.  We then have:
 	for any ${\bf u}\in \Omega_d$, using assumption (A1) and Markov's inequality we have, for any $0<t<\frac{n_1|H_x|}{2K_1}$,
 	\begin{eqnarray}\label{lamma1.1}
 		&&P\left(\Big|\frac{1}{n_1}\sum_{j=1}^{n_1}{\bf K}_{H_x}({\bf U}_j-{\bf u})-E{\bf K}_{H_x}({\bf U}_i-{\bf u})\Big|>\epsilon_n \right) \\
 		&\leq&  2 \exp\{-t\epsilon_n\}\Pi_{i=1}^{n_1}E  \exp\{tW_i({\bf u})\}       \nonumber\\
 		&\leq& 2 \exp\{-t\epsilon_n\}\Pi_{i=1}^{n_1}\{1+t^2EW_i({\bf u})^2 \}  \nonumber\\
 		&\leq& 2\exp\left\{-t\epsilon_n+\sum_{i=1}^{n_1}t^2E W_i({\bf u})^2\right\} \nonumber\\
 		&\leq& 2\exp\left\{-t\epsilon_n+\frac{Ct^2}{n_1|H_x|}\right\}, \nonumber
 	\end{eqnarray}
 	for some large enough constant $C$. By setting $t=(2C)^{-1}n_1|H_x|\epsilon_n$, we have:
 	\begin{equation}\label{lemma1.2}
 		P\left(\Big|\frac{1}{n_1}\sum_{j=1}^{n_1}{\bf K}_{H_x}({\bf U}_j-{\bf u})-E{\bf K}_{H_x}({\bf U}_i-{\bf u})\Big|>\epsilon_n \right) \leq 2\exp\left\{-\frac{\epsilon_n^2n_1|H_x|}{4C}\right\}.
 	\end{equation}
 	Write ${\bf u}_{i_1,\ldots,i_d}=(a_1+2i_1h_{x1}^4,\cdots, a_d+2i_dh_{xd}^4)^T$ for $1\leq i_j \leq q_j, j=1,\ldots,d$. Note that
 	\begin{align}\label{lemma1.3}
 		&\sup_{{\bf u}\in \Omega_d}\Big|\frac{1}{n_1}\sum_{j=1}^{n_1}{\bf K}_{H_x}({\bf U}_j-{\bf u})-E{\bf K}_{H_x}({\bf U}_i-{\bf u})\Big| \\
 		\leq&
 		\max_{1\leq i_j \leq q_j, j=1,\ldots,d}\Big|\frac{1}{n_1}\sum_{j=1}^{n_1}{\bf K}_{H_x}({\bf U}_j-{\bf u}_{i_1,\ldots,i_d})-E{\bf K}_{H_x}({\bf U}_i-{\bf u}_{i_1,\ldots,i_d})\Big|+ \nonumber\\
 		&\max_{1\leq i_j \leq q_j, j=1,\ldots,d}\sup_{{\bf u}\in\omega_{i_1,\ldots,i_d}}\Big|\frac{1}{n_1}\sum_{j=1}^{n_1}{\bf K}_{H_x}({\bf U}_j-{\bf u})-\frac{1}{n_1}\sum_{j=1}^{n_1}{\bf K}_{H_x}({\bf U}_j- {\bf u}_{i_1,\ldots,i_d})  \nonumber\\
 		&-[E{\bf K}_{H_x}({\bf U}_i-{\bf u})-E{\bf K}_{H_x}({\bf U}_i-{\bf u}_{i_1,\ldots,i_d})] \Big|. \nonumber
 	\end{align}
 	Denote ${\rm diag}(H^4_x)=(h^4_{x1},\ldots,h_{xd}^4)^T$. For each $(i_1,\ldots,i_d)$, using the mean value theorem and assumption (A1) we have, there exist random scalars $0\leq R_{i_1,\ldots,i_d}\leq 2$ depending on ${\bf U}_1,\ldots,{\bf U}_{n_1}$ such that
 	\begin{align}\label{lemma1.4}
 		&\sup_{{\bf u}\in\omega_{i_1,\ldots,i_d}}\Big|\frac{1}{n_1}\sum_{j=1}^{n_1}{\bf K}_{H_x}({\bf U}_j-{\bf u})-\frac{1}{n_1}\sum_{j=1}^{n_1}{\bf K}_{H_x}({\bf U}_j-{\bf u}_{i_1,\ldots,i_d}) \Big|   \\
 		\leq& \frac{2{\rm tr}(H^3_x)}{n_1}\sum_{j=1}^{n_1} \sup_{u\in\omega_{i_1,\ldots,i_d}}\Big|{\bf K}'_{H_x}({\bf U}_j-{\bf u}_{i_1,\ldots,i_d}+R_{i_1,\ldots,i_d}{\rm diag}(H_x^4))  \Big|_\infty \nonumber\\
 		\leq& 2K_2{\rm tr}(H^3_x). \nonumber
 	\end{align}
 	On the other hand, it can be easily shown that
 	\begin{equation}\label{lemma1.5}
 		\max_{1\leq i_j \leq q_j, j=1,\ldots,d}\sup_{{\bf u}\in\omega_{i_1,\ldots,i_d}}[E{\bf K}_{H_x}({\bf U}_i-{\bf u})-E{\bf K}_{H_x}({\bf U}_i-{\bf u}_{i_1,\ldots,i_d})] =\emph{O}({\rm tr}(H^4_x)).
 	\end{equation}
 	Combining (\ref{lemma1.2}), (\ref{lemma1.3}), (\ref{lemma1.4}) and (\ref{lemma1.5}) with the assumption on $\epsilon_n$, we have
 	\begin{align}\label{lemma1.6}
 		&P\left(\sup_{{\bf u}\in \Omega_d}\Big|\frac{1}{n_1}\sum_{j=1}^{n_1}{\bf K}_{H_x}({\bf U}_j-{\bf u})-E{\bf K}_{H_x}({\bf U}_i-{\bf u})\Big|>\epsilon_n \right)  \\
 		&\leq c_1q_1\cdots q_d \exp\{-c_2n_1|H_x|\epsilon_n^2\}  \nonumber\\
 		&=C_1\left(\frac{n_1}{\log p}\right)^{4/(4+d)}\exp\{-C_2n_1h_x^{d}\epsilon_n^2\}, \nonumber
 	\end{align}
 	for  some constants $c_1>0, c_2>0, C_1>0, C_2>0$. 
 	The first argument of Lemma \ref{kernel} is then proved by combining (\ref{lemma1.6}) and the following well known result (see for example \cite{Pagan}):
 	\begin{eqnarray*}
 		\sup_{u\in \Omega_d}|E{\bf K}_{H_x}({\bf U}_i-{\bf u})-f({\bf u})|=\emph{O}({\rm tr}H^2).
 	\end{eqnarray*}
 	The second argument of Lemma \ref{kernel}  can be proved similarly.
 	
 \end{proof}
 
 Lemma \ref{mean} and Lemma \ref{covariance} below give the Bernstein-type inequalities (uniformly in ${\bf u}\in \Omega_d$) for the functional estimators of the means and covariance matrix defined as in (\ref{muhatx}), (\ref{muhaty}) and (\ref{sigmahat}).  We only provide the proof for Lemma \ref{covariance} and the proof for Lemma \ref{mean} is similar.
 
 \begin{lem}\label{mean}
 	Suppose $\epsilon_n\rightarrow 0$, $n|H_x|\epsilon_n^2\rightarrow \infty$ and there exists a large enough constant $C_h$ such that $\epsilon^2_n>C_h(h_x^4+h_y^4)$.  Under assumptions (A1)-(A6), there exist constants $C_3>0$ and $C_4>0$ such that
 	\begin{eqnarray*}
 		&&P\bigg(\max_{1\leq i\leq p}\sup_{{\bf u}\in \Omega_d}|\hat{\mu}_X^i({\bf u})-\mu_X^i({\bf u})|\geq \epsilon_n\bigg)\\
 		&\leq& C_3p\left(\frac{n}{\log p}\right)^{\frac{4}{4+d}}\exp\left\{-C_4n \left(\frac{\log p}{n}\right)^{\frac{d}{4+d}}\epsilon_n^2\right\},
 	\end{eqnarray*}
 	and
 	\begin{eqnarray*}
 		&&P\bigg(\max_{1\leq i\leq p}\sup_{{\bf u}\in \Omega_d}|\hat{\mu}_Y^i({\bf u})-\mu_Y^i({\bf u})|\geq \epsilon_n\bigg)\\
 		&\leq&C_3p\left(\frac{n}{\log p}\right)^{\frac{4}{4+d}}\exp\left\{-C_4n \left(\frac{\log p}{n}\right)^{\frac{d}{4+d}}\epsilon_n^2\right\}.
 	\end{eqnarray*}
 \end{lem}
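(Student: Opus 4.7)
The plan is to reduce everything to the Bernstein-type machinery already developed for Lemma \ref{kernel}, by decomposing the Nadaraya--Watson estimator into a denominator, a bias, and a stochastic part. Write
\[
\hat{\mu}^i_X({\bf u}) - \mu^i_X({\bf u}) \;=\; \frac{A_i({\bf u}) + B_i({\bf u})}{\hat{f}_{n_1}({\bf u})},
\]
where $\hat{f}_{n_1}({\bf u}) = n_1^{-1}\sum_j {\bf K}_{H_x}({\bf U}_j - {\bf u})$, $A_i({\bf u}) = n_1^{-1}\sum_j {\bf K}_{H_x}({\bf U}_j-{\bf u})[X_{ji} - \mu^i_X({\bf U}_j)]$, and $B_i({\bf u}) = n_1^{-1}\sum_j {\bf K}_{H_x}({\bf U}_j-{\bf u})[\mu^i_X({\bf U}_j) - \mu^i_X({\bf u})]$. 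By (A3) and Lemma \ref{kernel}, $\inf_{{\bf u}\in\Omega_d}\hat{f}_{n_1}({\bf u}) \geq f_{\min}/2$ off an event of probability $O(p^{-M})$, so it suffices to control the numerator uniformly in $i\leq p$ and ${\bf u}\in\Omega_d$. The bias piece is routine: a second-order Taylor expansion of $\mu^i_X$, combined with the kernel symmetry in (A1) and the smoothness in (A6), gives $\sup_{\bf u}|EB_i({\bf u})| = O(h_x^2)$, while $B_i - EB_i$ is handled by the \emph{same} grid/union-bound argument as in Lemma \ref{kernel} since its summands are a.s.\ bounded; the hypothesis $\epsilon_n^2 > C_h(h_x^4 + h_y^4)$ absorbs this contribution.

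The heart of the proof is a pointwise tail bound on $A_i({\bf u})$. I would condition on $({\bf U}_1,\ldots,{\bf U}_{n_1})$: since $X_{ji}\mid {\bf U}_j \sim N(\mu^i_X({\bf U}_j),\sigma_{ii}({\bf U}_j))$, the variable $A_i({\bf u})$ is conditionally centered Gaussian with conditional variance
\[
V_{n_1}({\bf u}) \;=\; n_1^{-2}\sum_{j=1}^{n_1} {\bf K}_{H_x}({\bf U}_j-{\bf u})^2 \sigma_{ii}({\bf U}_j).
\]
By (A1) and (A5), $EV_{n_1}({\bf u}) = O\bigl((n_1|H_x|)^{-1}\bigr)$, and the reasoning used for Lemma \ref{kernel} applied now to the (still bounded) squared kernel shows that $V_{n_1}({\bf u}) \leq C/(n_1|H_x|)$ uniformly in ${\bf u}$ with probability $1 - O(p^{-M})$. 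Conditional Chernoff then delivers, for each fixed $i$ and ${\bf u}$,
\[
P\bigl(|A_i({\bf u})| > \tau\bigr) \;\leq\; 2\exp\bigl\{-c\, n_1|H_x|\,\tau^2\bigr\}.
\]

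To promote this to a bound that is uniform over ${\bf u}\in\Omega_d$ and $i\leq p$, I would re-use the grid of spacing proportional to $\mathrm{diag}(H_x^4)$ introduced in the proof of Lemma \ref{kernel}, apply the union bound over its $O((n/\log p)^{4/(4+d)})$ vertices together with the $p$ coordinates, and control the oscillation of $A_i$ between neighbouring vertices via the derivative bound $|K'|\leq K_2$ in (A1); the mean value theorem then yields a perturbation of order $\mathrm{tr}(H_x^3)/|H_x|$ times $n_1^{-1}\sum_j |X_{ji}-\mu_X^i({\bf U}_j)|$, which is harmless under $\epsilon_n^2 \geq C_h h_x^4$. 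Taking $\tau = \epsilon_n /(2 f_{\min})$ then gives the stated inequality for $\hat\mu_X^i$, and the bound for $\hat\mu_Y^i$ follows after the obvious relabelling. The delicate step, and where I would spend the most care, is precisely the unboundedness of the summands in $A_i({\bf u})$: the conditioning device above converts the problem into one about a Gaussian with a random variance that can itself be tamed by a squared-kernel version of Lemma \ref{kernel}, which is cleaner than a truncation-based alternative.
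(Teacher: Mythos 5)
Your proof is correct, but it takes a genuinely different route from the paper's. The paper proves Lemma \ref{mean} implicitly (it only spells out the analogous Lemma \ref{covariance} and says "the proof for Lemma \ref{mean} is similar"), and that argument works directly on the unconditional summands $W_j({\bf u}) = [{\bf K}_{H_x}({\bf U}_j-{\bf u})X_{ji} - E{\bf K}_{H_x}({\bf U}_j-{\bf u})X_{ji}]/n_1$: one applies Markov to $\exp\{t\sum_j W_j({\bf u})\}$, bounds the MGF by $\prod_j(1+t^2 EW_j^2)$ for $t$ small relative to $n_1|H_x|$, optimizes $t$, and then runs the same grid/union bound and bias computation that you also invoke. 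You instead split the numerator as $A_i + B_i$ and, for the stochastic part $A_i$, \emph{condition on} $({\bf U}_1,\ldots,{\bf U}_{n_1})$ so that $A_i({\bf u})$ becomes an exact centered Gaussian with a random variance $V_{n_1}({\bf u})$ that you control via a squared-kernel concentration bound (or, even more simply, via ${\bf K}_{H_x}^2 \leq K_1^d|H_x|^{-1}{\bf K}_{H_x}$ together with Lemma \ref{kernel}'s upper bound on $\hat f_{n_1}$). The conditioning device is arguably cleaner here: it sidesteps the sub-exponential MGF bookkeeping that the paper's unconditional argument needs because ${\bf K}_{H_x}({\bf U}_j-{\bf u})X_{ji}$ is unbounded, and it exploits the Gaussianity assumption on $X$ in a transparent way. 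The trade-off is that the paper's direct MGF route transfers verbatim to the covariance estimator (Lemma \ref{covariance}, where the summand $X_{ki}X_{kj}$ is no longer conditionally Gaussian), whereas your conditioning trick is tailored to the mean case. Both approaches reuse the same $h_x^4$-spaced grid, the same derivative-based oscillation control, and the same bias bound $O(h_x^2)$ absorbed by $\epsilon_n^2 > C_h(h_x^4+h_y^4)$, so they lead to the identical final inequality.
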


 \begin{lem}\label{covariance}
 	Suppose $\epsilon_n\rightarrow 0$, $n|H_x|\epsilon_n^2\rightarrow \infty$ and there exists a large enough constant $C_h$ such that $\epsilon^2_n>C_h(h_x^4+h_y^4)$. Under assumptions (A1)-(A6), there exist constants $C_5>0$ and $C_6>0$ such that
 	\begin{eqnarray*}
 		&&P\bigg(\max_{1\leq i,j\leq p}\sup_{{\bf u}\in \Omega_d}|\hat{\sigma}_{ij}({\bf u})-\sigma_{ij}({\bf u})|\geq \epsilon_n\bigg)\\
 		&\leq& C_5p^2\left(\frac{n}{\log p}\right)^{\frac{4}{4+d}}\exp\left\{-C_6n \left(\frac{\log p}{n}\right)^{\frac{d}{4+d}}\epsilon_n^2\right\}.
 	\end{eqnarray*}
 \end{lem}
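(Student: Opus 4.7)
The plan is to reduce the claim to an entrywise uniform bound for $\hat{\Sigma}_X({\bf u})-\Sigma({\bf u})$ (the argument for $\hat{\Sigma}_Y$ being identical), since $\hat{\Sigma}({\bf u})=(n_1/n)\hat{\Sigma}_X({\bf u})+(n_2/n)\hat{\Sigma}_Y({\bf u})$. Writing
\[
\hat{\sigma}^X_{ij}({\bf u})=\frac{\frac{1}{n_1}\sum_{\ell}{\bf K}_{H_x}({\bf U}_\ell-{\bf u})X_{\ell i}X_{\ell j}}{\frac{1}{n_1}\sum_{\ell}{\bf K}_{H_x}({\bf U}_\ell-{\bf u})}-\hat{\mu}^i_X({\bf u})\hat{\mu}^j_X({\bf u}),
\]
I would split the deviation $\hat{\sigma}^X_{ij}({\bf u})-\sigma^X_{ij}({\bf u})$ into a ``second moment'' part and a ``mean product'' part. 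The mean product part $\hat{\mu}_X^i\hat{\mu}_X^j-\mu_X^i\mu_X^j=(\hat{\mu}_X^i-\mu_X^i)\hat{\mu}_X^j+\mu_X^i(\hat{\mu}_X^j-\mu_X^j)$ is handled directly by Lemma \ref{mean}: the mean functions are uniformly bounded by smoothness and compact support (assumptions (A3),(A6)), and with high probability so are the estimators.

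For the second moment part, the denominator is bounded below by $\essinf f/2>0$ uniformly with probability $1-O(p^{-M})$ thanks to Lemma \ref{kernel}, so the task reduces to bounding
\[
\sup_{{\bf u}\in\Omega_d}\Bigl|\frac{1}{n_1}\sum_{\ell=1}^{n_1}{\bf K}_{H_x}({\bf U}_\ell-{\bf u})X_{\ell i}X_{\ell j}-E\{{\bf K}_{H_x}({\bf U}-{\bf u})X_iX_j\}\Bigr|
\]
uniformly in ${\bf u}$ and in $(i,j)$, together with the deterministic bias $|E\{{\bf K}_{H_x}({\bf U}-{\bf u})X_iX_j\}-f({\bf u})\sigma^X_{ij}({\bf u})-f({\bf u})\mu^i_X({\bf u})\mu^j_X({\bf u})|=O(\mathrm{tr}(H_x^2))$, which is controlled by the assumption $\epsilon_n^2>C_h(h_x^4+h_y^4)$. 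I would then mimic the two-step proof of Lemma \ref{kernel}: (i) a pointwise Bernstein bound at each ${\bf u}$, and (ii) a grid of spacing $\asymp h_x^4$ over $\Omega_d$, where the fluctuation between a grid point and its cell is controlled by the derivative bound $\sup|K'|<K_2$ and the boundedness of the $X_\ell$'s on a high-probability event.

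The main obstacle is that $X_{\ell i}X_{\ell j}$ has only sub-exponential tails, so a naive Bernstein bound does not apply. I would resolve this by truncation: let $T_n=C_0\sqrt{\log p}$ and define $\tilde{X}_{\ell i}=X_{\ell i}\mathbf{1}\{|X_{\ell i}|\le T_n\}$. By Gaussian tail bounds with assumption (A5), the event $\{\max_{\ell,i}|X_{\ell i}|\le T_n\}$ has probability at least $1-O(p^{-M})$ for $C_0$ large enough, so replacing $X$ by $\tilde{X}$ is free of charge; the truncation bias $|E\{{\bf K}_{H_x}({\bf U}-{\bf u})(X_iX_j-\tilde{X}_i\tilde{X}_j)\}|$ is also negligible because $E[X_iX_j\mathbf{1}\{|X_i|>T_n\}\mid{\bf U}]$ decays faster than any polynomial in $p$. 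The truncated summands are bounded by $T_n^2 K_1|H_x|^{-1}=O(\log p/|H_x|)$ with conditional variance $O(1/|H_x|)$, and Bernstein's inequality yields for fixed $(i,j,{\bf u})$
\[
P\Bigl(\bigl|\tfrac{1}{n_1}\textstyle\sum_\ell\{\cdot\}\bigr|>\epsilon_n\Bigr)\le 2\exp\Bigl\{-\frac{c\,n_1|H_x|\epsilon_n^2}{1+T_n^2\epsilon_n}\Bigr\}=2\exp\{-c'n_1|H_x|\epsilon_n^2\},
\]
provided $\epsilon_n\log p\to 0$, which is implied by $\epsilon_n\to 0$ up to constants.

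Finally I would apply a union bound over the $p^2$ index pairs and the $O\bigl((n/\log p)^{4/(4+d)}\bigr)$ grid points to obtain the stated inequality, with the factor $p^2$ appearing explicitly in front and the $(n/\log p)^{4/(4+d)}$ factor absorbed from the grid, matching the form in the lemma.
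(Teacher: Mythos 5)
Your decomposition into a second-moment part and a mean-product part mirrors the paper's, as does the use of Lemma~\ref{kernel} to control the kernel denominator and the grid-plus-Lipschitz argument for uniformity in ${\bf u}$. The genuinely different ingredient is your handling of the unboundedness of $X_{\ell i}X_{\ell j}$ by truncation, and this is where there is a real gap.

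Your truncation level $T_n=C_0\sqrt{\log p}$ is forced on you: for $\{\max_{\ell,i}|X_{\ell i}|\le T_n\}$ to have probability $1-O(p^{-M})$ after a union bound over $O(n_1p)$ Gaussian coordinates you need $T_n\gtrsim\sqrt{\log p}$. But then the bounded-Bernstein exponent is
\[
2\exp\Bigl\{-\frac{c\,n_1|H_x|\epsilon_n^2}{1+T_n^2\epsilon_n}\Bigr\},
\]
and the step where you drop the $T_n^2\epsilon_n$ term rests on the assertion that $\epsilon_n\log p\to 0$ is ``implied by $\epsilon_n\to 0$ up to constants.'' That implication is simply false: $\epsilon_n\to 0$ and $\log p\to\infty$ give no control on the product. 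In the regime where the lemma is actually invoked, $\epsilon_n\asymp(\log p/n)^{2/(4+d)}$, one has $\epsilon_n\log p\asymp(\log p)^{(6+d)/(4+d)}n^{-2/(4+d)}$, which stays bounded only when $\log p=O(n^{2/(6+d)})$; the paper, however, claims the result for $\log p$ as large as $n^c$ for every $c<4/(4+d)$, a strictly wider range. So your bound would tacitly shrink the allowable dimensionality, and the stated inequality would fail for the larger $p$ covered by Theorem~\ref{theorem1}.

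The paper avoids this loss by not truncating at all. Conditionally on ${\bf U}_\ell$, each $X_{\ell i}X_{\ell j}$ is a product of Gaussians and hence sub-exponential, so the moment generating function of the centered summand $W_{kij}({\bf u})$ is controlled directly: $E\exp\{tW_{kij}\}\le 1+C t^2 E W_{kij}^2$ for all $|t|\lesssim n_1|H_x|$, which yields the clean exponent $\exp\{-c\,n_1|H_x|\epsilon_n^2\}$ with no $\log p$ leakage. (Your restriction to the high-probability bounded event is perfectly fine for the Lipschitz term between grid points; the paper handles that step with a separate Bernstein bound on $n_1^{-1}\sum_k X_{ki}^2$, which is essentially equivalent.) To repair your argument, replace the truncate-then-apply-bounded-Bernstein step with the sub-exponential Bernstein inequality applied to the untruncated summands.
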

 \begin{proof}
 	We first show that there exist positive constants $c_1, c_2$ such that
 	\begin{align}\label{lemma3.1}
 		&P\bigg(\max_{1\leq i,j\leq p}\sup_{{\bf u}\in \Omega_d}\Big|\frac{1}{n_1}\sum_{k=1}^{n_1}{\bf K}_{H_x}({\bf U}_k-{\bf u})X_{ki}X_{kj}-E(X_{1i}X_{1j}|{\bf U}_1={\bf u})f({\bf u})\Big|\geq \epsilon_n\bigg)      \\
 		&\leq c_1p^2\left(\frac{n}{\log p}\right)^{\frac{4}{4+d}}\exp\left\{-c_2n \left(\frac{\log p}{n}\right)^{\frac{d}{4+d}}\epsilon_n^2\right\}. \nonumber
 	\end{align}
 	Denote $W_{kij}({\bf u})=[{\bf K}_{H_x}({\bf U}_k-{\bf u})X_{ki}X_{kj}-E{\bf K}_{H_x}({\bf U}_k-{\bf u})X_{ki}X_{kj}]/n_1$ and $M=\max_{1\leq l\leq p}EX_{1l}^4$. Notice that
 	\begin{eqnarray*}
 		EW_{kij}({\bf u})^2\leq \frac{4{ K}_1^{2d}}{n_1^2|H_x^2|}EX_{ki}^2X_{kj}^2\leq  \frac{2K_1^{2d}}{n_1^2|H_x^2|}E(X_{ki}^4+X_{kj}^4)\leq \frac{4K_1^{2d}M}{n_1^2|H_x^2|}.
 	\end{eqnarray*}
 	For any ${\bf u}\in \Omega_d$, using Markov's inequality we have, for any $0<t<\frac{n_1|H_x|}{2K_1^dM^{1/2}}$,
 	\begin{align}\label{lemma3.2}
 		&P\left(\Big|\frac{1}{n_1}\sum_{k=1}^{n_1}{\bf K}_{H_x}({\bf U}_k-{\bf u})X_{ki}X_{kj}-E{\bf K}_{H_x}({\bf U}_k-{\bf u})X_{ki}X_{kj}\Big|>\epsilon_n \right)   \\
 		\leq&  2 \exp\{-t\epsilon_n\}\Pi_{k=1}^{n_1}E  \exp\{tW_{kij}({\bf u})\}       \nonumber\\
 		\leq& 2 \exp\{-t\epsilon_n\}\Pi_{k=1}^{n_1}\{1+t^2EW_{kij}({\bf u})^2 \}  \nonumber\\
 		\leq& 2\exp\left\{-t\epsilon_n+\sum_{k=1}^{n_1}t^2E W_{kij}({\bf u})^2\right\} \nonumber\\
 		\leq& 2\exp\left\{-t\epsilon_n+\frac{Ct^2}{n_1|H_x|}\right\},\nonumber
 	\end{align}
 	for some large enough constant $C$. Here in the last step we have used the fact that $Var(W_{kij}({\bf u}))=\emph{O}(n_1^{-2}|H_x|^{-1})$.
 	By setting $t=(2C)^{-1}n_1|H_x|\epsilon_n$, we have:
 	\begin{align}\label{lemma3.3}
 		&P\left(\Big|\frac{1}{n_1}\sum_{k=1}^{n_1}{\bf K}_{H_x}({\bf U}_k-{\bf u})X_{ki}X_{kj}-E{\bf K}_{H_x}({\bf U}_k-{\bf u})X_{ki}X_{kj}\Big|>\epsilon_n \right)   \\
 		\leq& 2\exp\left\{-\frac{\epsilon_n^2n_1|H_x|}{4C}\right\}.\nonumber
 	\end{align}
 	
 	Again, without loss of generality, assume that $\Omega_d=[a_1,b_1]\times\cdots[a_d,b_d]$ and let $q_1,\ldots,q_d$,  $\omega_{i_1,\ldots,i_d}$ and ${\bf u}_{i_1,\ldots,i_d}$ for $1\leq i_j\leq q_j, 1\leq j\leq d$ be defined as in the proof of Lemma \ref{kernel}.
 	
 	Note that
 	\begin{align}\label{lemma3.4}
 		&\sup_{{\bf u}\in \Omega_d}\Big|\frac{1}{n_1}\sum_{k=1}^{n_1}{\bf K}_{H_x}({\bf U}_j-{\bf u})X_{ki}X_{kj}-E{\bf K}_{H_x}({\bf U}_j-{\bf u})X_{ki}X_{kj}\Big|  \\
 		\leq&
 		\max_{1\leq i_j \leq q_j, j=1,\ldots,d}\Big|\frac{1}{n_1}\sum_{k=1}^{n_1}{\bf K}_{H_x}({\bf U}_j-{\bf u}_{i_1,\ldots,i_d})X_{ki}X_{kj} \nonumber \\
 		&-E{\bf K}_{H_x}({\bf U}_j-{\bf u}_{i_1,\ldots,i_d})X_{ki}X_{kj}\Big| \nonumber\\
 		&+\max_{1\leq i_j \leq q_j, j=1,\ldots,d}\sup_{{\bf u}\in\omega_{i_1,\ldots,i_d}}\Big|\frac{1}{n_1}\sum_{k=1}^{n_1}{\bf K}_{H_x}({\bf U}_j-{\bf u})X_{ki}X_{kj}\nonumber \\
 		&-\frac{1}{n_1}\sum_{k=1}^{n_1}{\bf K}_{H_x}({\bf U}_j- {\bf u}_{i_1,\ldots,i_d})X_{ki}X_{kj}  \nonumber\\
 		&-[E{\bf K}_{H_x}({\bf U}_i-{\bf u})X_{ki}X_{kj}-E{\bf K}_{H_x}({\bf U}_i-{\bf u}_{i_1,\ldots,i_d})X_{ki}X_{kj}] \Big|.\nonumber
 	\end{align}
 	Using the mean value theorem we have, there exists a random scalars $0\leq R_{i_1,\ldots,i_d}\leq 2$ depending on ${\bf U}_1,\ldots, {\bf U}_{n_1}$ such that
 	\begin{align*}
 		&\sup_{{\bf u}\in\omega_{i_1,\ldots,i_d}}\Big|\frac{1}{n_1}\sum_{k=1}^{n_1}{\bf K}_{H_x}({\bf U}_j-{\bf u})X_{ki}X_{kj}-\frac{1}{n_1}\sum_{k=1}^{n_1}{\bf K}_{H_x}({\bf U}_j-{\bf u}_{i_1,\ldots,i_d})X_{ki}X_{kj} \Big|  \nonumber\\
 		&\leq \frac{2{\rm tr}(H^3_x)}{n_1} \sup_{u\in\omega_{i_1,\ldots,i_d}}\Big|\sum_{k=1}^{n_1}{\bf K}'_{H_x}({\bf U}_j-{\bf u}_{i_1,\ldots,i_d}+R_{i_1,\ldots,i_d}{\rm diag}(H_x^4))X_{ki}X_{kj}  \Big| \nonumber\\
 		&\leq\frac{ K_2{\rm tr}(H^3_x)}{n_1}\sum_{k=1}^{n_1}(X_{ki}^2+X_{kj}^2).
 	\end{align*}

 	Note that assumption (A5) implies that there exists a constant $M_1<\infty$ such that for any $1\leq i\leq p$,
 	$$\sup_{{\bf u}\in \Omega_d}|\mu_{X}^i({\bf u})|\leq M_1,~~~\sup_{{\bf u}\in \Omega_d}|\mu_{Y}^i({\bf u})|\leq M_1,~~~\sup_{{\bf u}\in \Omega_d}|\sigma_{ii}({\bf u})|\leq M_1.$$
 	By verifying the conditions of Bernstein's inequality (see for example \cite{LinBai}), we have that
 	\begin{eqnarray*}
 		P\left(\frac{1}{n_1}\sum_{k=1}^{n_1}X_{ki}^2 >M_1^2+M_1+1\right)=b_1\exp\{-b_2n_1\},
 	\end{eqnarray*}
 	for some positive constants $b_1, b_2$. Therefore, with probability greater than $1-b_1\exp\{-b_2n_1\}$,
 	\begin{align}\label{lemma3.5}
 		&\sup_{{\bf u}\in\omega_{i_1,\ldots,i_d}}\Big|\frac{1}{n_1}\sum_{k=1}^{n_1}{\bf K}_{H_x}({\bf U}_j-{\bf u})X_{ki}X_{kj}-\frac{1}{n_1}\sum_{k=1}^{n_1}{\bf K}_{H_x}({\bf U}_j-{\bf u}_{i_1,\ldots,i_d})X_{ki}X_{kj} \Big|    \\&\leq \frac{ K_2{\rm tr}(H^3_x)(M_1^2+M_1+1)}{n_1}.\nonumber
 	\end{align}
 	Clearly, $b_1\exp\{-b_2n_1\}$ is negligible comparing to the right hand side of (\ref{lemma3.1}).
 	On the other hand, by conditional on ${\bf U}_k$ first, we obtain:
 	\begin{align}\label{lemma3.6}
 		\max_{1\leq i_j \leq q_j, j=1,\ldots,d}\sup_{{\bf u}\in\omega_{i_1,\ldots,i_d}}&[E{\bf K}_{H_x}({\bf U}_k-{\bf u})X_{ki}X_{kj} \\&-E{\bf K}_{H_x}({\bf U}_i-{\bf u}_{i_1,\ldots,i_d})X_{ki}X_{kj}]   =\emph{O}({\rm tr}(H_x^4)).\nonumber
 	\end{align}
 	
 	Combining (\ref{lemma3.3}), (\ref{lemma3.4}), (\ref{lemma3.5}) and (\ref{lemma3.6}) we have:
 	\begin{align*}
 		&P\left(\sup_{{\bf u}\in \Omega_d}\Big|\frac{1}{n_1}\sum_{k=1}^{n_1}{\bf K}_{H_x}({\bf U}_k-{\bf u})X_{ki}X_{kj}-E{\bf K}_{H_x}({\bf U}_k-{\bf u})X_{ki}X_{kj}\Big|>\epsilon_n/2 \right) \\
 		&\leq c_3\left(\frac{n}{\log p}\right)^{\frac{4}{4+d}}\exp\left\{-c_4n \left(\frac{\log p}{n}\right)^{\frac{d}{4+d}}\epsilon_n^2\right\}, \nonumber
 	\end{align*}
 	for some constants $c_3>0, c_4>0$. Here in the last step we have used Assumption (A4). This together with the following well known result:
 	\begin{eqnarray*}
 		\sup_{{\bf u}\in \Omega_d}|E{\bf K}_{H_x}({\bf U}_k-{\bf u})X_{ki}X_{kj}-E(X_{1i}X_{1j}|{\bf U}_1={\bf u})f({\bf u})|=\emph{O}({\rm tr}(H^2))
 	\end{eqnarray*}
 	proves (\ref{lemma3.1}). Let $\hat{\sigma}_{ij}^X({\bf u})$ be the $(i,j)$th element of $\hat{\Sigma}_X({\bf u})$ defined as in (\ref{sigmahatx}). Using Lemma \ref{kernel} and  (\ref{lemma3.1}), it can be shown that there exist positive constants $c_5, c_6$ such that
 	\begin{eqnarray}\label{lemma3.7}
 		&&P\bigg(\max_{1\leq i,j\leq p}\sup_{{\bf u}\in \Omega_d}|\hat{\sigma}_{ij}^X({\bf u})-\sigma_{ij}^X({\bf u})|\geq \epsilon_n\bigg) \\
 		&&\leq c_1p^2\left(\frac{n}{\log p}\right)^{\frac{4}{4+d}}\exp\left\{-c_2n \left(\frac{\log p}{n}\right)^{\frac{d}{4+d}}\epsilon_n^2\right\}. \nonumber
 	\end{eqnarray}
 	Similarly let $\hat{\sigma}_{ij}^Y({\bf u})$ be the $(i,j)$th element of $\hat{\Sigma}_Y({\bf u})$ defined as in (\ref{sigmahaty}). we have that there exist positive constants $c_7, c_8$ such that
 	\begin{eqnarray}\label{lemma3.8}
 		&&P\bigg(\max_{1\leq i,j\leq p}\sup_{{\bf u}\in \Omega_d}|\hat{\sigma}_{ij}^Y({\bf u})-\sigma_{ij}^Y({\bf u})|\geq \epsilon_n\bigg) \\
 		&\leq& c_7p^2\left(\frac{n}{\log p}\right)^{\frac{4}{4+d}}\exp\left\{-c_8n \left(\frac{\log p}{n}\right)^{\frac{d}{4+d}}\epsilon_n^2\right\}. \nonumber
 	\end{eqnarray}
 	Lemma \ref{covariance} is then proved by (\ref{lemma3.7}), (\ref{lemma3.8}) and the definition of $\hat{\sigma}_{ij}({\bf u})$.
 \end{proof}

 Note that when $n<p$ and $\frac{\log p}{n}\rightarrow 0$, Lemmas \ref{mean} and \ref{covariance} are true for $\epsilon_n=M\left(\frac{\log p}{n}\right)^{\frac{2}{4+d}}$, where $M>0$ is a large enough constant. The next lemma shows that the true $\beta({\bf u})=\Sigma^{-1}[\mu_X({\bf u})-\mu_Y({\bf u})]$ belongs to the feasible set of (\ref{hatbeta}) with overwhelming probability uniformly in ${\bf u}\in\Omega_d$.

 \begin{lem}\label{feasible}
 	Under assumptions (A1)-(A6),
 	for any constant $M>0$, by choosing
 	$${\lambda}_n=C\left(\frac{\log p}{n}\right)^{\frac{2}{4+d}}\sup_{{\bf u}\in \Omega_d}\Delta({\bf u}),$$
 	for some constant $C$
 	large enough, we have with probability greater than
 	$1-{O}(p^{-M})$,
 	\begin{eqnarray*}
 		\sup_{{\bf u}\in\Omega_d}|\hat{\Sigma}({\bf u})\beta({\bf u})-[\hat{\mu}_X({\bf u})-\hat{\mu}_Y({\bf u})]|_{\infty}\leq {\lambda}_{n}.
 	\end{eqnarray*}
 \end{lem}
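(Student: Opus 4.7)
The plan is to use the identity $\Sigma({\bf u})\beta({\bf u})=\mu_X({\bf u})-\mu_Y({\bf u})$ (which holds by definition of $\beta({\bf u})$) to rewrite the residual, and then control each piece by the uniform Bernstein-type inequalities of Lemmas \ref{mean} and \ref{covariance}, sharpened where needed. Specifically, first I would decompose
\[
\hat{\Sigma}({\bf u})\beta({\bf u})-[\hat{\mu}_X({\bf u})-\hat{\mu}_Y({\bf u})]
=[\hat{\Sigma}({\bf u})-\Sigma({\bf u})]\beta({\bf u})-[\hat{\mu}_X({\bf u})-\mu_X({\bf u})]+[\hat{\mu}_Y({\bf u})-\mu_Y({\bf u})],
\]
and then further split $\hat{\Sigma}-\Sigma=(n_1/n)(\hat{\Sigma}_X-\Sigma)+(n_2/n)(\hat{\Sigma}_Y-\Sigma)$. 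Taking the $\ell_\infty$ norm, it suffices to bound each of the three summands by a constant multiple of $\lambda_n$ with probability at least $1-O(p^{-M})$.

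The two mean-difference terms are immediate from Lemma \ref{mean}: choosing $\epsilon_n=c_0(\log p/n)^{2/(4+d)}$ with $c_0$ large depending on $M$, we obtain $\sup_{\bf u}|\hat{\mu}_X({\bf u})-\mu_X({\bf u})|_\infty\vee\sup_{\bf u}|\hat{\mu}_Y({\bf u})-\mu_Y({\bf u})|_\infty\le c_0(\log p/n)^{2/(4+d)}$ with the required probability. Because $\inf_{\bf u}\Delta_p({\bf u})\ge B>0$ by assumption (A5), this is at most $(c_0/B)(\log p/n)^{2/(4+d)}\sup_{\bf u}\Delta_p({\bf u})$, which is $\le \lambda_n/2$ once the constant $C$ in $\lambda_n$ is taken $\ge 4c_0/B$.

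The heart of the argument is controlling $\sup_{\bf u}|(\hat{\Sigma}_X({\bf u})-\Sigma({\bf u}))\beta({\bf u})|_\infty$ (the $Y$ part is handled symmetrically). I would observe that, component-wise,
\[
\big[(\hat{\Sigma}_X-\Sigma)\beta\big]_i
=\widehat{\mathrm{cov}}_X\!\bigl(X_i,\,X^T\beta({\bf u})\bigr)-\mathrm{cov}\!\bigl(X_i,\,X^T\beta({\bf u})\mid{\bf u}\bigr),
\]
where $\widehat{\mathrm{cov}}_X$ denotes the kernel-weighted empirical covariance at ${\bf u}$. Conditionally on ${\bf U}$, the scalar $X^T\beta({\bf u})$ is Gaussian with variance $\beta({\bf u})^T\Sigma({\bf u})\beta({\bf u})=\Delta_p({\bf u})^2\le(\sup_{\bf u}\Delta_p({\bf u}))^2$. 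Expanding the empirical covariance (treating $X^T\beta$ as a single scalar variable) and applying the Bernstein-type argument used to prove Lemma \ref{covariance}, with the variance proxy $\Delta_p^2$ in place of a generic constant, together with the discretization-plus-continuity argument of Lemma \ref{kernel}, yields
\[
\sup_{{\bf u}\in\Omega_d}\,\bigl|(\hat{\Sigma}_X({\bf u})-\Sigma({\bf u}))\beta({\bf u})\bigr|_\infty
\le c_1\,\sup_{{\bf u}\in\Omega_d}\Delta_p({\bf u})\,\Big(\frac{\log p}{n}\Big)^{\!2/(4+d)}
\]
with probability $\ge 1-O(p^{-M})$; combining with the analogous $Y$-bound and the mean bounds and picking $C\ge 2c_1+4c_0/B$ finishes the proof.

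The main obstacle is the covariance-times-$\beta$ step above. The naive route $|(\hat{\Sigma}-\Sigma)\beta|_\infty\le|\hat{\Sigma}-\Sigma|_{\max}|\beta|_1$ with Lemma \ref{covariance} produces an extra factor $|\beta({\bf u})|_1$, which can far exceed $\sup_{\bf u}\Delta_p({\bf u})$ and therefore does not match the target rate $\lambda_n$. One must instead apply Bernstein directly to the kernel-weighted sums $\sum_kK_{H_x}({\bf U}_k-{\bf u})X_{ki}(X_k^T\beta({\bf u}))$ and related quantities, exploiting that the appropriate conditional moments of $X_i\cdot X^T\beta$ are controlled by $\sigma_{ii}({\bf u})\Delta_p({\bf u})^2$ through the Gaussian calculus under (A5), rather than by $|\beta({\bf u})|_1$. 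Carefully tracking the variance proxy through the exponential moment bound, and then redoing the $\Omega_d$ covering with grid spacing $\asymp H_x^4$ as in the proof of Lemma \ref{kernel}, is the technical payload.
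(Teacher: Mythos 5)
Your proposal is correct and takes essentially the same route as the paper's own proof: the same three-term decomposition, Lemma \ref{mean} for the mean differences, and—crucially—the same key observation that one must apply the Bernstein/covering argument directly to the kernel-weighted sums $\sum_k K_{H_x}({\bf U}_k-{\bf u})X_k^T\beta({\bf u})$ and $\sum_k K_{H_x}({\bf U}_k-{\bf u})X_{ki}(X_k^T\beta({\bf u}))$ rather than using the lossy bound $|\hat{\Sigma}-\Sigma|_{\max}|\beta|_1$. The paper phrases the resulting variance proxy as $\sup_{\bf u}|\beta({\bf u})|_2$ and converts to $\sup_{\bf u}\Delta_p({\bf u})$ at the end via $\lambda^{-1}|\beta({\bf u})|_2^2\le\Delta_p^2({\bf u})\le\lambda|\beta({\bf u})|_2^2$ from (A5), whereas you write the proxy as $\Delta_p({\bf u})$ directly; these are equivalent up to the constant $\lambda$, so the difference is purely cosmetic.
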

 \begin{proof}
 	
 	By
 	Lemma \ref{mean} we have, for any constant $M>0$, there exists a positive
 	constant $c_1>0$ large enough, such that
 	\begin{equation}\label{lemma2_1}
 		P\left(\sup_{{\bf u}\in \Omega_d}|\hat{\mu}_X({\bf u})-\hat{\mu}_Y({\bf u})-\mu_X({\bf u})+\mu_Y({\bf u})|_{\infty}\geq c_1\left(\frac{\log p}{n}\right)^{\frac{2}{4+d}}\right)\leq p^{-M}.
 	\end{equation}
 	On the other hand, using similar arguments as in the proofs of Lemma \ref{kernel}, we have there exists $c_2>0$ such that,
 	\begin{eqnarray}\label{neweq1}
 		P&\bigg(&\sup_{{\bf u}\in \Omega_d} \left|\frac{[\sum_{j=1}^{n_1}{\bf K}_{H_x}({\bf U}_j-{\bf u})X_{j}^T\beta({\bf u})]}{\sum_{j=1}^{n_1}{\bf K}_{H_x}({\bf U}_j-{\bf u})} -\mu_X({\bf u})^T\beta({\bf u})\right|  \\
 		&&\geq c_2\sup_{{\bf u}\in\Omega_d}|\beta({\bf u})|_2\left(\frac{\log p}{n}\right)^{\frac{2}{4+d}} \bigg)  \leq p^{-M}.    \nonumber
 	\end{eqnarray}

 	
 	Similar to \eqref{neweq1}, from the proofs of Lemma \ref{covariance}, it can be shown that, there exists constant $c_3>0$ such that, for $i=1,\ldots, p$,
 	\begin{eqnarray}\label{neweq2}
 		&&P\bigg(\left| \sup_{{\bf u}\in\Omega_d}\frac{\sum_{j=1}^{n_1}{\bf K}_{H_x}({\bf U}_j-{\bf u})X_{ji}X_j^T\beta({\bf u})}{\sum_{j=1}^{n_1}{\bf K}_{H_x}({\bf U}_j-{\bf u})}- EX_{ji}X_j^T\beta({\bf u})|{\bf U}_j={\bf u}
 		\right|     \\
 		&& \geq
 		c_3\sup_{{\bf u}\in\Omega_d}|\beta({\bf u})|_2\left(\frac{\log p}{n}\right)^{\frac{2}{4+d}} \bigg)   \leq
 		p^{-M-1}.  \nonumber
 	\end{eqnarray}
 	Let $\hat{\Sigma}_X({\bf u})_{i,\cdot}$ and $\Sigma({\bf u})_{i,\cdot}$ be the $i$th row of $\hat{\Sigma}_X({\bf u})$ and $\Sigma({\bf u})$ respectively. By combining \eqref{neweq1} and \eqref{neweq2} we have there exist constants $c_4>0, c_5>0$ such that
 	\begin{eqnarray}\label{lemma2_2}
 		&& P\left(\sup_{{\bf u}\in\Omega_d}|(\hat{\Sigma}_X({\bf u})-\Sigma({\bf u}))\beta({\bf u})|_{\infty}
 		\geq
 		c_2\sup_{{\bf u}\in\Omega_d}|\beta({\bf u})|_2\left(\frac{\log p}{n}\right)^{\frac{2}{4+d}} \right)    \\
 		&\leq& \sum_{i=1}^{p} P\left(\sup_{{\bf u}\in\Omega_d}|(\hat{\Sigma}({\bf u})_{i,\cdot}-\Sigma({\bf u})_{i,\cdot})\beta({\bf u})|
 		\geq
 		c_2\sup_{{\bf u}\in\Omega_d}|\beta({\bf u})|_2\left(\frac{\log p}{n}\right)^{\frac{2}{4+d}} \right)\nonumber \\
 		&\leq& c_5p^{-M}. \nonumber
 	\end{eqnarray}
 	Similarly, we have
 	\begin{eqnarray*}
 		P\left(\sup_{{\bf u}\in\Omega_d}|(\hat{\Sigma}_Y({\bf u})-\Sigma({\bf u}))\beta({\bf u})|_{\infty}
 		\geq
 		c_2\sup_{{\bf u}\in\Omega_d}|\beta({\bf u})|_2\left(\frac{\log p}{n}\right)^{\frac{2}{4+d}} \right)
 		\leq c_5p^{-M}
 	\end{eqnarray*}
 	The lemma is proved by combining the above two inequalities with (\ref{lemma2_1}), (\ref{lemma2_2}), the following inequality:
 	\begin{align*}
 		&\sup_{{\bf u}\in \Omega_d}|\hat{\Sigma}({\bf u})\beta({\bf u})-[\hat{\mu}_X({\bf u})-\hat{\mu}_Y({\bf u})]|_{\infty} \\
 		&\leq
 		\sup_{{\bf u}\in\Omega_d}|(\hat{\Sigma}({\bf u})-\Sigma({\bf u}))\beta({\bf u})|_{\infty}
 		+\sup_{{\bf u}\in \Omega_d}|\hat{\mu}_X({\bf u})-\hat{\mu}_Y({\bf u})-\mu_X({\bf u})+\mu_Y({\bf u})|_{\infty},
 	\end{align*}
 	and the fact that $\lambda|\beta({\bf u})|_2^2\geq   \Delta_{p}^2({\bf u})>\lambda^{-1}|\beta({\bf u})|_2^2$, where $\lambda$ is defined as in Assumption (A5).
 \end{proof}

 \noindent
 {\bf Proof of Theorem \ref{theorem1}}.
 We first of all derive upper bounds for
 
 (i) $ \sup_{{\bf u}\in \Omega_d}|(\hat{\mu}_X({\bf u})-\mu_X({\bf u}))^T\hat{\beta}({\bf u}) |/\Delta_p({\bf u})$ and \\
 $~~~~~~~~\sup_{{\bf u}\in \Omega_d}|(\hat{\mu}_Y({\bf u})-\mu_Y({\bf u}))^T\hat{\beta}({\bf u}) |/\Delta_p({\bf u})$,
 
 (ii) $\sup_{{\bf u}\in \Omega_d}|(\hat{\mu}_X({\bf u})-\hat{\mu}_Y({\bf u}))^T\hat{\beta}({\bf u})-
 (\mu_X({\bf u})-\mu_Y({\bf u}))^T\beta({\bf u})|/\Delta_p({\bf u})$,
 
 (iii) $\sup_{{\bf u}\in \Omega_d} |\hat{\beta}({\bf u})^T\Sigma({\bf u})\hat{\beta}({\bf u}) - (\mu_X({\bf u})-\mu_Y({\bf u}))^T\beta({\bf u}) |/\Delta^2_p({\bf u})$.

 (i) By Lemma \ref{mean}, there exists a constant $C_1>0$ large enough such that for any $M>0$, with probability larger than $1-\emph{O}(p^{-M})$,
 \begin{equation}\label{i1}
 	\sup_{{\bf u}\in \Omega_d}|\hat{\mu}_X({\bf u})-\mu_X({\bf u})|_{\infty}\leq C_1\left(\frac{\log p}{n}\right)^{\frac{2}{4+d}},
 \end{equation}
 and
 \begin{equation}\label{i2}
 	\sup_{{\bf u}\in \Omega_d}|\hat{\mu}_Y({\bf u})-\mu_Y({\bf u})|_{\infty}\leq C_1\left(\frac{\log p}{n}\right)^{\frac{2}{4+d}}.
 \end{equation}
 Together with the definition of $\hat{\beta}({\bf u})$ and Lemma \ref{feasible}  we have, with probability larger than $1-\emph{O}(p^{-M})$,
 \begin{equation}\label{i00}
 	\sup_{{\bf u}\in \Omega_d}|(\hat{\mu}_X({\bf u})-\mu_X({\bf u}))^T\hat{\beta}({\bf u}) |/\Delta_p({\bf u})\leq C_1\left(\frac{\log p}{n}\right)^{\frac{2}{4+d}} \sup_{{\bf u}\in \Omega_d}\frac{|\beta({\bf u})|_1}{\Delta_p({\bf u})},
 \end{equation}
 \begin{equation}\label{i01}
 	\sup_{{\bf u}\in \Omega_d}|(\hat{\mu}_Y({\bf u})-\mu_Y({\bf u}))^T\hat{\beta}({\bf u}) |/\Delta_p({\bf u})\leq C_1\left(\frac{\log p}{n}\right)^{\frac{2}{4+d}} \sup_{{\bf u}\in \Omega_d}\frac{|\beta({\bf u})|_1}{\Delta_p({\bf u})}.
 \end{equation}

 (ii) Notice that
 \begin{align}\label{ii1}
 	&|(\hat{\mu}_X({\bf u})-\hat{\mu}_Y({\bf u}))^T\hat{\beta}({\bf u})- (\mu_X({\bf u})-\mu_Y({\bf u}))^T\beta({\bf u})|  \\  \leq&
 	|(\hat{\mu}_X({\bf u})-\hat{\mu}_Y({\bf u}))^T\hat{\beta}({\bf u})- \beta({\bf u})^T\hat{\Sigma}({\bf u})\hat{\beta}({\bf u})| \nonumber\\ &+
 	|\beta({\bf u})^T\hat{\Sigma}({\bf u})\hat{\beta}({\bf u})-\beta({\bf u})^T(\hat{\mu}_X({\bf u})-\hat{\mu}_Y({\bf u}))|  \nonumber\\
 	&+|(\hat{\mu}_X({\bf u})-\hat{\mu}_Y({\bf u})-\mu_X({\bf u})+\mu_Y({\bf u}))^T\beta({\bf u})|.\nonumber
 \end{align}
 By the definition of $\hat{\beta}({\bf u})$ and Lemma \ref{feasible} we have with probability larger than $1-\emph{O}(p^{-M})$,
 \begin{align}\label{ii2}
 	& \sup_{{\bf u}\in \Omega_d}|(\hat{\mu}_X({\bf u})-\hat{\mu}_Y({\bf u}))^T\hat{\beta}({\bf u})- \beta({\bf u})^T\hat{\Sigma}({\bf u})\hat{\beta}({\bf u})|/\Delta_p({\bf u})
 	\\ \leq & \sup_{{\bf u}\in \Omega_d}|(\hat{\mu}_X({\bf u})-\hat{\mu}_Y({\bf u}))-\hat{\Sigma}({\bf u})\beta({\bf u})|_{\infty} |\hat{\beta}({\bf u})|_1/\Delta_p({\bf u})\nonumber\\
 	\leq&  \lambda_n\sup_{{\bf u}\in \Omega_d}\frac{|\beta({\bf u})|_1}{\Delta_p({\bf u})}.\nonumber
 \end{align}
 Similarly, by the definition of $\hat{\beta}({\bf u})$ we have
 \begin{align}\label{ii3}
 	&\sup_{{\bf u}\in \Omega_d} |\beta({\bf u})^T\hat{\Sigma}({\bf u})\hat{\beta}({\bf u})-\beta({\bf u})^T(\hat{\mu}_X({\bf u})-\hat{\mu}_Y({\bf u}))|/\Delta_p({\bf u}) \\
 	\leq& {\lambda_n} \sup_{{\bf u}\in \Omega_d}\frac{|\beta({\bf u})|_1}{\Delta_p({\bf u})}. \nonumber
 \end{align}

 From (\ref{ii1}), (\ref{ii2}), (\ref{ii3}) and the proofs of (\ref{i1}), (\ref{i2}), we have with probability larger than $1-\emph{O}(p^{-M})$,
 \begin{align}\label{ii}
 	&|(\hat{\mu}_X({\bf u})-\hat{\mu}_Y({\bf u}))^T\hat{\beta}({\bf u})- (\mu_X({\bf u})-\mu_Y({\bf u}))^T\beta({\bf u})|/\Delta_p({\bf u})    \\
 	\leq&
 	2\lambda_n\sup_{{\bf u}\in \Omega_d}\frac{|\beta({\bf u})|_1}{\Delta_p({\bf u})}+2 C_1\left(\frac{\log p}{n}\right)^{\frac{2}{4+d}}  \sup_{{\bf u}\in \Omega_d}\frac{|\beta({\bf u})|_1}{\Delta_p({\bf u})}.\nonumber
 \end{align}
 
 (iii) Notice that
 \begin{align}\label{iii1}
 	&|\hat{\beta}({\bf u})^T\Sigma({\bf u})\hat{\beta}({\bf u}) - (\mu_X({\bf u})-\mu_Y({\bf u}))^T\beta({\bf u}) |   \\
 	\leq& |\hat{\beta}^T(\hat{\mu}_X({\bf u})-\hat{\mu}_Y({\bf u})) -\beta({\bf u})^T(\mu_X({\bf u})-\mu_Y({\bf u})) |     \nonumber \\
 	&+ |\hat{\beta}({\bf u})^T\Sigma({\bf u})\hat{\beta}({\bf u}) -\hat{\beta}^T(\hat{\mu}_X({\bf u})-\hat{\mu}_Y({\bf u})) | .\nonumber
 \end{align}
 From the definition of $\hat{\beta}({\bf u})$ and the bounds for (ii),
 %
 we have, there exists a constant $C_2$ large enough such that with probability larger than $1-\emph{O}(p^{-M})$,
 \begin{align}\label{iii}
 	&\sup_{{\bf u}\in \Omega_d} |\hat{\beta}({\bf u})^T\Sigma({\bf u})\hat{\beta}({\bf u}) - (\mu_X({\bf u})-\mu_Y({\bf u}))^T\beta({\bf u}) |/\Delta^2_p({\bf u}) \\
 	\leq&
 	2\lambda_n\sup_{{\bf u}\in \Omega_d}\frac{|\beta({\bf u})|_1}{\Delta_p^2({\bf u})}+2 C_1\left(\frac{\log p}{n}\right)^{\frac{2}{4+d}}  \sup_{{\bf u}\in \Omega_d}\frac{|\beta({\bf u})|_1}{\Delta_p^2({\bf u})}+ \lambda_n\sup_{{\bf u}\in \Omega_d}\frac{|\beta({\bf u})|_1}{\Delta_p^2({\bf u})}.
 	\nonumber 
 \end{align}
 Combining (\ref{i00}), (\ref{i01}), \eqref{ii}, \eqref{iii} and Assumption (A5) we have, there exists large enough constants $C_3, C_4>0$, such that with probability larger than $1-\emph{O}(p^{-M})$, uniformly for any ${\bf u}\in \Omega_d$,
 \begin{align}\label{last1}
 	& \frac{(\hat{\mu}_X({\bf u})-\hat{\mu}_Y({\bf u}))^T\hat{\beta}({\bf u})}{2\sqrt{\hat{\beta}({\bf u})^T\Sigma({\bf u})\hat{\beta}({\bf u})}}+\frac{(\hat{\mu}_Y({\bf u})-\mu_Y({\bf u}))^T\hat{\beta}({\bf u})}{\sqrt{\hat{\beta}({\bf u})^T\Sigma({\bf u})\hat{\beta}({\bf u})}}  \\
 	=& \frac{(\hat{\mu}_X({\bf u})-\hat{\mu}_Y({\bf u}))^T\hat{\beta}({\bf u})/\Delta_p({\bf u})+
 		2(\hat{\mu}_Y({\bf u})-\mu_Y({\bf u}))^T\hat{\beta}({\bf u})/\Delta_p({\bf u})}{2\sqrt{\hat{\beta}({\bf u})^T\Sigma({\bf u})\hat{\beta}({\bf u})/\Delta^2_p({\bf u})}}   \nonumber\\
 	\leq&  \frac{\Delta_p({\bf u})+2\lambda_n\sup_{{\bf u}\in \Omega_d}\frac{|\beta({\bf u})|_1}{\Delta_p({\bf u})}+C_4\left(\frac{\log p}{n}\right)^{\frac{2}{4+d}} \sup_{{\bf u}\in \Omega_d}\frac{|\beta({\bf u})|_1}{\Delta_p({\bf u})}
 	}{2\sqrt{1-3\lambda_n\sup_{{\bf u}\in \Omega_d}\frac{|\beta({\bf u})|_1}{\Delta^2_p({\bf u})}-C_3\left(\frac{\log p}{n}\right)^{\frac{2}{4+d}} \sup_{{\bf u}\in \Omega_d}\frac{|\beta({\bf u})|_1}{\Delta^2_p({\bf u})}}}  \nonumber \\
 	=&\frac{\Delta_p({\bf u})}{2}\left[1+O\left( \left(\frac{\log p}{n}\right)^{\frac{2}{4+d}} \sup_{{\bf u}\in \Omega_d}\frac{|\beta({\bf u})|_1}{\Delta_p({\bf u})}\right)\right].  \nonumber
 \end{align}
 Theorem \ref{theorem1} can be proved by (\ref{assump0}) and Lemma \ref{risk}.
 \\

 \begin{lem}\label{lemma0}
 	Let $\Phi$ and $\phi$ be the cumulative distribution function and density function of a standard Gaussian random variable. For any $x\geq 1$ we have
 	\begin{eqnarray*}
 		\frac{\phi(x)}{2x}\leq \Phi(-x)\leq \frac{\phi(x)}{x}.
 	\end{eqnarray*}
 \end{lem}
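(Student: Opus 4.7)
My plan is to prove the upper and lower bounds separately, both by reducing them to standard Mills-ratio style manipulations.

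For the upper bound $\Phi(-x) \le \phi(x)/x$, the approach is a one-line majorization. Writing $\Phi(-x) = \int_x^\infty \phi(t)\,dt$ and using $1 \le t/x$ on the integration range (since $t \ge x \ge 1 > 0$), the integrand is bounded by $(t/x)\phi(t)$. The antiderivative of $t\phi(t)$ is $-\phi(t)$, so the resulting integral evaluates exactly to $\phi(x)/x$. This step is routine and I expect no obstacle.

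For the lower bound $\Phi(-x) \ge \phi(x)/(2x)$, my plan is to first establish the sharper Gordon-type inequality $\Phi(-x) \ge \frac{x}{x^2+1}\phi(x)$ (which holds for all $x > 0$), and then observe that $\frac{x}{x^2+1} \ge \frac{1}{2x}$ is equivalent to $2x^2 \ge x^2 + 1$, i.e.\ $x \ge 1$, giving the claim. To prove the Gordon bound, I would set $h(x) = \Phi(-x) - \frac{x}{x^2+1}\phi(x)$, note that $h(x) \to 0$ as $x \to \infty$, and show $h'(x) < 0$ for $x > 0$. The derivative computation uses $\phi'(x) = -x\phi(x)$ and $\frac{d}{dx}\bigl(\frac{x}{x^2+1}\bigr) = \frac{1-x^2}{(x^2+1)^2}$, and after collecting terms over the common denominator $(x^2+1)^2$ the numerator simplifies to a negative constant (specifically $-2$) times $\phi(x)$. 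Since $h$ is strictly decreasing with limit $0$ at $+\infty$, $h(x) > 0$ for all finite $x > 0$, which gives the Gordon bound.

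The only real obstacle is avoiding the naive integration-by-parts $\int_x^\infty \phi(t)\,dt = \phi(x)/x - \int_x^\infty \phi(t)/t^2\,dt$, which yields the wrong direction for a lower bound and, when iterated, only gives $\Phi(-x) \ge \phi(x)/x \cdot (1 - 1/x^2)$, which is weaker than $\phi(x)/(2x)$ near $x = 1$. The Gordon inequality sidesteps this precisely because its ratio $\frac{x}{x^2+1}$ is sharp enough at $x = 1$: there $\frac{x}{x^2+1} = 1/2 = \frac{1}{2x}$, so the threshold $x \ge 1$ in the statement is exactly where the two lower bounds cross. After proving Gordon, the rest is bookkeeping.
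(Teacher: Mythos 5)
Your proof is correct, and the lower‑bound argument is genuinely different from the paper's. The paper uses a single integration by parts, $\Phi(-x)=\phi(x)/x-\int_x^\infty u^{-2}\phi(u)\,du$, then bounds the correction term by $\Phi(-x)$ itself (using $u\ge x\ge 1\Rightarrow u^{-2}\le 1$), which yields the self‑referential inequality $\Phi(-x)\ge \phi(x)/x-\Phi(-x)$ and hence the lower bound in one step; the upper bound falls out by simply dropping the nonnegative correction term. You instead establish the sharper Gordon bound $\Phi(-x)\ge \frac{x}{x^2+1}\phi(x)$ via a monotonicity argument on $h(x)=\Phi(-x)-\frac{x}{x^2+1}\phi(x)$ (the numerator computation does indeed collapse to $-2\phi(x)$, so $h'<0$ and $h\to 0$ forces $h>0$), and then observe that $\frac{x}{x^2+1}\ge\frac{1}{2x}$ precisely when $x\ge 1$. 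Your route is a bit heavier — it requires the derivative computation and proving an auxiliary inequality — but it buys a stronger intermediate result (Gordon is tight as $x\to\infty$, whereas the paper's $\phi(x)/(2x)$ is off by a factor of $2$) and makes the role of the threshold $x\ge 1$ completely transparent. The paper's approach is more economical in that one identity delivers both bounds, but it is tailored to exactly the constant $1/2$ and gives no slack. Your observation about the naive iterated integration by parts failing near $x=1$ is also correct: $\phi(x)/x\cdot(1-1/x^2)$ vanishes at $x=1$, so it cannot give $\phi(1)/2$, and Gordon is exactly what is needed to cross that gap. (Minor remark unrelated to your proof: the paper's displayed IBP identity has sign typos — the $\phi(x)/x$ term should have a plus sign and the inequality should read $\ge$ — but the intended argument is the one described above.)
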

 \begin{proof}
 	Using integration by parts we have for $x\geq 1$:
 	\begin{eqnarray*}
 		\Phi(-x)=-\frac{\phi(x)}{x}-\int_{x}^{+\infty}\frac{1}{u^2}\phi(u)du\leq -\frac{\phi(x)}{x}- \Phi(-x).
 	\end{eqnarray*}
 	Lemma \ref{lemma0} is then proved immediately from the above inequality.
 \end{proof}
 {\bf Remark:} Lemma \ref{lemma0} implies that $\Phi(-x)=O\bigg(\frac{\phi(x)}{x}\bigg)$ for any $x>B/2$.

 \noindent
 {\bf Proof of Theorem \ref{theorem2}}.
 
 By Lemma \ref{lemma0}, similar to \eqref{last1},  we have, uniformly in ${\bf u}\in \Omega_d$,
 \begin{align}
 	&\Phi\left(-\frac{(\hat{\mu}_X({\bf u})-\hat{\mu}_Y({\bf u}))^T\hat{\beta}({\bf u})}{2\sqrt{\hat{\beta}({\bf u})^T\Sigma({\bf u})\hat{\beta}({\bf u})}}-\frac{(\hat{\mu}_Y({\bf u})-\mu_Y({\bf u}))^T\hat{\beta}({\bf u})}{\sqrt{\hat{\beta}({\bf u})^T\Sigma({\bf u})\hat{\beta}({\bf u})}}\right) \\
 	=&\Phi\left(-\frac{\Delta_p({\bf u})}{2}\right)+
 	O\left(\left(\frac{\log p}{n}\right)^{\frac{2}{4+d}}\sup_{{\bf u}\in \Omega_d} \frac{|\beta({\bf u})|_1}{\Delta_p({\bf u})}\right) \times \nonumber\\
 	&\phi\left(-\frac{\Delta_p({\bf u})}{2}+O\left(\left(\frac{\log p}{n}\right)^{\frac{2}{4+d}}\sup_{{\bf u}\in \Omega_d}\frac{|\beta({\bf u})|_1}{\Delta_p({\bf u})}\right) \right) \nonumber\\
 	=& \Phi\left(-\frac{\Delta_p({\bf u})}{2}\right)\Bigg[1+O\bigg( \left(\frac{\log p}{n}\right)^{\frac{2}{4+d}}\sup_{{\bf u}\in \Omega_d}\Delta_p({\bf u})\sup_{{\bf u}\in \Omega_d} \frac{|\beta({\bf u})|_1}{\Delta_p({\bf u})}\times  \nonumber \\
 	& \exp\bigg\{\left(\frac{\log p}{n}\right)^{\frac{2}{4+d}}\sup_{{\bf u}\in \Omega_d}\Delta_p({\bf u}) \sup_{{\bf u}\in \Omega_d}\frac{|\beta({\bf u})|_1}{\Delta_p({\bf u})}
 	\bigg\}\bigg) \Bigg] \nonumber \\
 	=& \Phi\left(-\frac{\Delta_p({\bf u})}{2}\right)\Bigg[1+O\bigg(\left(\frac{\log p}{n}\right)^{\frac{2}{4+d}}\sup_{{\bf u}\in \Omega_d}\Delta_p({\bf u})\sup_{{\bf u}\in \Omega_d}  \frac{|\beta({\bf u})|_1}{\Delta_p({\bf u})}\bigg)\Bigg]. \nonumber
 \end{align}
 Theorem \ref{theorem2} can then be proved by (\ref{assump2}) and Lemma \ref{risk}.

 \noindent
 {\bf Proof of Theorem \ref{thminimax}}.
 For simplicity we consider the case where the dynamic factor ${\bf u}$ is of dimension $d=1$ and use the notation $u$ instead in this proof. The following proofs can be generalized for any given integer $d>0$ simply by some regular arguments. We prove \eqref{minimax1} first where the distance is defined as $d(\theta_1,\theta_2):=\sup_{{\bf u}\in \Omega_d}|T(\theta_1)-T(\theta_2)|$ with $T(\theta)=\Delta_p(u)$.
 
 \noindent
 {\sc Step 1}. Construction of the hypotheses.
 
 \noindent
 We assume that ${ u}$ is generated from $U[0,1]$.
 Consider $\Sigma({ u})=I_p$ where $I_p$ is the $p\times p$ identity matrix. We then have $\beta({u})=\mu_X({u})-\mu_Y({ u})$. We set the null hypothesis as $(\mu_X(u),\mu_Y(u),\Sigma(u))=\theta_0=(0_p,0_p, I_p)$, where $0_p$ is the $p$-dimensional vector
 of zeros. 
 Clearly we would have $\theta_0\in {\cal G}(\kappa)$. For the alternatives, let $\lfloor \cdot\rfloor$ be the largest integer function and define:
 \begin{eqnarray*}\label{alt0}
 	m_h=\lfloor h^{-1}\rfloor,~~    h= \left( \frac{ \log p }{n }\right)^{1/5}~~  u_k=\frac{k-0.5}{m_h}, \nonumber\\
 	\epsilon_k(u)=h^2K\left(\frac{u-u_k}{h}\right),~~ k=1,\ldots, m_h, ~~u\in[0,1],
 \end{eqnarray*}
 where $K: R\rightarrow [0,+\infty)$ is a kernel function such that $K\in H(2,1/2)\cap C^{\infty}(R)$ and $K(u)>0 \iff u\in(-1/2,1/2)$, and $\alpha$ is a constant such that $0<\alpha<(2e)^{-1}$.
 We set $\mu_Y=0_p$ and so $\beta(u)=\mu_X(u)$. Without loss of generality, assume that $\kappa\in Z$, the set of all integers. The parameter space ${\cal D}_1$ is then set to be
 \begin{eqnarray*}
 	{\cal D}_1&=&\{(\mu_X(u),\mu_Y(u),\Sigma(u)): \mu_Y=0_p,\Sigma=I_p, \mu_X=\epsilon_i{\bf a}
 	=\epsilon_i(a_1,\ldots,a_p)^T,\\
 	&& |{\bf a} |_0=  \kappa, a_j\in \{0,1\}, i=1,\ldots,m_h \}.
 \end{eqnarray*}
 The cardinality of ${\cal D}_1$ is then $m=m_h\binom p\kappa$. Clearly we have, for any $\theta_i \in {\cal D}_1, i=1\ldots, m$,
 \begin{itemize}
 	\item[(i)] $\theta_i\in {\cal G}(\kappa)$.
 	\item[(ii)] $d(\theta_0,\theta_i)=\sup_{u\in [0,1]}|\theta_i|_2=h^2\sqrt{\kappa}=\sqrt{\kappa} \left(  \frac{\alpha \log p }{n }\right)^{2/5}$.
 \end{itemize}

 \noindent
 {\sc Step 2}. Bounding the total variance
 
 \noindent
 Given $u$, we denote the density function of the multivariate standard Gaussian distribution $N(0_p, I_p)$ as $f_0$ and for any $\theta_i\in {\cal D}_1$. Recall that $\theta=(\mu_X(\bf u),\mu_Y(\bf u),\Sigma(\bf u))$. For a given $\theta=\theta_i$, we shall denote the corresponding $\mu_Y$ as $\nu_i:=\mu_Y|_{\theta=\theta_i}$ and let $f_i$ be the density of the Gaussian distribution $N(\nu_i, I_p)$. We set the weight to be $\omega_1=\cdots=\omega_m=m^{-1}$ and for any probability measures $Q,R$, we use $\chi^2(Q,R)$ to denote the $\chi^2$ divergence of $Q$ and $R$. By (2.27) in \cite{Tsybakov}, 
 we have,
 \nocite*

 \begin{eqnarray}\label{mm1}
 	&&\|\bar{P}-P_0\|_1^2  \\
 	&\leq& \chi^2(\bar{P},P_0) \nonumber \\
 	&=& \int_{[0,1]^{n}} \int_{R^{p\times n}}\frac{(\sum_{j=1}^mm^{-1}\Pi_{i=1}^{n_2}f_j(x_i) \Pi_{l=n_2+1}^{n}f_0(x_l))^2}{\Pi_{i=1}^{n}f_0(x_i)}   \nonumber\\
 	&&dx_1\cdots  dx_{n}du_1\ldots du_{n} -1. \nonumber\\
 	&=& \int_{[0,1]^{n_2}} \int_{R^{p\times n_2}}\frac{(\sum_{j=1}^mm^{-1}\Pi_{i=1}^{n_2}f_j(x_i)  )^2}{\Pi_{i=1}^{n_2}f_0(x_i)}   \nonumber\\
 	&&dx_1\cdots  dx_{n_2}du_1\ldots du_{n_2} -1. \nonumber
 \end{eqnarray}
 Note that for any $1\leq j\neq k\leq m_h$ and  $t=1,\ldots, \kappa$, 
 \begin{eqnarray*}
 	\int_{[0,1]^{n_2}} \int_{R^{t\times n_2}} \left[\frac{\Pi_{i=1}^{n_2}e^{-\frac{(x_i-\epsilon_j(u_i))^2}{2}}e^{-\frac{(x_i-\epsilon_k(u_i))^2}{2}}}{{(2\pi)^{\frac{n_2}{2}}} \Pi_{i=1}^{n_2}e^{-\frac{x_i^2}{2}}}\right]^tdx_1\cdots  dx_{n_2}du_1\ldots du_{n_2}
 	=1.
 \end{eqnarray*}
 Using some combination arguments, we thus have
 \begin{eqnarray}\label{mm2}
 	&&\int_{[0,1]^{n_2}} \int_{R^{p\times n_2}}\frac{(\sum_{j=1}^mm^{-1}\Pi_{i=1}^{n_2}f_j(x_i)  )^2}{\Pi_{i=1}^{n_2}f_0(x_i)}
 	dx_1\cdots  dx_{n_2}du_1\ldots du_{n_2} \\
 	&=&\left(1-\frac{m_h\binom p\kappa ^2}{m^2}\right)\cdot 1 \nonumber\\
 	&&+\frac{1}{m^2}\int_{[0,1]^{n_2}}\sum_{i=1}^{m_h}\sum_{j=0}^{\kappa}
 	\binom p\kappa\binom \kappa j\binom {p-\kappa} {\kappa-j}   e^{j\sum_{l=1}^{n_2}\epsilon_i^2(u_l)}  du_1\cdots du_{n_2}  \nonumber\\
 	&=&\left(1-\frac{1}{m_h}\right) +\frac{1}{m^2_h}\int_{[0,1]^{n_2}} E \sum_{i=1}^{m_h}e^{J\sum_{l=1}^{n_2}\epsilon_i^2(u_l)} du_1\cdots du_{n_2},  \nonumber
 \end{eqnarray}
 where $J$ is a random variable with the Hypergeometric distribution with parameters $(p,\kappa,\kappa)$.  On the other hand, by Lemma 3 in
 \cite{Cai2017} and the fact that $e^x\leq 1+ex$ for any $0\leq x\leq 1$, we have, when $\kappa h^4\leq 1$,
 \begin{eqnarray}\label{mm3}
 	&&\frac{1}{m^2_h}\int_{[0,1]^{n_2}} E \sum_{i=1}^{m_h}e^{J\sum_{l=1}^{n_2}\epsilon_i^2(u_l)} du_1\cdots du_{n_2} \\
 	&=&\frac{1}{m^2_h}E\sum_{i=1}^{m_h} \left(  1-\frac{1}{m_h}+\frac{1}{m_h}e^{Jh^4}\right)^{n_2} \nonumber\\
 	&\leq&  \frac{1}{m_h}E\left(1+\frac{e}{m_h}Jh^4\right)^{n_2}   \nonumber\\
 	&\leq& \frac{1}{m_h}Ee^{\frac{n_2eJh^4}{m_h}}   \nonumber\\
 	&\leq& \frac{1}{m_h} e^{\frac{\kappa^2}{p-\kappa}} \left(1-\frac{\kappa}{p}+\frac{\kappa}{p}e^{\frac{n_2eh^4}{m_h}}\right)^\kappa, \nonumber \\
 	&=& O(m_h^{-1}e^{ {\kappa^2} p^{2e\alpha-1}}). \nonumber
 \end{eqnarray}
 Here in the last step we have used the fact that $\frac{n_2eh^4}{m_h}\leq \frac{n_2h^5}{1-h}\leq 2e\alpha\log p$, $\kappa=O(p^{\gamma})$ with $\gamma<\frac{1}{2}$ and $2e\alpha<1$.
 By setting $\alpha=\frac{1-2\gamma}{2e}$, from \eqref{mm1}, \eqref{mm2} and \eqref{mm3} we immediately have,
 \begin{eqnarray*}
 	\|\bar{P}-P_0\|_1^2=O(m_h^{-1}).
 \end{eqnarray*}
 Consequently, by Lemma \ref{lecam}, we conclude that \eqref{minimax1} holds.

 \noindent
 {\sc Proof of \eqref{minimax2}.}
 
 \noindent
 Set the distance to be $d(\theta_1,\theta_2):=\sup_{{\bf u}\in \Omega_d}|T(\theta_1)-T(\theta_2)|$ with $T(\theta)=R({\bf u})$. By the assumption that $\kappa=o\left(\left(\frac{n}{\log p}\right)^{\frac{4}{4+d}}\right)$ we have that (ii) in the proof of step 1 becomes $d(\theta_0,\theta_i)=\frac{1}{2}-\Phi\left(-\frac{h^2\sqrt{\kappa}}{2}\right)\leq \sqrt{\kappa} \left(  \frac{\alpha \log p }{n }\right)^{2/5}$. The rest of the proofs are the same as those for \eqref{minimax1}.

 \vspace{2mm}
 \noindent
 {\bf Proof of Theorem \ref{misethm}.}
 
 We first of all show that 
 \begin{eqnarray}\label{bound1}
 	E\|\hat{\mu}_{X,-i}({\bf U}_i)-{\mu}_X({\bf U}_i)\|_\infty^2= O\left(\Big(\frac{\log p}{n}\Big)^{\frac{2}{2+d}}\right),
 \end{eqnarray}
 and 
 \begin{eqnarray}\label{bound2}
 	E\|\hat{\Sigma}_{X,-i}({\bf U}_i)-{\Sigma}({\bf U}_i)\|_\infty^2= O\left(\Big(\frac{\log p}{n}\Big)^{\frac{2}{2+d}}\right).
 \end{eqnarray}
 By Lemma \ref{mean},  we have
 \begin{eqnarray*}
 	&&	E\|\hat{\mu}_{X,-i}({\bf U}_i)-{\mu}_X({\bf U}_i)\|_\infty^2\\
 	&\leq& \epsilon_n^2
 	+\int_{\epsilon_n}^\infty C_3p\left(\frac{n}{\log p}\right)^{\frac{4}{4+d}}\exp\left\{-C_4n \left(\frac{\log p}{n}\right)^{\frac{d}{4+d}}x^2\right\}dx\\
 	&\leq&  \epsilon_n^2+ C_3p\left(\frac{n}{\log p}\right)^{\frac{4}{4+d}}
 	\left[ -2C_4n \left(\frac{\log p}{n}\right)^{\frac{d}{4+d}}xx\right]^{-1}
 	\exp\left\{-C_4n \left(\frac{\log p}{n}\right)^{\frac{d}{4+d}}x^2\right\}\Bigg|_{\epsilon_n}^\infty.
 \end{eqnarray*}
 \eqref{bound1} is then proved by choosing $\epsilon_n=C\left(\frac{\log p}{n}\right)^{\frac{2}{2+d}}$ for a large enough constant $C$.
 \eqref{bound2} can be similarly proved using  Lemma \ref{covariance}. Now we proceed to prove the theorem.  
 Note that
 \begin{eqnarray}\label{mise0}
 	&&	E  \big\| \big(X_{i}-\hat{\mu}_{X,-i}({\bf U}_i)\big) \big(X_{i}-\hat{\mu}_{X,-i}({\bf U}_i)\big)^T-\hat{\Sigma}_{X,-i}({\bf U}_i)\big\|_F^2 \\
 	&=&E  \big\| \Sigma({\bf U}_i)-\hat{\Sigma}_{X,-i}({\bf U}_i)\big\|_F^2 +E\big\| \big(X_{i}-\hat{\mu}_{X,-i}({\bf U}_i)\big) \big(X_{i}-\hat{\mu}_{X,-i}({\bf U}_i)\big)^T- \Sigma({\bf U}_i)\big\|_F^2 \nonumber\\
 	&&+2E tr[ \Sigma({\bf U}_i)-\hat{\Sigma}_{X,-i}({\bf U}_i)] 
 	[ \big(X_{i}-\hat{\mu}_{X,-i}({\bf U}_i)\big) \big(X_{i}-\hat{\mu}_{X,-i}({\bf U}_i)\big)^T- \Sigma({\bf U}_i)]. \nonumber
 \end{eqnarray}
 On the other hand, we have
 \begin{eqnarray}\label{mise1}
 	p^{-2}	E  \big\| \Sigma({\bf U}_i)-\hat{\Sigma}_{X,-i}({\bf U}_i)\big\|_F^2 =  r(H_x),
 \end{eqnarray}
 \begin{eqnarray}\label{mise2}
 	&&p^{-2}	E\big\| \big(X_{i}-\hat{\mu}_{X,-i}({\bf U}_i)\big) \big(X_{i}-\hat{\mu}_{X,-i}({\bf U}_i)\big)^T- \Sigma({\bf U}_i)\big\|_F^2 \\
 	&=&p^{-2}	E\big\| \big(X_{i}-{\mu}_X({\bf U}_i)\big) \big(X_{i}-{\mu}_X({\bf U}_i)\big)^T- \Sigma({\bf U}_i)\big\|_F^2\nonumber \\
 	&&+p^{-2}	E\big\|  \big(X_{i}-\hat{\mu}_{X,-i}({\bf U}_i)\big) \big(X_{i}-\hat{\mu}_{X,-i}({\bf U}_i)\big)^T-\big(X_{i}-{\mu}_X({\bf U}_i)\big) \big(X_{i}-{\mu}_X({\bf U}_i)\big)^T\big\|_F^2 \nonumber\\
 	&&+2p^{-2}	E tr[ \big(X_{i}-{\mu}_X({\bf U}_i)\big) \big(X_{i}-{\mu}_X({\bf U}_i)\big)^T- \Sigma({\bf U}_i)] \nonumber\\
 	&&\cdot[\big(X_{i}-\hat{\mu}_{X,-i}({\bf U}_i)\big) \big(X_{i}-\hat{\mu}_{X,-i}({\bf U}_i)\big)^T-\big(X_{i}-{\mu}_X({\bf U}_i)\big) \big(X_{i}-{\mu}_X({\bf U}_i)\big)^T]\nonumber\\
 	&=&p^{-2}	E\big\| \big(X_{i}-{\mu}_X({\bf U}_i)\big) \big(X_{i}-{\mu}_X({\bf U}_i)\big)^T- \Sigma({\bf U}_i)\big\|_F^2\nonumber\\
 	&&
 	+O(E\|\hat{\mu}_{X,-i}({\bf U}_i)-{\mu}_X({\bf U}_i)\|_\infty^2) \nonumber\\
 	&=&p^{-2}	E\big\| \big(X_{i}-{\mu}_X({\bf U}_i)\big) \big(X_{i}-{\mu}_X({\bf U}_i)\big)^T- \Sigma({\bf U}_i)\big\|_F^2 +O\left(\Big(\frac{\log p}{n}\Big)^{\frac{2}{2+d}}\right),\nonumber
 \end{eqnarray}
 and, by the fact that $\Sigma({\bf U}_i)-\hat{\Sigma}_{X,-i}({\bf U}_i)$ and $
 \big(X_{i}-{\mu}_{X}({\bf U}_i)\big) \big(X_{i}-{\mu}_{X}({\bf U}_i)\big)^T- \Sigma({\bf U}_i) $ are conditionally independent given ${\bf U}_i$, we have
 \begin{eqnarray}\label{mise3}
 	&&p^{-2}	E tr[ \Sigma({\bf U}_i)-\hat{\Sigma}_{X,-i}({\bf U}_i)] 
 	[ \big(X_{i}-\hat{\mu}_{X,-i}({\bf U}_i)\big) \big(X_{i}-\hat{\mu}_{X,-i}({\bf U}_i)\big)^T- \Sigma({\bf U}_i)] ~~~~~~~~~~~\\
 	&=&p^{-2}	E tr[ \Sigma({\bf U}_i)-\hat{\Sigma}_{X,-i}({\bf U}_i)] 
 	[ \big(X_{i}-\hat{\mu}_{X,-i}({\bf U}_i)\big) \big(X_{i}-\hat{\mu}_{X,-i}({\bf U}_i)\big)^T\nonumber\\
 	&&-    \big(X_{i}-{\mu}_{X}({\bf U}_i)\big) \big(X_{i}-{\mu}_{X}({\bf U}_i)\big)^T ] \nonumber \\
 	&=&O(	E\|\hat{\Sigma}_{X,-i}({\bf U}_i)-{\Sigma}({\bf U}_i)\|_\infty^2)+O(E\|\hat{\mu}_{X,-i}({\bf U}_i)-{\mu}_X({\bf U}_i)\|_\infty^2)\nonumber\\
 	&=&O\left(\Big(\frac{\log p}{n}\Big)^{\frac{2}{2+d}}\right).\nonumber
 \end{eqnarray}
 The theorem is then proved by combining \eqref{mise0}, \eqref{mise1}, \eqref{mise2} and \eqref{mise3}.
 
 \section*{Acknowledgements}
 We are grateful to Prof. Holger Dette, an associate editor and two referees for their constructive comments which have led to an improved paper.

\end{document}